\newcommand{\blind}{1}
\newtheorem{theorem}{Theorem}
\newtheorem{corollary}{Corollary}
\newtheorem{lemma}{Lemma}
\newtheorem{proposition}{Proposition}
\newtheorem{assumption}{Assumption}
\theoremstyle{definition}
\newtheorem{remark}{Remark}
\newtheorem{example}{Example}
\newcommand{\norm}[1]{\left \|#1\right \|}
\newcommand{\abs}[1]{\left\vert #1 \right\vert}
\begin{document}
\onehalfspacing

\if1\blind
{
  \title{\bf Estimation and Inference for Moments of Ratios with Robustness against Large Trimming Bias\thanks{First arXiv date: September 4, 2017 (arXiv:1709.00981).  We benefited from useful comments by Peter C. B. Phillips
 (editor), Arthur Lewbel (co-editor), anonymous referees, numerous researchers,  seminar participants at Australian National University, University of Bristol, University of British Columbia, Chinese University of Hong Kong, Duke University, Emory University, Fudan University, Hong Kong University of Science and Technology, University of Melbourne, Monash University, Northwestern University, University of Sydney, University of California Davis, University of California San Diego, University of New South Wales, and University of Technology Sydney, and conference participants at 2018 Asian Meeting of the Econometric Society, 2018 Cemmap Advances in Econometrics, 2018 China Meeting of the Econometric Society, 2019 Asian Meeting of the Econometric Society, 2018 International Association for Applied Econometrics Annual Conference, and New York Camp Econometrics XIII. All remaining errors are ours.}}
  \author{Yuya Sasaki\hspace{.25cm}\\
    Department of Economics, Vanderbilt University\\
    and \\
    Takuya Ura \\
    Department of Economics, University of California, Davis}
		\date{}
  \maketitle
} \fi

\if0\blind
{
  \bigskip
  \bigskip
  \bigskip
  \begin{center}
    {\LARGE\bf Estimation and Inference for Moments of Ratios with Robustness against Large Denominator-Based-Trimming Bias}
\end{center}
  \medskip
} \fi

\bigskip
\maketitle

\begin{abstract}
Researchers often trim observations with small values of the denominator $A$ when they estimate moments of the form $\mathbb{E}[B/A]$. Large trimming is common in practice to reduce variance, but it incurs a large bias. This paper provides a novel method of correcting the large trimming bias. If a researcher is willing to assume that the joint distribution between $A$ and $B$ is smooth, then the trimming bias may be estimated well. Along with the proposed bias correction method, we also develop an inference method. Practical advantages of the proposed method are demonstrated through simulation studies, where the data generating process entails a heavy-tailed distribution of $B/A$. Applying the proposed method to the Compustat database, we analyze the history of external financial dependence of U.S. manufacturing firms for years 2000--2010.
\begin{description}
\item[Keywords] bias correction, ratio, large trimming.
\item[JEL Codes] C13, C14
\end{description}
\end{abstract}

\newpage

\section{Introduction}\label{sec:introduction}

Moments of ratios of the form $\mathbb{E}[B/A]$ are ubiquitous in empirical research. Summary tables in numerous papers report statistics of ratios. The average ratio is sometimes the parameter of interest on its own \citep[e.g.,][]{dunbar/lewbel/pendakur:2017}. In addition, there are research methods that use moments of ratios for identification, e.g., inverse probability weighting \citep{horvitz1952generalization} and special regressor methods \citep{Lewbel:1997,Lewbel:1998,Lewbel:2000}. When some observations have values of the denominator $A$ that are close to zero, they behave as outliers in terms of the ratio, $B/A$, and thus can exercise large influences on the na\"{i}ve sample mean. \citet{khan/tamer:2010} formalize and report irregular asymptotic behaviors of sample moments of $B/A$ in such cases.
\citet{escanciano:2016} provides a discussion of the literature on irregularly identified parameters in a general setup. In the literature of heavy-tailed distributions (e.g., \citealt{pena2008self,peng2017inference}), it is also well known that the  sample mean does not have a Gaussian limit distribution when the distribution is heavy-tailed, i.e., when the tail index is strictly less than 2. When the distribution is heavy-tailed, the null distribution for $t$-statistics can be bimodal at $\pm 1$ \citep{fiorio2010bimodal}.\footnote{In the probability literature, a related result is found in \cite{logan1973limit}.}

To alleviate this outlier problem, practitioners often trim observations with small $A$ or large $B/A$. However, a trimmed mean can induce a non-negligible bias in the limit distribution. In fact, {as \citet[][Theorem 3.1.c]{chaudhuri/hill:2016} point out, if $B/A$ has a heavy tail in the sense that its tail index is strictly less than 2, then a trimmed mean without bias correction would not entail the asymptotic normality about $\mathbb{E}[B/A]$ even based on self-normalized sums.}\footnote{\citet{Yang/Ding:2018} derive the asymptotic normality for a trimmed mean without bias correction, which is effective for a different parameter from $\mathbb{E}[B/A]$.} \cite{yang:2014} and \cite{chaudhuri/hill:2016}, therefore, exploit restrictions on the tail behavior of $B/A$ to correct the biases in their proposed estimators, and develop methods of inference based on their bias-corrected trimmed estimators that are asymptotically valid even under heavy-tailed distributions.

In this paper, we propose a novel method of estimation and inference for $\mathbb{E}[B/A]$. Our proposed method is based on a bias-corrected trimmed estimator of the form {$\widehat\theta(h_n) = \widetilde\theta(h_n) - \widehat\lambda(h_n)$, where $h_n$ is  a trimming threshold,  $\widetilde\theta(h_n) = n^{-1}\sum_{i=1}^n ({B_i}/{A_i}) \cdot \mathbbm{1}\left\{A_i \ge h_n\right\}$ is a trimmed sample mean estimator and $\widehat\lambda(h_n)$ is a bias estimator.} In a similar way to \cite{yang:2014} and \cite{chaudhuri/hill:2016}, trimming and bias correction together allow for valid inference about $\mathbb{E}[B/A]$ even under heavy-tailed distributions.
Unlike these papers, on the other hand, we do not need to rely on restrictions on the tail behavior of the distribution of   $B/A$. Specifically, while \cite{chaudhuri/hill:2016} assume regularly varying tails with tail index strictly greater than 1, we do not need such restrictions on the tail behavior. Instead, we propose to exploit smoothness of the conditional expectation function $a \mapsto \mathbb{E}[B|A=a]$ to derive more practical and theoretical benefits compared to the existing alternative methods.

The main advantage of our proposed method is that a greater smoothness of the conditional expectation function $a \mapsto \mathbb{E}[B|A=a]$ allows for a wider admissible range of the trimming threshold $h_n$. This wider range contributes to insensitivity of valid inference to the exact trimming level. In addition, a larger trimming threshold $h_n$ allowed by the wider range can yield faster convergence rates of the trimmed estimator $\widetilde\theta(h_n)$ and thus of our bias-corrected estimator $\widehat\theta(h_n)$ as well. In this sense, we propose to exploit the smoothness of $a \mapsto \mathbb{E}[B|A=a]$ to achieve more insensitive inference and faster convergence rate of the bias-corrected trimmed estimator. The latter is analogous to the well-established idea in the kernel methods where greater smoothness of the underlying function is exploited to achieve faster convergence rates via higher order kernels and higher order local polynomials.

\cite{yang:2014} and \cite{chaudhuri/hill:2016} consider the case of trimming tail observations in terms of the fraction $B/A$.\footnote{\label{footnote:trimming}\citet[][Appendix G]{chaudhuri/hill:2016b} also mention the case of denominator trimming. {Considering the case in which $A=A(X)$ and $B=B(X)$ depend on covariates, \citet{khan/tamer:2010} propose to trim observations based on $X$.}} In contrast, we consider the case of trimming close-to-zero observations in terms of the denominator $A$.\footnote{\citet[][pg. 2125]{graham2012identification} informally suggest a similar approach in the context of correlated random coefficient panel models.} The twin motives for our taking this approach to trimming based on $A$ as opposed to $B/A$ are: First, we aim to provide means to improve the estimation method used commonly by researchers who, often motivated by \cite{crump2009dealing} and others, trim close-to-zero observations in terms of the denominator $A$. Second, more importantly, while the bias from trimming based on $B/A$ can be approximated based on tail shapes as in \cite{yang:2014} and \cite{chaudhuri/hill:2016}, the bias from trimming based on the denominator $A$ can be approximated based on the conditional expectation function $a \mapsto \mathbb{E}[B|A=a]$. This is why, as emphasized above, we do not need to rely on restrictions on tail behaviors of the heavy-tailed distribution. Furthermore, the smoothness of this function determines the extent to which the trimming bias can be corrected. Greater smoothness allows for better bias correction, which in turn allows for larger trimming and hence smaller variances or faster rates of convergence. This advantage is made possible by trimming based on the denominator $A$ rather than on the fraction $B/A$.

In order to provide a complete empirical procedure, we develop an inference method for $\mathbb{E}[B/A]$ based on $\widehat\theta(h_n)$.\footnote{To this end, we take advantage of asymptotic distribution theories for trimmed sums and self-normalized sums. For the former, see the early literature, e.g.,  \cite{griffin1987central,csorgo1988asymptotic,griffin1989asymptotic} and references therein. For the latter, see \cite{andrews1998semiparametric,Romano1999,peng:2001,peng2004empirical,pena2008self,Antoine/Renault:2009,hill/renault:2010,Antoine/Renault:2012,chen2014sieve,chen2015sieve} in addition to the references in the main text that are more closely related to this paper in terms of the ratio structure.} There is an extensive body of literature on inference in a related framework \citep[e.g.,][]{khan/tamer:2010,yang:2014,khan/nekipelov:2015,chaudhuri/hill:2016,rothe:2017,ma/wang:2018,heiler2019valid,hong2020inference}. All these papers use distributional restrictions in different ways. \citet{khan/tamer:2010} assume that Lindeberg's condition holds with negligible bias, implying that the tail index is 2 or above. \citet{yang:2014} considers tail shapes and trimming choice leading to an $n^{-1}$-rate of the leading bias from trimming. \citet{khan/nekipelov:2015} focus on cases where the tail index is local to 2. \citet{chaudhuri/hill:2016}, \citet{ma/wang:2018}, and \citet{heiler2019valid} assume regularly varying tails with tail index greater than 1.
\citet{rothe:2017} takes advantage of a normal distribution assumption. \citet{hong2020inference} consider a finite support with drifting sequence. We do not make any of these distributional assumptions. As emphasized earlier, we instead rely on the smoothness of the conditional moment function $a \mapsto \mathbb{E}[B|A=a]$.

Of these related papers, the most closely related are the pioneers, \citet{yang:2014} and \citet{chaudhuri/hill:2016}, in proposing bias-corrected trimmed estimation for asymptotically valid inference under heavy-tailed distributions, as emphasized above. It seems that this paper is the first to propose an asymptotically valid inference for $\mathbb{E}[B/A]$ under heavy-tailed distributions without a restriction on tail behaviors.\footnote{The first arXiv date of the present paper is September 4, 2017 (arXiv:1709.00981).} More recently, \citet{ma/wang:2018} apply this novel approach of denominator-based-trimmed estimation and propose an important bridge between the non-standard limit distributions of the mean estimators with small or no trimming \citep[as in][]{heiler2019valid} and the Gaussian limit distribution of the bias-corrected denominator-based-trimmed mean estimator (as in our result). All these preceding papers impose an assumption of regularly varying tail behaviors, unlike this paper.

{\bf Notations:} $\mathbb{E}[X]$ and $Var(X)$ denote the expected value and the variance of random variable $X$, respectively. The sample mean is denoted by $\mathbb{E}_n[X]=n^{-1}\sum_{i=1}^n X_i$. The convergence in distribution is denoted by $\stackrel{d}{\rightarrow}$. The indicator function is denoted by $\mathbbm{1}\{\cdot\}$.

{\bf Outline of the paper:} Section \ref{sec:overview} presents an overview of our proposed method without theoretical details. Section \ref{sec:main_results} presents supporting theories. Section \ref{sec:discussion} discusses the assumptions in terms of concrete structures and concrete estimators. Section \ref{sec:properties} presents three important properties of the estimator. Section \ref{sec:implementation} presents a practical guideline. Section \ref{sec:extension} contains extended results to the case of estimated or generated $A$ and $B$. Sections \ref{sec:simulation} and \ref{sec:empirical} present simulation studies and an empirical illustration, respectively. Section \ref{sec:conclusion} concludes. Appendices \ref{sec:proofs_of_the_main_results}--\ref{sec:auxiliary_lemmas_for_the_extended_results} contain mathematical proofs.

\section{Overview of the Proposed Method}\label{sec:overview}

The current section presents an overview of our proposed method without detailed discussions of the theories supporting it. This overview is aimed to serve as a concise guideline. In Section \ref{sec:main_results}, we formally present a theoretical rationale for the proposed methodology.

Suppose that a researcher is interested in estimation and inference for the moment of ratios of the form $\mathbb{E}\left[{B}/{A}\right]$ where we observe $n$ independent copies of $(A,B)$ in a dataset. Throughout this paper, we define the parameter of interest more generally by
\begin{align}\label{eq:definition_parameter_of_interest}
\theta_0 = \lim_{h\downarrow 0}\mathbb{E}\left[\frac{B}{A}\cdot \mathbbm{1}\{\abs{A} \geq h\}\right].
\end{align}
Note that: (i) $\theta_0=\mathbb{E}\left[{B}/{A}\right]$ if $\abs{A}>0$ almost surely and $\mathbb{E}\left[|{B}/{A}|\right]<\infty$; and (ii) $\theta_0$ can be well-defined even if $A$ has a point mass at $0$.\footnote{For example, assume that $\frac{B}{A}\cdot \mathbbm{1}\{\abs{A} \geq h\}$ is integrable for every $h>0$ and that $m(a)=\mathbb{E}[B\mid A=a]$ satisfies $|m(a)|\leq C|a|$ for a finite constant $C$. In this case, even if $A$ has a point mass at $0$, we have 
$$
\abs{\theta_0}
= 
\abs{\lim_{h\downarrow 0}\mathbb{E}\left[\frac{B}{A}\cdot \mathbbm{1}\{\abs{A} \geq h\}\right]}
\leq
\lim_{h\downarrow 0}\mathbb{E}\left[\frac{\abs{m(A)}}{\abs{A}}\cdot \mathbbm{1}\{\abs{A} \geq h\}\right]
= 
C\lim_{h\downarrow 0}\mathbb{E}\left[\mathbbm{1}\{\abs{A} \geq h\}\right],
$$
and therefore $\theta_0$ is finite.} 
In this paper, we allow the tail index for $B/A$ to be less than 1, so that the mean of $B/A$ may not be well-defined but (\ref{eq:definition_parameter_of_interest}) is; in fact, the tail index is not well-defined when A has a point mass. The following two examples illustrate cases where $\mathbb{E}\left[B/A\right]$ is of interest in economic research.

\begin{example}[Inverse Probability Weighting]\label{ex:ipsw}
Let $Y = (1-D)Y_0 + DY_1$ denote an observed outcome, where $D$ is an observed binary indicator of treatment, $Y_0$ is an unobserved potential outcome under no treatment, and $Y_1$ is an unobserved potential outcome under treatment. With the knowledge of the propensity score $P$, a researcher can identify the mean potential outcome $\mathbb{E}[Y_1]$ under treatment by the moment $\mathbb{E}\left[B/A\right]$ where $B=DY$ and $A=P$. (The propensity score $A=P$ is often generated by probit or logit in practice. We also present an extension of our theory to such cases in Section \ref{sec:extension}.) $\triangle$ 
\end{example}

\begin{example}[Binary Choice Model]\label{ex:special_regressor}
Consider the binary choice model of \citet{Lewbel:1997} where $Y = \mathbbm{1}\{\alpha + V - U \geq 0\}$, $V$ and $U$ are independent, both $V$ and $U$ have support on $\mathbb{R}$, and $\mathbb{E}[U]=0$. \citet{Lewbel:1997} shows that $\alpha$ is identified by the moment $\mathbb{E}\left[B/A\right]$ where $B=Y-\mathbbm{1}\{V>0\}$ and $A=f_V(V)$. This technique has been extended to a wide range of econometric models -- see \citet{lewbel:2014}. Also see \citet{khan/tamer:2010} in its motivation related to the present paper.
$\triangle$ 
\end{example}

When there exist observations with small denominator $A$, the na\"ive sample mean estimator $\mathbb{E}_n\left[B/A\right]$ of $\theta_0$ may entail a large variance. To deal with this issue by following a common practice in empirical research, consider the denominator-based-trimmed mean estimator of the form
\begin{align}\label{eq:trimmed_estimator}
\widetilde\theta(h_n) = \mathbb{E}_n\left[\frac{B}{A} \cdot \mathbbm{1}\{A \geq h_n\}\right],
\end{align}
where $h_n>0$ denotes a trimming threshold. Here, we consider the case where $A$ is non-negative -- an extension to more general cases is omitted since relevant applications involve only non-negative $A$ -- see Examples \ref{ex:ipsw} and \ref{ex:special_regressor} above. We further normalize the support of $A$ to $[0,1]$ for simplicity.

Trimming reduces the variance of an estimator on the one hand, but it exacerbates the bias $\mathbb{E}\left[\widetilde\theta(h_n)\right] - \theta_0$ on the other hand. To correct the bias from the trimming in (\ref{eq:trimmed_estimator}), we use the information about the joint distribution between $A$ and $B$ as emphasized in the introduction. Specifically, assuming that the conditional expectation function $m(\cdot) = \mathbb{E}\left[B \ | \ A= \ \cdot \ \right]$ is $k$-times continuously differentiable  at $a=0$, we estimate the $k-1$ derivatives $\left(m^{(1)}(0),...,m^{(k-1)}(0)\right)'$ of $m$ by
$$
\widehat m^{(\kappa)}(0) = p_K^{(\kappa)}(0)' \mathbb{E}_n\left[ p_K(A)p_K(A)' \right]^{-1} \mathbb{E}_n\left[ p_K(A)B \right]
$$
for each $\kappa \in \{1,...,k-1\}$, where $p_K(a)$ denotes the $(K+1)$-dimensional vector of the shifted orthonormal Legendre polynomial basis of degree $K \geq k$:\footnote{While we present the complete basis here including the constant term, we can exclude the constant term in our proposed method due to Assumption \ref{a:m} (i) in Section \ref{sec:main_results}.}
\begin{align}\label{eq:legendre}
p_K(a) = &\left(\begin{array}{c} 
1  \\ 
\sqrt{3}(2a-1)\\ \sqrt{5}(6a^2-6a+1)\\ \sqrt{7}(20a^3-30a^2+12a-1)\\ \sqrt{9}(70a^4-140a^3+90a^2-20a+1)\\ \sqrt{11}(252a^5-630a^4+560a^3-210a^2+30a-1)\\ \vdots\end{array}\right).
\end{align}
With these derivative estimates $\left(\widehat m^{(1)}(0),...,\widehat m^{(k-1)}(0)\right)'$, we propose that
\begin{align}\label{eq:bias_estimator}
\widehat \lambda(h_n) = - \sum_{\kappa=1}^{k-1} \frac{\mathbb{E}_n\left[ A^{\kappa-1} \cdot \mathbbm{1}\{ 0<A<h_n \}\right]}{\kappa!} \cdot \widehat m^{(\kappa)}(0)
\end{align}
estimates the trimming bias $\mathbb{E}\left[\widetilde\theta(h_n)\right] - \theta_0$.
In other words, our proposed bias-corrected trimmed mean estimator is
\begin{align}\label{eq:estimator}
\widehat\theta(h_n)
\equiv
\widetilde\theta(h_n) - \widehat \lambda(h_n)
=
\mathbb{E}_n\left[\frac{B}{A} \cdot \mathbbm{1}\{A \geq h_n\}\right] + \sum_{\kappa=1}^{k-1} \frac{\mathbb{E}_n\left[ A^{\kappa-1} \cdot \mathbbm{1}\{ 0<A<h_n \}\right]}{\kappa!} \cdot \widehat m^{(\kappa)}(0).
\end{align}

The standard error of the bias-corrected trimmed mean estimator (\ref{eq:estimator}) can be estimated by
$$
n^{-1/2} \cdot 
\left(
\mathbb{E}_n\left[\left( \frac{B}{A} \cdot \mathbbm\{ A \geq h_n \} + \widehat{c}(h_n)' \widehat \psi \right)^2\right]
-\mathbb{E}_n\left[\frac{B}{A} \cdot \mathbbm\{ A \geq h_n \} + \widehat{c}(h_n)' \widehat \psi\right]^2
\right)
^{1/2},
$$
where $\widehat{c}(h_n)$ is the $(k-1)$-dimensional vector defined by
$$
\widehat{c}(h_n) = 
\left(\begin{array}{c}
\mathbb{E}_n\left[\mathbbm{1}\{0<A<h_n\}\right]/1!\\
\mathbb{E}_n\left[A \cdot \mathbbm{1}\{0<A<h_n\}\right]/2!\\
\vdots\\ 
\mathbb{E}_n\left[A^{k-2} \cdot \mathbbm{1}\{0<A<h_n\}\right]/(k-1)!\\
\end{array}\right),
$$
and $\widehat \psi = \left(\widehat\psi_1,...,\widehat\psi_{k-1}\right)'$ is the $(k-1)$-dimensional vector defined by
$$
\widehat\psi_\kappa = p_K^{(\kappa)}(0)' p_K(A) \left(B - p_K(A)' \widehat\beta\right)
\qquad\text{with }
\widehat\beta = \mathbb{E}_n\left[ p_K(A)p_K(A)' \right]^{-1} \mathbb{E}_n\left[ p_K(A)B \right]
$$
for each $\kappa \in \{1,...,k-1\}$.

\begin{remark}\label{remark:local}
We propose to use the sieve estimation for bias correction. A natural question is whether one can use local polynomial estimators instead. If the density of $A$ is large near 0, then we can certainly estimate the derivative of $m(\cdot)$ using a local polynomial regression. On the other hand, if the density of $A$ is small near 0, then a local polynomial estimation of the derivative will incur a large variance because not many local observations are available. Since we develop a method of rate-adaptive inference in this paper, we would like our framework to cover both of these two alternative cases. Due to the suboptimal performance of the local polynomial regression in the latter case, we use the global series approximation.
\end{remark}

\section{Main Results}\label{sec:main_results}

For a short-hand notation, we write the population counterpart of the trimmed mean estimator (\ref{eq:trimmed_estimator}) by $\theta(h_n) = \mathbb{E}\left[\widetilde\theta(h_n)\right]$. With this notation, the bias of the trimmed mean estimator $\widetilde\theta(h_n)$ can be simply written as $\theta(h_n)-\theta_0$. In order to characterize this trimming bias, we use the information about the joint distribution between the numerator $B$ and the denominator $A$ as emphasized in Section \ref{sec:introduction}. Specifically, we use the conditional expectation function $m(\cdot) = \mathbb{E}\left[B \ | \ A= \ \cdot \ \right]$ of the numerator $B$ given the denominator $A$. With this notation, the joint distribution of $(A,B)$ is supposed to meet the following assumption.

\begin{assumption}\label{a:m}
(i) $m(0) = 0$.
(ii) $m$ is $k$-times continuously differentiable with a bounded $k$-th derivative in a neighborhood of 0.
(iii) $A\in[0,1]$.
(iv) $\mathbb{E}\left[B^4\right] < \infty$.
(v) $\theta_0$ is well-defined and finite. 
\end{assumption}
 
We discuss this and other assumptions in detail in Section \ref{sec:discussion}, but it is worth mentioning here that the majority of Assumption \ref{a:m} are fairly mild requirements. More importantly, we emphasize that the restriction on the joint distribution of $(A,B)$ imposed by Assumption \ref{a:m} is compatible with cases where the distribution of $B/A$ has a heavy tail due to small $A$ and trimming matters to improve the convergence rate -- see Section \ref{sec:heavy_tail} for details. In particular, Assumption \ref{a:m} allows for infinite variance of $B/A$. We stress that $k$, which appears in part (ii), is \textit{not} a tuning parameter. In the related literature, distributional assumptions are often imposed, but different papers do so in different ways. For instance, \citet{rothe:2017} assumes the normal distribution assumption. \citet{chaudhuri/hill:2016} assume that the distribution of $B/A$ is regularly varying. On the other hand, we assume a known degree of smoothness for $\mathbb{E}[B|A]$.

It turns out that we can approximate the bias $\theta(h_n)-\theta_0$ of the trimmed mean estimator $\widetilde\theta(h_n)$ up to the order $k$ of smoothness given in Assumption \ref{a:m} (ii). This is the sense in which we mean by our statement that the bias can be approximated by exploiting the information about the joint distribution between the numerator $B$ and the denominator $A$ as emphasized in Section \ref{sec:introduction}. The following theorem provides a bias characterization by using the conditional expectation function, $m(\cdot)=\mathbb{E}\left[B \ | \ A= \ \cdot \ \right]$, of the numerator $B$ given the denominator $A$.

\begin{theorem}[Bias Characterization]\label{lemma:bias}
If Assumption \ref{a:m} (i)--(ii) is satisfied, then\small
\begin{align*}
\theta_0 - \theta(h_n) 
&= 
\sum_{\kappa=1}^{k-1} \frac{\mathbb{E}\left[A^{\kappa-1} \cdot \mathbbm{1}\{0<A<h_n\}\right]}{\kappa!} \cdot m^{(\kappa)}(0)
\\&\qquad 
+
\frac{\mathbb{E}\left[ A^{k-1} \cdot \mathbbm{1}\{0<A<h_n\} \cdot \int_0^1 (1-t)^{k-1} m^{(k)}(tA)dt \right]}{(k-1)!}.
\end{align*}
\end{theorem}
 
A proof of this theorem is given in Appendix \ref{sec:lemma:bias}. The first $k-1$ terms of this bias characterization in Theorem \ref{lemma:bias} motivate the bias estimator $\widehat \lambda(h_n)$ provided in (\ref{eq:bias_estimator}), and thus the bias-corrected trimmed mean estimator $\widehat\theta(h_n)$ provided in (\ref{eq:estimator}). 

We next obtain the asymptotic distribution of the bias-corrected trimmed mean estimator $\widehat\theta(h_n)$. To this end, the trimming threshold, $h_n$, of the trimmed mean estimator is chosen to satisfy the following assumption, detailed discussions of which are given in Section \ref{sec:a:h}.

\begin{assumption}\label{a:h}
(i) $nh_n^{2(k-1)} = O(1)$; and
(ii) $n^{-1}h_n^{-4} = o(1)$
as $n \rightarrow \infty$.
\end{assumption}
 
The main role of Assumption \ref{a:h} (i) is to ensure that the remaining bias is negligible. Assumption \ref{a:h} (ii) will be used to establish Lyapunov's condition.

For convenience of writing asymptotic linear representations for the bias-corrected trimmed mean estimator, we introduce the notation
\begin{align}\label{eq:z}
Z(h_n) = \frac{B}{A} \cdot \mathbbm{1}\{A \geq h_n\} + c(h_n)' \psi,
\end{align}
where $c(h_n)$ is the $(k-1)$-dimensional vector defined by
\begin{align}\label{eq:c}
c(h_n) = 
\left(\begin{array}{c}
\mathbb{E}\left[\mathbbm{1}\{0<A<h_n\}\right]/1!\\
\mathbb{E}\left[A \cdot \mathbbm{1}\{0<A<h_n\}\right]/2!\\
\vdots\\
\mathbb{E}\left[A^{k-2} \cdot \mathbbm{1}\{0<A<h_n\}\right]/(k-1)!\\
\end{array}\right),
\end{align}
and $\psi$ is the influence function of an estimator $\left(\widehat m^{(1)}(0),...,\widehat m^{(k-1)}(0)\right)$ of the $k-1$ derivatives $\left( m^{(1)}(0),..., m^{(k-1)}(0)\right)$ of the conditional expectation function $m$ at 0. In Sections \ref{sec:overview} and \ref{sec:discussion}, we present concrete instances of $\left(\widehat m^{(1)}(0),...,\widehat m^{(k-1)}(0)\right)$ and $\psi$ that satisfy theoretical requirements. In the current section on general theory, however, they are assumed to satisfy the following general conditions.

\begin{assumption}\label{a:psi}
(i)
$
\left(1,...,h_n^{k-2}\right)' \circ
\left(
\widehat m^{(1)}(0) - m^{(1)}(0), ...,
\widehat m^{(k-1)}(0) - m^{(k-1)}(0)
\right)'
=$ \\ $\left(1,...,h_n^{k-2}\right)' \circ \mathbb{E}_n\left[\psi\right] + o_p\left(n^{-1/2}\right)$;
(ii)
$\mathbb{E}\left[\psi\right]=0$;
and
(iii) $\mathbb{E}\left[\left(\left(1,...,h_n^{k-2}\right) \psi\right)^4\right]^{1/4} = O\left(n^{1/4}\right)$.
\end{assumption}
 
The symbol `$\circ$' used in Assumption \ref{a:psi} (i) denotes the Hadamard product of vectors. Instead of directly stating a concrete, primitive, and sufficient condition, we state this high-level condition for the benefit of clearly highlighting how each of the components, (i), (ii), and (iii), is used for key steps in our proof. As remarked earlier, we discuss this and other assumptions in Section \ref{sec:discussion} in detail, and provide a concrete, primitive, and sufficient  condition for Assumption \ref{a:psi} in Section \ref{sec:a:psi}.

Together with the bias characterization of Theorem \ref{lemma:bias}, Assumption \ref{a:psi} is used to derive the asymptotic linear representation
$$
\widehat\theta(h_n) - \theta_0 = (\mathbb{E}_n-\mathbb{E})\left[Z(h_n)\right] + o_p(n^{-1/2}),
$$
where the formal statement is given in Lemma \ref{lemma:linear_representation} of Appendix \ref{sec:lemma:linear_representation}. Given this asymptotic linear representation, it now remains to check the conditions for a self-normalized central limit theorem for the triangular array of $Z(h_n)$. In particular, Lyapunov's condition for the triangular array of $Z(h_n)$ can be established in Lemma \ref{lemma:l4l2} of Appendix \ref{sec:lemma:l4l2} under Assumptions \ref{a:m} (iv), \ref{a:h} (ii), and \ref{a:psi} (iii). Consequently, applying Lyapunov's central limit theorem, we obtain the following self-normalized asymptotic normality for the bias-corrected trimmed mean estimator $\widehat\theta(h_n)$. 
Note that $\left(\widehat m^{(1)}(0),...,\widehat m^{(k-1)}(0)\right)$ estimates higher-order derivatives of the conditional expectation function $m(\cdot)$, and therefore they may have a high sampling variability. 
The asymptotic linear representation takes care of this sample variability since $Z(h_n)$ includes the influence function, $\psi$, for $\left(\widehat m^{(1)}(0),...,\widehat m^{(k-1)}(0)\right)$. 
A proof of Theorem \ref{theorem:asymptotic_distribution} is provided in Appendix \ref{sec:theorem:asymptotic_distribution}.

\begin{theorem}[Asymptotic Distribution]\label{theorem:asymptotic_distribution}
If Assumptions \ref{a:m}, \ref{a:h}, and \ref{a:psi} are satisfied, then
\begin{align*}
\frac{ \widehat\theta(h_n) - \theta_0 }{\sqrt{\sigma(h_n)^2/n}}
\stackrel{d}{\rightarrow} \mathcal{N}(0,1),
\end{align*}
provided that $\sigma(h_n)^2\equiv Var(Z(h_n))$ is bounded away from zero.
\end{theorem}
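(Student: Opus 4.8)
The plan is to combine two ingredients that the statement has already isolated: an asymptotic linear representation reducing the centered estimator to a standardized sum of the row-wise i.i.d.\ summands $Z_{h,i}$, and a central limit theorem for the resulting triangular array, whose common law drifts with $n$ through $h$. Concretely, I would first invoke Lemma~\ref{lemma:linear_representation} to obtain
\begin{align*}
\widehat\theta_h - \widehat B_h - \theta = (E_n - E)[Z_h] + o_p(n^{-1/2}),
\end{align*}
with $Z_h$ as in~(\ref{eq:z}), and then apply a Lyapunov-type central limit theorem to $(E_n-E)[Z_h]$ standardized by $\sqrt{Var(Z_h)/n}$, discharging the Lyapunov condition through Lemma~\ref{lemma:l4l2}.

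For the linear representation I would decompose $\widehat\theta_h - \widehat B_h - \theta$ into three pieces. The stochastic fluctuation of the trimmed estimator, $\widehat\theta_h - \theta_h = (E_n-E)\big[\tfrac{B}{A}\mathbbm{1}\{A\ge h\}\big]$, is already in i.i.d.\ form. The deterministic bias $\theta_h-\theta$ is expanded by Lemma~\ref{lemma:bias} into its leading $k-1$ terms plus the $k$-th order integral remainder; the remainder is of order $E[A^{k-1}\mathbbm{1}\{A<h\}]$ with $m^{(k)}$ bounded, and since $A$ has no mass at $0$ (Assumption~\ref{a:m}(iii)) this is $o(h^{k-1})$, which is $o(n^{-1/2})$ by Assumption~\ref{a:h}(ii). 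The leading $k-1$ terms are exactly what $\widehat B_h$ targets, so the gap is controlled by replacing each $\widehat m^{(\kappa)}(0)-m^{(\kappa)}(0)$ with its influence-function representation from Assumption~\ref{a:psi}(i); the rescaling vector $(1,\dots,h^{k-2})$ appearing there matches the $h$-scaling of the coefficients $c_h$, so this contribution collapses to $c_h'E_n[\psi] + o_p(n^{-1/2})$. Summing the pieces yields $(E_n-E)[Z_h]+o_p(n^{-1/2})$.

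With the representation in hand, dividing by $\sqrt{Var(Z_h)/n}$ gives
\begin{align*}
\frac{\widehat\theta_h - \widehat B_h - \theta}{\sqrt{Var(Z_h)/n}}
=
\frac{\sum_{i=1}^n (Z_{h,i}-E[Z_h])}{\sqrt{n\,Var(Z_h)}}
+
\frac{o_p(n^{-1/2})}{\sqrt{Var(Z_h)/n}}.
\end{align*}
Since $Var(Z_h)$ is bounded away from zero, $\sqrt{Var(Z_h)/n}$ is bounded below by a constant multiple of $n^{-1/2}$, so the second term is $o_p(1)$. For the first term I would apply the Lyapunov central limit theorem for triangular arrays with exponent $4$: it suffices that $E|Z_h-E[Z_h]|^4 / \big(n\,Var(Z_h)^2\big)\to 0$, which is exactly the content of Lemma~\ref{lemma:l4l2}. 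Slutsky's theorem then delivers the $N(0,1)$ limit.

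The main obstacle is the triangular-array character of the problem: because $h\to 0$, the summand $\tfrac{B}{A}\mathbbm{1}\{A\ge h\}$ behaves like an increasingly heavy-tailed variable and $Var(Z_h)$ need not stabilize, so no fixed-distribution CLT applies and the Lyapunov ratio must be verified for the drifting law. The crux is the fourth-moment bound $E\big[(B/A)^4\mathbbm{1}\{A\ge h\}\big]\le h^{-4}E[B^4]$, finite by Assumption~\ref{a:m}(iv) and of order $h^{-4}$; Assumption~\ref{a:h}(iii), $n^{-1}h^{-4}=o(1)$, is precisely the rate that forces the Lyapunov ratio to vanish, while Assumption~\ref{a:psi}(iii) controls the analogous fourth moment of the $c_h'\psi$ term. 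Balancing the trimming rate, the moment condition on $B$, and the influence-function moment bound is the step I would check most carefully.
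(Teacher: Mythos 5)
Your proposal is correct and follows essentially the same route as the paper: Lemma~\ref{lemma:linear_representation} for the asymptotic linear representation in terms of $Z_h$, Lemma~\ref{lemma:l4l2} to verify the Lyapunov condition (with $\delta=2$) for the triangular array, and Slutsky's theorem to absorb the $o_p(n^{-1/2})$ remainder using the lower bound on $Var(Z_h)$. The only minor omission is in your side-sketch of the linear representation, where the cross term $(E_n-E)\left[A^{\kappa-1}\mathbbm{1}\{A<h\}\right]\cdot\widehat m^{(\kappa)}(0)$ (arising because $\widehat B_h$ uses sample rather than population trimmed moments) is not mentioned, but that term is handled inside Lemma~\ref{lemma:linear_representation}, which you invoke.
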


It follows from Assumption \ref{a:h} that this limit distribution result accommodates the rate range of $n^{-1/4} \prec h_n \precsim n^{-1/(2k-1)}$ for a choice of the trimming threshold $h_n$, and this range expands as the smoothness index $k$ increases. In other words, the extent of insensitivity of this limit distribution increases with the degree of smoothness. This is a practical advantage of exploiting the smoothness, in addition to the theoretical advantage of allowing for faster convergence rates -- see Section \ref{sec:faster}.

\section{Discussions of the Assumptions}\label{sec:discussion}

In Section \ref{sec:main_results}, we presented a general theory using three high-level assumptions (Assumptions \ref{a:m}, \ref{a:h}, and \ref{a:psi}). These general conditions accommodate a number of alternative structures and alternative estimators. In the current section, we complement the general theory by discussing Assumptions \ref{a:m}, \ref{a:h}, and \ref{a:psi} with concrete structures and concrete instances. The guideline presented in Section \ref{sec:overview} is also based on these concrete instances as well as the general theory of Section \ref{sec:main_results}.

 \subsection{On Assumption \ref{a:m}}

This assumption consists of four requirements on the conditional expectation function $m$ of the numerator $B$ given the denominator $A$. The four parts are discussed in Sections \ref{sec:part1}, \ref{sec:part2}, \ref{sec:part3}, and \ref{sec:part4} below. Some of the assumptions are discussed in terms of Example \ref{ex:ipsw} and Example \ref{ex:special_regressor} introduced in Section \ref{sec:overview}. Furthermore, we also discuss the compatibility of this assumption with heavy-tailed distributions of $B/A$ in Section \ref{sec:heavy_tail}.

\subsubsection{On Assumption \ref{a:m} (i): $m(0) = 0$}\label{sec:part1}
This assumption is an innocuous assumption in Example \ref{ex:ipsw}. Since $D=0$ holds almost surely given $P=0$, we have $B=0$ almost surely given $A=0$. Therefore, $m(0)=\mathbb{E}[B|A=0]=\mathbb{E}[0|A=0]=0$, and thus this assumption is automatically satisfied. For Example \ref{ex:special_regressor}, this assumption is satisfied if $\inf_{v: f_V(v)=a}(F_{U}(\alpha+|v|)-F_{U}(\alpha-|v|))\rightarrow 1\mbox{ as }a\rightarrow 0$. See Assumption B.3 in \cite{Lewbel:1998}. 

\subsubsection{On Assumption \ref{a:m} (ii): $m$ is $k$-times continuously differentiable with a bounded $k$-th derivative in a neighborhood of 0.}\label{sec:part2}
If a researcher is willing to accept this assumption with a higher order $k$, then the trimming bias can be corrected to a larger extent.
Such a relation is characterized by Theorem \ref{lemma:bias} in Section \ref{sec:main_results}.
This concept is analogous to the related concept in general nonparametric methods where an assumption of higher-order smoothness of a function allows for higher-order bias reduction, e.g., via higher-order kernel or higher-order local polynomials.

In Example \ref{ex:ipsw}, under the unconfoundedness condition \citep{Rosenbaum1983}, this assumption is true if the conditional expectation $\mathbb{E}\left[ Y_1 | P = p \right]$ of the potential outcome $Y_1$ under treatment given the propensity score $P=p$ is a smooth function of $p$.
For Example \ref{ex:special_regressor}, this assumption is true if regular conditional probabilities $P(V>0|f_V(V)= \cdot )$ and $P(\alpha+V-U\geq 0|f_V(V)= \cdot )$ exist and are smooth around zero.

\subsubsection{On Assumption \ref{a:m} (iii): $A\in[0,1]$.}\label{sec:part3}
This condition is made to simplify the proof, and it is not essential in this paper. 

\subsubsection{On Assumption \ref{a:m} (iv): $\mathbb{E}\left[B^4\right] < \infty$.}\label{sec:part4}
This assumption requires that the distribution of the numerator $B$ has a bounded fourth moment.
Note that we allow for infinite fourth moment, and even infinite variance, for distributions of the fraction $B/A$ due to small values of the denominator $A$.
However, we rule out infinite fourth moment of the numerator \textit{per se}.

In Example \ref{ex:ipsw}, a bounded fourth moment of the potential outcome $Y_1$ under treatment is sufficient.
For Example \ref{ex:special_regressor}, this assumption is satisfied since $B$ is bounded.

\subsubsection{On Assumption \ref{a:m} (v): $\theta_0$ is well-defined and finite.}\label{sec:part5}
Our theory requires that the parameter of interest needs to be well-defined and finite.

\subsubsection{Compatibility of Assumption \ref{a:m} with Heavy-Tailed Distributions}\label{sec:heavy_tail}
As the final remark on Assumption \ref{a:m}, we discuss the compatibility of Assumption \ref{a:m} with heavy-tailed distributions of $B/A$.
Here we make this argument to demonstrate that Assumption \ref{a:m} is sufficiently mild to accommodate data generating processes where trimming matters.
Trimming is useful particularly when the distribution of $B/A$ has a heavy tail due to small $A$.
Specifically, trimming matters non-trivially when the variance $Var(B/A)$ does not. 
Furthermore, large trimming (i.e., slowly vanishing $h_n$ as the sample size $n$ increases) should improve the convergence rate of the trimmed mean estimator.
It is important to ensure that our restriction on the structure of $(A,B)$ in Assumption \ref{a:m} does not rule out these cases where large trimming matters.
The following example demonstrates that Assumption \ref{a:m} is compatible with these relevant cases.

 \begin{example}[Heavy Tail]\label{ex:heavy_tail}
Let
$A \sim \text{Gamma}(\alpha,\beta)|_{[0,1]}$ (the truncated Gamma distribution to $[0,1]$) where $\alpha>0$ and $\beta>0$.
Let
$B \ | \ A=a \sim \mathcal{N}(\gamma a, \gamma_v a^\delta)$ where $\gamma \geq 0$ and $\gamma_v > 0$.
Note that the joint distribution of $(A,B)$ generated by the above process satisfies all four parts of Assumption \ref{a:m}.
Under this setting, simple calculations conclude that the moment $\mathbb{E}\left[B/A\right]$ exists if $2\alpha+\delta>2$ and the variance $Var(B/A)$ does \textit{not} exist if $\alpha + \delta < 2$.
Furthermore, we can deduce $n \left/ Var\left(\frac{B}{A} \cdot \mathbbm{1}\{A \geq h_n\}\right)\right. = O\left(nh_n^{2-\alpha-\delta}\right)$.
In other words, the convergence rate of the trimmed mean estimator $\widetilde\theta(h_n)$ is bounded below by $\sqrt{nh_n^{2-\alpha-\delta}}$.
Note that $\sqrt{nh_n^{2-\alpha-\delta}}$ can be arbitrarily close to $\sqrt{nh_n}$ when for instance $\delta=0$ and $\alpha\approx 1$.  
Therefore, larger trimming (i.e., more slowly vanishing $h_n$ as the sample size $n$ increases) helps to improve the convergence rate.
$\triangle$
\end{example}

\subsection{On Assumption \ref{a:h}}\label{sec:a:h}

The three parts (i), (ii), and (iii) of Assumption \ref{a:h} can be simultaneously satisfied if the order $k$ of smoothness assumed in Assumption \ref{a:m} (ii) is greater than 3.
In this case, the trimming threshold $h_n$ can be chosen to satisfy $n^{-\frac{1}{4}} \ll h_n \lesssim n^{-\frac{1}{2(k-1)}}$.
Therefore, the ``largest possible trimming'' or the slowest possible trimming threshold is achieved with a choice according to $h_n \sim n^{-\frac{1}{2(k-1)}}$.

 \subsection{On Assumption \ref{a:psi}}\label{sec:a:psi}

Assumption \ref{a:psi} imposes restrictions on how a researcher estimates the $k-1$ derivatives of the conditional expectation function $m$ of the numerator $B$ given the denominator $A$ at zero.
There are a number of alternative methods available in the literature, but we focus on the sieve estimation due to the availability of a rich set of theories that are relevant to our setting \citep{chen2007large}.\footnote{We take advantage of the asymptotic linear representation and the asymptotic distribution for least squares estimation of nonparametric regressions under i.i.d. settings \citep[e.g.,][]{van1990estimating,andrews1991asymptotic,eastwood1991adaptive,gallant1991asymptotic,newey1997convergence,de2002note,van2002m,huang2003local,chen2007large,cattaneo2013optimal,belloni2015some,chen2015optimal,hansen:2015}.}
In particular, Section \ref{sec:overview} proposes the shifted orthonormal Legendre polynomial basis of degree $K \geq k$ on $[0,1]$ -- see (\ref{eq:legendre}).
Our discussion of Assumption \ref{a:psi} here also focuses on the shifted orthonormal Legendre polynomial basis, although it is possible to use other sieve bases as well with minor modifications.

For the space of functions where $m$ resides, we consider the H\"older classes of smoothness order $s$, in addition to the requirements stated in Assumption \ref{a:m}.
The $k-1$ derivatives of $m$ at zero are estimated by the sieve predictor
\begin{align}\label{eq:mkappa}
\widehat m^{(\kappa)}(0) &= p_K^{(\kappa)}(0)' \mathbb{E}_n\left[ p_K(A)p_K(A)' \right]^{-1} \mathbb{E}_n\left[ p_K(A)B \right]
\end{align}
for each $\kappa \in \{1,...,k-1\}$.
The influence function for $\widehat m^{(\kappa)}(0)$ can be written as
\begin{align}\label{eq:psikappa}
\psi_\kappa = p_K^{(\kappa)}(0)' p_K(A) \left(B - m(A)\right)
\end{align}
for each $\kappa \in \{1,...,k-1\}$.

The estimators (\ref{eq:mkappa}) and their influence functions (\ref{eq:psikappa}) satisfy Assumption \ref{a:psi} provided that the following five conditions are satisfied:
\begin{enumerate}[(I)]
	\item Eigenvalues of $\mathbb{E}\left[p_K(A)p_K(A)'\right]$ are bounded above and away from zero.
	\item $n^{-1/2} h_n^{\kappa-1} \sqrt{\log K} \left(K + K^{5/2-s}\right)\norm{p_K^{(\kappa)}(0)} = o(1)$ for each $\kappa \in \{1,...,k-1\}$;
	\item $h_n^{\kappa-1} K^{1-s} \norm{p_K^{(\kappa)}(0)} = o(1)$ for each $\kappa \in \{1,...,k-1\}$;
	\item $n^{1/2} h_n^{\kappa-1} \abs{r_K^{(\kappa)}(0)} = o(1)$ for each $\kappa \in \{1,...,k-1\}$; and
	\item $n^{-1/4} h_n^{\kappa-1} \mathbb{E}\left[\left(p_K^{(\kappa)}(A)' p_K(A) \left(B - m(A)\right)\right)^4\right]^{1/4} = O(1)$;
\end{enumerate}
where $r_K^{(\kappa)}(0)$ is the sieve approximation error given by
$$
r_K^{(\kappa)}(0) = m^{(\kappa)}(0) - p_K^{(\kappa)}(0)' \mathbb{E}\left[ p_K(A)p_K(A)' \right]^{-1} \mathbb{E}_n\left[ p_K(A)m(A) \right].
$$
Conditions (I)--(IV) suffice for Assumption \ref{a:psi} (i) according to \citet{belloni2015some} for the case of the shifted orthonormal Legendre polynomial basis $p_K$.
Assumption \ref{a:psi} (ii) is satisfied by the concrete influence function expression in (\ref{eq:psikappa}).
Assumption \ref{a:psi} (iii) is satisfied by the concrete influence function expression in (\ref{eq:psikappa}) and condition (V).

\section{Properties of the Estimator}\label{sec:properties}
\subsection{Negligible Variance of the Bias Estimator}\label{sec:negligible_variance_of_the_bias_estimator}

The purpose of trimming in practice is to reduce the variance of the estimator, but this variance reduction is at the cost of the trimming bias.
We thus propose in this paper to estimate this trimming bias, yet it would put our priorities backwards if the bias estimator would incur a larger variance than that of the original untrimmed estimator.
In this section, we present sufficient conditions under which the variance of the bias estimator converges to zero faster than the variance of the trimmed mean estimator, which is analogous to the result of \citet[Theorem 3.4]{chaudhuri/hill:2016} whose bias estimator does not contribute to the asymptotic variance.
In other words, the bias estimator adds only an asymptotically negligible variance under suitable conditions.

For simplicity, we suppose throughout this subsection that $1/A$ follows the Pareto distribution with the shape parameter $\alpha > 0$.
With the scale normalization, the cumulative distribution function $F_A$ of $A$ takes the form
$$
F_A(a)=a^{\alpha}
$$
for $a \in [0,1]$.
Note that we allow for $\alpha>0$ as opposed to just $\alpha>1$.
The following proposition shows sufficient conditions under which, as argued in the previous paragraph, the bias estimator will add only a negligible variance in large samples, and hence the bias correction will improve the root mean square error in large samples.
A major implication of this finding is that the combination of trimming and its bias correction entails asymptotically no larger root mean squared error than a na\"ive untrimmed mean estimator under the suggested conditions.
This result holds even if $\alpha$ is less than the unit.\footnote{When $\alpha$ is less than the unit, $B/A$ is not integrable so $\mathbb{E}[B/A]$ does not exist. Our parameter of interest, introduced in (\ref{eq:definition_parameter_of_interest}), is  well-defined even in this case.}

 \begin{proposition}[Negligible Variance of the Bias Estimator]\label{proposition:variance_ratio}
If 
the conditional second moment function $v$, defined by
$
v(a)=\mathbb{E}[B^2\mid A=a],
$
is three times differentiable in a neighborhood of zero with $v(0) \neq 0$ or $v^{(1)}(0) \neq 0$, and there exists $\alpha>0$ such that 
$
F_A(a)=a^{\alpha}
$
and
$
\sum_{\kappa=1}^{k-1}\mathbb{E}\left[\left(h_n^{\kappa-1}\psi_\kappa\right)^2\right]=O(h_n^{-1}),
$
then 
$$
\frac{Var(c(h_n)' \psi)}{Var(\frac{B}{A} \cdot \mathbbm{1}\{A \geq h_n\})}=o(1).
$$
\end{proposition}
 
A proof of this proposition is given in Appendix \ref{sec:proposition:variance_ratio}.

\subsection{On Faster Convergence Rates}\label{sec:faster}

Our motivation in this paper is that smoothness in the joint distribution between $A$ and $B$ can be exploited to achieve faster convergence rates.
This section formalizes this claim.

When we assume the smoothness conditions stated in Assumption \ref{a:m} for sufficiently large $k$, our proposed estimator $\widehat\theta(h_n)$ can have a convergence rate arbitrarily close to $n^{-1/2}$. 
In this section, we show that, for every $\eta>0$, the convergence rate of $\widehat\theta(h_n)$ is faster than $n^{-1/2+\eta}$ for sufficiently large $k$. 

\begin{proposition}\label{theorem:faster}
Let $\eta$ be any positive number.
Suppose $h_n=Cn^{-\eta}$ for some constant $C>0$.
If Assumptions \ref{a:m} and \ref{a:psi} are satisfied with $k\geq 1+\frac{1}{2\eta}$ and 
\begin{equation}\label{eq:sieve_estimate_fast}
\mathbb{E}\left[\|\psi\|^2\right]=O(n^{2\eta}),
\end{equation}
then  
$$
\widehat\theta(h_n)-\theta_0=O_p(n^{-1/2+\eta}).
$$
\end{proposition} 

A proof of this proposition is given in Appendix \ref{sec:proof:faster}.
This proposition assumes that the estimation error of 
$\left(m^{(1)}(0),...,m^{(k-1)}(0)\right)$ is close to the parametric rate such that its influence function $\psi$ satisfies \eqref{eq:sieve_estimate_fast}. 
A sufficient condition for bounding the estimation error of 
$\left(m^{(1)}(0),...,m^{(k-1)}(0)\right)$ depends on its specific estimator. 
In the sieve estimation, for example, the point-wise convergence rate can be arbitrarily close to $n^{-1/2}$ as long as we restrict the function class for $m(\cdot)$.  Detailed discussions are found in \citet{belloni2015some}.

\subsection{Data-Dependent Trimming Threshold}\label{sec:data_dependent} 

The main asymptotic distribution theory (Theorem \ref{theorem:asymptotic_distribution}) is based on a rate condition (Assumption \ref{a:h}) imposed on  the trimming threshold.
In practice, researchers may choose the trimming threshold through data-driven choice rules, as emphasized by \cite{escanciano2014uniform,escanciano2016identification} in the context of different econometric models.
In this section, we present a range of possibly random trimming thresholds under which the limit distribution of Theorem \ref{theorem:asymptotic_distribution} continues to be true.

 \begin{proposition}[Asymptotic Distribution with Data-Dependent Trimming Threshold]\label{theorem:data_dependent}
Suppose that the assumptions in Theorem \ref{theorem:asymptotic_distribution} hold. 
If there are deterministic sequences, $\{h_{n,L}\}_{n\geq 1}$ and $\{h_{n,U}\}_{n\geq 1}$, such that 
\begin{eqnarray}\label{eq:h_{n,U}_small}
&\mathbb{E}\left[\mathbbm{1}\{A<h_{n,U}\}\right]=o(1),
\\
\label{eq:h_{n,U}_h_{n,L}_veryysmall}
&
\mathbb{E}\left[B^2\cdot \mathbbm{1}\{h_{n,L}\leq A<h_{n,U}\}\right]
=
o(h_{n,L}^2),
\\
\label{eq:bandwidth_estimation1}
&h_{n,L}\leq h_n\leq h_{n,U},
\qquad\text{and} 
\\
\label{eq:bandwidth_estimation2}
&h_{n,L}\leq\widehat{h}_n\leq h_{n,U}\mbox{ with probability approaching one},
\end{eqnarray}
then
\begin{align*}
\frac{ \widehat\theta(\widehat{h}_n) - \theta_0 }{\sqrt{\sigma(h_n)^2/n}}
\stackrel{d}{\rightarrow} \mathcal{N}(0,1),
\end{align*}
provided that $\sigma(h_n)^2\equiv Var(Z(h_n))$ is bounded away from zero.
\end{proposition}

A proof of this proposition is given in Appendix \ref{sec:proof:data_dependent}.
The condition \eqref{eq:h_{n,U}_h_{n,L}_veryysmall} deserves a remark. This condition requires that $h_{n,L}$ and $h_{n,U}$ are not too far way from $h_n$ such that we can establish the uniform convergence in the proof. 
This condition can be simplified further; for example, if $\mathbb{E}[B^2\mid A=a]$ is bounded over $a\in[0,1]$ and $F_A(a)=a^\alpha$ for some $\alpha>0$, then \eqref{eq:h_{n,U}_h_{n,L}_veryysmall} holds with $(h_{n,L},h_{n,U})=(h_n-h_n^3,h_n+h_n^3)$.\footnote{This statement follows from 
$
\mathbb{E}\left[B^2\cdot \mathbbm{1}\{h_{n,L}\leq A<h_{n,U}\}\right]
\leq
\left(\sup_{a}\mathbb{E}[B^2\mid A=a]\right)((h_n+h_n^3)^{\alpha}-(h_n-h_n^3)^{\alpha})
=
O(h_n^{2+\alpha})
=
O(h_{n,L}^{2+\alpha}).
$}

\section{Implementation}\label{sec:implementation}

\subsection{Variance Estimation}\label{sec:variance_estimation}

The asymptotic distribution in Theorem \ref{theorem:asymptotic_distribution} depends on the unknown parameter $Var(Z(h_n))$.
To conduct a feasible inference, we estimate this theoretical variance by 
\begin{align*}
\widehat\sigma(h_n)^2
=
\mathbb{E}_n\left[\widehat{Z}(h_n)^2\right]-\mathbb{E}_n\left[\widehat{Z}(h_n)\right]^2
\end{align*}
where 
\begin{align}\label{eq:Z(h_n)at}
\widehat{Z}(h_n) = \frac{B}{A} \cdot \mathbbm{1}\{A \geq h_n\} + \widehat{c}(h_n)' \widehat{\psi},
\end{align}
and we assume that the influence function $\psi$ of $\left(\widehat m^{(1)}(0),...,\widehat m^{(k-1)}(0)\right)$ and its estimator $\widehat\psi$ satisfy the following conditions.
\newtheorem*{a_psi_prime}{Assumption \ref{a:psi}$'$}
\begin{a_psi_prime}
(i)  $\mathbb{E}_n\left[\norm{\psi}^2\right]^{1/2}=O_p\left(n^{1/2}\right)$; and
(ii) $\mathbb{E}\left[\norm{\widehat{\psi}-\psi}^2\right]^{1/2}=o(1)$.
\end{a_psi_prime}

Part (i) of this assumption allows the empirical $L^2$ norm of the influence function to grow up to $n^{1/2}$.
Part (ii) of this assumption requires that the influence function estimator is consistent in mean square.
We now state a corollary to Theorem \ref{theorem:asymptotic_distribution}, arguing that the asymptotic distribution is the same even after replacing the theoretical variance $Var(Z(h_n))$ by its estimate $\widehat\sigma(h_n)^2$.

 \begin{corollary}[Asymptotic Distribution with Estimated Variance]\label{corollary:asymptotic_distribution}
If Assumptions \ref{a:m}, \ref{a:h}, \ref{a:psi}, and \ref{a:psi}$'$ are satisfied, then
\begin{align*}
\frac{ \widehat\theta(h_n) - \theta_0 }{\sqrt{\left.\widehat\sigma(h_n)^2\right/n}}
\stackrel{d}{\rightarrow} \mathcal{N}(0,1),
\end{align*}
provided that $Var(Z(h_n))$ is bounded away from zero.
\end{corollary}
 
See Appendix \ref{sec:corollary:asymptotic_distribution} for a proof.

\subsection{Choice of $h_n$}\label{sec:choice_of_h}

The literature on denominator-based-trimmed mean estimation suggests rule-of-thumb choices of $h_n$ \citep[e.g.,][]{crump2009dealing}.
In this section, we also discuss a choice rule based on optimal approximate mean square errors.
Despite the optimal choice, the higher-order bias correction based on our bias characterization result can in turn provide a valid inference in the spirit of \citet{calonico2014robust}.
By balancing the leading bias and the variance of the trimmed-mean estimator with bias correction up to the order of $k-2$, we can choose the trimming threshold $h_n$ to minimize 
\begin{align*}
\left(\frac{\mathbb{E}\left[A^{k-2} \cdot \mathbbm{1}\{0<A<h_n\}\right]}{(k-1)!} \cdot m^{(k-1)}(0)\right)^2
+
n^{-1} \cdot Var\left(\frac{B}{A} \cdot \mathbbm{1}\left\{A \geq h_n\right\} + \widetilde c(h_n)' \widetilde \psi\right),
\end{align*}
where $\widetilde c(h_n)$ is the $(k-2)$-dimensional vector defined by
\begin{align*}
\widetilde c(h_n) = 
\left(\begin{array}{c}
\mathbb{E}\left[\mathbbm{1}\{0<A<h_n\}\right]/1!\\
\mathbb{E}\left[A \cdot \mathbbm{1}\{0<A<h_n\}\right]/2!\\
\vdots\\
\mathbb{E}\left[A^{k-3} \cdot \mathbbm{1}\{0<A<h_n\}\right]/(k-2)!
\end{array}\right)
\end{align*}
and $\widetilde\psi$ is the influence function of an estimator $\left(\widehat m^{(1)}(0),...,\widehat m^{(k-2)}(0)\right)$ of the $k-2$ derivatives $\left( m^{(1)}(0),..., m^{(k-2)}(0)\right)$ of the conditional expectation function $m$ at 0.
Since our asymptotic theory is based on the bias-corrected estimator with bias estimation up to the order of $k-1$, the asymptotic variance is valid in large samples under the above choice of $h_n$. 

\subsection{Choice of $K$}\label{sec:choice_of_K}
For a choice of the degree $K$, a researcher can follow common rules of thumb used in sieve estimation.
Alternatively, a data-driven choice of $K$ may be obtained by common model selection approaches, such as cross validation and generalized information criterion.
We present a model selection approach based on the Bayesian Information Criterion (BIC) for guidance.
For each $K > 3$, define the BIC criterion
$$
BIC(K) = n \ln\left(\frac{1}{n}\sum_{i=1}^n (\widehat B_i(K) - B_i )^2\right) + K \ln(n),
$$
where
$$
\widehat B_i(K) = p_K(A_i)' \mathbb{E}_n[p_K(A)p_K(A)']^{-1} \mathbb{E}_n[p_K(A)B].
$$
The BIC-optimal choice of $K$ is given by
$$
K = \arg \min_{\widetilde K > 3} BIC(\widetilde K).
$$

\section{Extension to Cases of Generated Random Variables}\label{sec:extension}

In some applications, the random variables $A$ and $B$ are generated.
For example, the propensity score $A=P$ in Example \ref{ex:ipsw} is often estimated by the logit or probit in empirical research.
In this light, we now consider the case where $(A,B) = (g_A(X,\gamma_0),g_B(X,\gamma_0))$ is a function of observed random vector $X$ and an unknown finite-dimensional parameter vector $\gamma_0 \in \Gamma$.
We write the denominator-based-trimmed mean estimator as
\begin{align}\label{eq:ex:theta_h_hat}
\widetilde\theta(h_n)
= \mathbb{E}_n\left[\frac{g_B(X,\widehat\gamma)}{g_A(X,\widehat\gamma)} \cdot S\left(\frac{g_A(X,\widehat\gamma)}{h_n}\right)\right],
\end{align}
where 
$\widehat\gamma$ is an estimator of $\gamma_0$ and
$S$ is a smoothed indicator function such that $S(0)=0$ and $S(u)=1$ for $u \geq 1$. 
Unlike the baseline case where we used the non-smooth indicator for trimming, we now employ smooth trimming in order to apply the delta method to approximate the effects of first-stage estimation by $\widehat\gamma$ on the bias-corrected denominator-based-trimmed mean estimator.
For convenience, we also write the population counterpart of the denominator-based-trimmed mean estimator (\ref{eq:ex:theta_h_hat}) by
\begin{align}\label{eq:ex:theta_h}
\theta(h_n)
= \mathbb{E}\left[\frac{B}{A} \cdot S\left(\frac{A}{h_n}\right)\right]
= \mathbb{E}\left[\frac{g_B(X,\gamma_0)}{g_A(X,\gamma_0)} \cdot S\left(\frac{g_A(X,\gamma_0)}{h_n}\right)\right].
\end{align}
Consider the following assumptions.

 \begin{assumption}\label{a:ex:m}
(i) $m(0) = 0$;
(ii) $m$ is $k$-times continuously differentiable with a bounded $k$-th derivative in a neighborhood of 0;
(iii) $g_A$ and $g_B$ are twice differentiable in $\gamma$ in a neighborhood of $\gamma_0$;
(iv) $\underset{\gamma\in\Gamma}{\sup}\norm{\frac{\partial}{\partial\gamma}g_A(X,\gamma)}^2$ and $\underset{\gamma\in\Gamma}{\sup}\norm{\frac{\partial^2}{\partial\gamma\partial\gamma'}g_A(X,\gamma)}$ have finite expectations;
(v)  $\norm{\left.\frac{\partial}{\partial \gamma} g_B(X,\gamma)\right\vert_{\gamma=\gamma_0}}^2$,
$g_B(X,\gamma_0)^2 \cdot  \norm{\left.\frac{\partial}{\partial \gamma} g_A(X,\gamma)\right\vert_{\gamma=\gamma_0}}^2$,
$\sup_{\gamma \in \Gamma}\norm{\frac{\partial}{\partial\gamma'} g_A(X,\gamma)}\norm{\frac{\partial}{\partial\gamma} g_B(X,\gamma)}$, 
$\sup_{\gamma \in \Gamma}\norm{\frac{\partial^2}{\partial\gamma\partial\gamma'} g_B(X,\gamma)}$, 
$\sup_{\gamma \in \Gamma}|g_B(X,\gamma)|\cdot \norm{\frac{\partial^2}{\partial\gamma\partial\gamma'} g_A(X,\gamma)}$, 
and 
$\sup_{\gamma \in \Gamma}|g_B(X,\gamma)|\cdot \norm{\frac{\partial}{\partial\gamma} g_A(X,\gamma)}^2$
 have   finite expectations; and 
(vi) $P(g_A(X,\gamma_0)=0)=0$.
\end{assumption}

\begin{assumption}\label{a:ex:s}
(i) $S(0)=0$, $S(u)=1$ for $u \geq 1$, $\sup_u \abs{S(u)} < \infty$;
(ii) $S$ is twice differentiable;
(iii) $\sup_u \abs{\frac{d^2}{d u^2} \frac{S(u)}{u}} < \infty$;
(iv) $\sup_u \abs{\frac{d^2}{d u^2} S(u)} < \infty$.
\end{assumption}

\begin{assumption}\label{a:ex:h}
(i) $nh_n^{2(k-1)} = O(1)$; and 
(ii) $n^{-1}h_n^{-6} = o(1)$
as $n \rightarrow \infty$.
\end{assumption}
 
Assumption \ref{a:ex:m} is a counterpart of Assumption \ref{a:m} from the baseline framework, and Assumption \ref{a:ex:h} is a counterpart of Assumption \ref{a:h} from the baseline framework.
Assumption \ref{a:ex:s} did not appear in the baseline framework.
As mentioned earlier, this assumption imposes the smoothness of $S$ so we can apply the delta method to approximate the effects of first-stage estimation by $\widehat\gamma$ on the bias-corrected denominator-based-trimmed mean estimator.
Similar to Theorem \ref{lemma:bias} for the baseline setting, we have the following theorem stating a bias characterization under the current extended setting.

 \begin{theorem}[Bias Characterization]\label{lemma:ex:bias}
If Assumptions \ref{a:ex:m} (i)--(ii), \ref{a:ex:s} (i), and \ref{a:ex:h} (i) are satisfied, then
\begin{align*}
\theta_0 - \theta(h_n)
=
-\lambda(h_n)
+
o\left(n^{-1/2}\right),
\end{align*}
where
\begin{align}\label{eq:ex:bias}
\lambda(h_n)
=
-\sum_{\kappa=1}^{k-1} \frac{\mathbb{E}\left[A^{\kappa-1} \left(S\left(\frac{A}{h_n}\right)-1\right)\right]}{\kappa !} \cdot m^{(\kappa)}(0).
\end{align}
\end{theorem}
 
The proof of this theorem is provided in Appendix \ref{sec:lemma:ex:bias}.
The bias expression (\ref{eq:ex:bias}) in Theorem \ref{lemma:ex:bias} motivates the bias estimator
\begin{align*}
\widehat \lambda(h_n)
=
-\sum_{\kappa=1}^{k-1} \frac{\mathbb{E}_n\left[g_A(X,\widehat\gamma)^{\kappa-1} \left(S\left(\frac{g_A(X,\widehat\gamma)}{h_n}\right)-1\right)\right]}{\kappa !} \cdot \widehat m^{(\kappa)}(0;\widehat\gamma),
\end{align*}
where 
\begin{align*}
\widehat m^{(\kappa)}(0;\gamma) = p_K^{(\kappa)}(0)' \mathbb{E}_n\left[ p_K(g_A(X,\widehat\gamma))p_K(g_A(X,\widehat\gamma))' \right]^{-1} \mathbb{E}_n\left[ p_K(g_A(X,\widehat\gamma))g_B(X,\widehat\gamma) \right]
\end{align*}
for each $\kappa \in \{1,...,k-1\}$.
Thus, the bias-corrected denominator-based-trimmed mean estimator reads
\begin{align*}
\widehat\theta(h_n)
=& \mathbb{E}_n\left[\frac{g_B(X,\widehat\gamma)}{g_A(X,\widehat\gamma)} \cdot S\left(\frac{g_A(X,\widehat\gamma)}{h_n}\right)\right]
\\
&\quad+
\sum_{\kappa=1}^{k-1} \frac{\mathbb{E}_n\left[g_A(X,\widehat\gamma)^{\kappa-1} \left(S\left(\frac{g_A(X,\widehat\gamma)}{h_n}\right)-1\right)\right]}{\kappa !} \cdot \widehat m^{(\kappa)}(0;\widehat\gamma).
\end{align*}
We next consider the following assumptions.

\begin{assumption}\label{a:ex:psi}
(i)
$\left(1,...,h_n^{k-2}\right)' \circ
\left(
\widehat m^{(1)}(0;\gamma_0) - m^{(1)}(0), ...,
\widehat m^{(k-1)}(0;\gamma_0) - m^{(k-1)}(0)
\right)'
=
\left(1,...,h_n^{k-2}\right)' \circ \mathbb{E}_n\left[\psi\right] + o_p\left(n^{-1/2}\right)$;
(ii)
$\mathbb{E}\left[\psi\right]=0$;
(iii) $\widehat m^{(\kappa)}(0; \ \cdot \ )$ is twice differentiable for each $\kappa \in \{1,...,k-1\}$;
(iv) $\left.\frac{\partial}{\partial\gamma'} \widehat m^{(\kappa)}(0;\gamma)\right|_{\gamma=\gamma_0} - \left.\frac{\partial}{\partial\gamma'} m^{(\kappa)}(0;\gamma)\right|_{\gamma=\gamma_0} = o_p(1)$ for each $\kappa \in \{1,...,k-1\}$;
(vi) $\underset{\gamma\in\Gamma}{\sup}\norm{\frac{\partial^2}{\partial\gamma\partial\gamma'} \widehat m^{(\kappa)}(0;\gamma)} $ $= o_p\left(n^{1/2}\right)$ for each $\kappa \in \{1,...,k-1\}$.
\end{assumption}

\begin{assumption}\label{a:ex:gamma}
$\widehat\gamma - \gamma_0 = \mathbb{E}_n[\phi] + O_p(n^{-1})$
where
$\mathbb{E}[\phi]=0$ and $\mathbb{E}[\phi^2]<\infty$.
\end{assumption}

Assumption \ref{a:ex:psi} is a counterpart of Assumption \ref{a:psi} from the baseline framework.
Assumption \ref{a:ex:gamma} is new to the current framework, and imposes a mild restriction on the first-stage estimator $\widehat\gamma$ of $\gamma_0$.
Under the above assumptions, we show in Lemma \ref{lemma:ex:influence_function} that the influence function of $\widehat\theta(h_n)-\theta(h_n)-\lambda(h_n)$ can be written as
\begin{align*}
Z(h_n) 
=&
 \cdot \omega_{1,n}(X,\gamma_0)
+\mathbb{E}\left[\left.\frac{\partial}{\partial\gamma}\omega_{1,n}(X,\gamma)\right|_{\gamma=\gamma_0}\right]' \cdot \phi
\\
&+
\sum_{\kappa=1}^{k-1} \frac{\omega_{2,\kappa,n}(X,\gamma_0) \cdot m^{(\kappa)}(0) + \mathbb{E}[\omega_{2,\kappa,n}(X,\gamma_0)] \cdot \psi_\kappa}{\kappa!}
\\
&+
\sum_{\kappa=1}^{k-1} \frac{m^{(\kappa)}(0) \cdot \mathbb{E}\left[\left.\frac{\partial}{\partial\gamma'}\omega_{2,\kappa,n}(X,\gamma)\right|_{\gamma=\gamma_0}\right] + \mathbb{E}[\omega_{2,\kappa,n}(X,\gamma_0)] \cdot \left.\frac{\partial}{\partial\gamma'} m^{(\kappa)}(0;\gamma)\right|_{\gamma=\gamma_0}}{\kappa!} \cdot \phi
\end{align*}
where
\begin{align}
\omega_{1,n}(x,\gamma) 
&= g_B(X,\gamma)\frac{S\left(\frac{g_A(X,\gamma)}{h_n}\right)}{g_A(X,\gamma)}
\qquad\text{and}
\label{eq:ex:omega1}
\\
\omega_{2,\kappa,n}(x,\gamma)
&= g_A(X,\gamma)^{\kappa-1} \cdot \left(S\left(\frac{g_A(X,\gamma)}{h_n}\right) - 1\right).
\label{eq:ex:omega2}
\end{align}
Therefore, using the bias characterization in Theorem \ref{lemma:ex:bias}, we obtain the following asymptotic distribution result similarly to Theorem \ref{theorem:asymptotic_distribution} from the baseline case.

\begin{theorem}[Asymptotic Distribution]\label{theorem:ex:asymptotic_distribution}
If Assumptions \ref{a:ex:m}, \ref{a:ex:s}, \ref{a:ex:h}, \ref{a:ex:psi}, and \ref{a:ex:gamma} are satisfied, then 
$$
\frac{\widehat\theta(h_n)-\theta_0}{\sqrt{Var(Z(h_n))/n}}\stackrel{d}{\rightarrow} \mathcal{N}(0,1)
$$
holds,
provided that $Var(Z(h_n))$ is bounded away from zero and 
\begin{align}\label{eq:ex:l4l2}
\frac{\mathbb{E}\left[\left(Z(h_n)-\mathbb{E}\left[Z(h_n)\right]\right)^{2+\delta}\right]}{n^{\delta/2} \mathbb{E}\left[\left(Z(h_n)-\mathbb{E}\left[Z(h_n)\right]\right)^2\right]^{(2+\delta)/2}} = o\left(1\right)
\end{align}
for some $\delta>0$.
\end{theorem}
 
The proof of this theorem is provided in Appendix \ref{sec:theorem:ex:asymptotic_distribution}.
The condition (\ref{eq:ex:l4l2}) in the statement of the above theorem was proved as Lemma \ref{lemma:l4l2} in the baseline case.
In order to accommodate a general class of first-stage estimators $\widehat\gamma$ and its interaction with $(A,B)$, however, we state this condition as a high-level assumption in the current extended framework.
Primitive sufficient conditions for this condition will vary depending on applications of interest, but the condition can be proved by similar lines of argument to the proof of Lemma \ref{lemma:l4l2} in Appendix \ref{sec:lemma:l4l2}.

\section{Simulation Studies}\label{sec:simulation}

In this section, we present simulation analysis for the proposed method.
Consider the following extension to the setup described in Example \ref{ex:heavy_tail}.
Let
$A \sim \text{Gamma}(0.50+\epsilon,0.50)|_{[0,1]}$ (the truncated Gamma distribution to $[0,1]$), where $\epsilon$ is varied across simulations.
Let
$B \ | \ A=a \sim \mathcal{N}\left(\sum_{j=1}^5 \gamma_j a^j, 0.50^2\cdot a\right)$ where  $\left(\gamma_1,...,\gamma_5\right)'=\left(0.50,...,0.10\right)'$.
Throughout, the sample size is set to $n=500$ and the trimming threshold $h_n$ is fixed at $0.01$, which is about the 10th percentile of the distribution of $\text{Gamma}(0.50+0.01,,0.50)|_{[0,1]}$.
We compare the performance of the untrimmed sample mean $\mathbb{E}_n\left[B/A\right]$, the trimmed mean $\widetilde\theta(h_n)$, and the bias-corrected trimmed mean $\widehat\theta(h_n)$ in this setting.

From the discussions in Example \ref{ex:heavy_tail}, it is for the values of 
$\epsilon \in (0.00,0.50)$ that the moment $\mathbb{E}\left[B/A\right]$ is well-defined but the variance $Var\left(B/A\right)$ does not exist.
In other words, it is for this range where trimming improves the convergence rate of an estimator for $\mathbb{E}\left[B/A\right]$.
Therefore, we vary $\epsilon$ in this range for our simulation studies.
Specifically, we run simulations for values of $\epsilon \in \{0.10,0.20,0.30,0.40\}$.
The case of $\epsilon=0.10$ yields the heaviest tail for the distribution of $B/A$, where $\mathbb{E}\left[B/A\right]$ is well-defined while the variance does not exist.
As $\epsilon$ increases, the tail of the distribution of $B/A$ becomes less heavy.

The sieve dimension $K$ and the smoothness order $k$ are set to $K=k \in \{4,5\}$.
Recall from the discussion of Assumption \ref{a:h} in Section \ref{sec:a:h} that an admissible choice of the trimming threshold is possible for $k>3$.
We thus consider $K=k \geq 4$.
Note that the true data generating process has the fifth-order polynomial for $m$.
The case of $K=4$, therefore, entails a non-zero approximation error by the fourth-order sieve estimator.
On the other hand, the fifth-order sieve in the case of $K=5$ correctly specifies the fifth-order polynomial $m$.
The purpose of using both these cases, $K=4$ and $K=5$, is to examine the potential effects of sieve approximation error on the bias-corrected trimmed mean estimator $\widehat\theta(h_n)$.

We first check the conclusion of Theorem \ref{theorem:asymptotic_distribution} by simulations with these data generating designs.
Recall that Theorem \ref{theorem:asymptotic_distribution} states that the statistic $\left(\widehat\theta(h_n)-\theta_0\right)/\sqrt{\sigma(h_n)^2/n}$ converges in distribution to a standard normal random variable.
Figure \ref{fig:kernel_density} illustrates kernel density plots of $\left(\widehat\theta(h_n)-\theta_0\right)/\sqrt{\sigma(h_n)^2/n}$  based on 100,000 Monte Carlo iterations under alternative data generating parameters as solid curves.
We also draw the density of the limit standard normal distribution in dashed curves for convenience of making comparisons.
Notice that the kernel densities are reasonably close to the standard normal density in each case.
These results support the statement of Theorem \ref{theorem:asymptotic_distribution}.
In particular, these results demonstrate that the sampling variations in the higher-order derivative estimator $\left(\widehat m^{(1)}(0),...,\widehat m^{(k-1)}(0)\right)$ of the conditional expectation function $m(\cdot)$ are incorporated in the asymptotic property stated in Theorem \ref{theorem:asymptotic_distribution}.

\begin{figure}
	\centering
	\begin{tabular}{cc}
		$K=k=4$ and $\epsilon=0.10$ & $K=k=5$ and $\epsilon=0.10$ \\
		\includegraphics[width=0.45\textwidth]{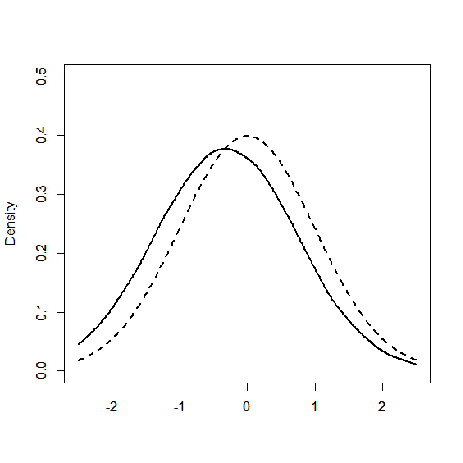} &
		\includegraphics[width=0.45\textwidth]{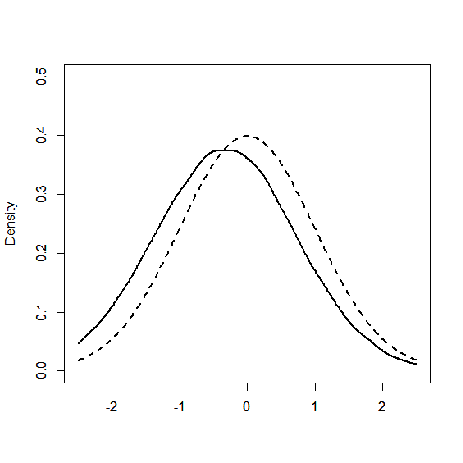} \\
		$K=k=4$ and $\epsilon=0.40$ & $K=k=5$ and $\epsilon=0.40$ \\				
		\includegraphics[width=0.45\textwidth]{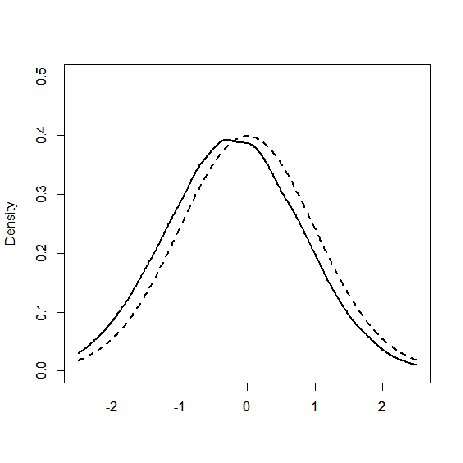} &
		\includegraphics[width=0.45\textwidth]{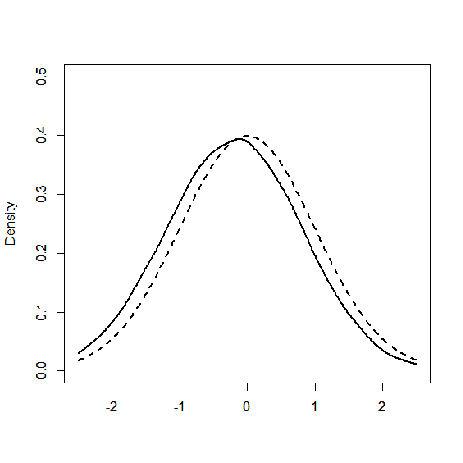}
	\end{tabular}
	\caption{Solid curves depict kernel density plots of the simulated statistics $\left(\widehat\theta(h_n)-\theta_0\right)/\sqrt{\sigma(h_n)^2/n}$. Dashed curves depict the density of the limit standard normal distribution. To top row displays results under $\epsilon = 0.1$ while the bottom row displays results under $\epsilon = 0.4$. The left column displays results under $K=k=4$, while the right column displays results under $K=k=5$. The bandwidth choice follows Silverman's rule.}
	\label{fig:kernel_density}
\end{figure}

We next present estimation and inference results.
Table \ref{tab:simulation} summarizes simulation results based on 100,000 Monte Carlo iterations.
The simulated average, the simulated bias, the simulated root mean squared error (RMSE), and the simulated coverage frequency for the nominal level of 5 percent (95\% Coverage) are reported for each of (I) the untrimmed sample mean $\mathbb{E}_n\left[B/A\right]$, (II) the trimmed mean $\widetilde\theta(h_n)$, (III) the bias-corrected trimmed mean $\widehat\theta(h_n)$ under $K=k=4$, and (IV) the bias-corrected trimmed mean $\widehat\theta(h_n)$ under $K=k=5$.

\begin{table}
	\centering
		\begin{tabular}{lrlccccc}
\hline\hline
$\epsilon$
&&&& 0.100 & 0.200 & 0.300 & 0.400\\
\hline
True $\theta_0$
&&&& 0.409 & 0.400 & 0.392 & 0.385\\
\hline
Average
&  (I)&Sample Mean    && 0.191 & 0.415 & 0.386 & 0.378\\
& (II)&Trimmed Mean   && 0.368 & 0.372 & 0.372 & 0.370\\
&(III)&BCTM ($K=k=4$) &&\cellcolor{gray!15} 0.388 &\cellcolor{gray!15} 0.384 &\cellcolor{gray!15} 0.379 &\cellcolor{gray!15} 0.375\\
& (IV)&BCTM ($K=k=5$) &&\cellcolor{gray!15} 0.388 &\cellcolor{gray!15} 0.384 &\cellcolor{gray!15} 0.379 &\cellcolor{gray!15} 0.375\\
\hline
Bias
&  (I)&Sample Mean    &&-0.218 & 0.015 &-0.006 &-0.006\\
& (II)&Trimmed Mean   &&-0.041 &-0.028 &-0.020 &-0.015\\
&(III)&BCTM ($K=k=4$) &&\cellcolor{gray!15}-0.021 &\cellcolor{gray!15}-0.016 &\cellcolor{gray!15}-0.013 &\cellcolor{gray!15}-0.010\\
& (IV)&BCTM ($K=k=5$) &&\cellcolor{gray!15}-0.021 &\cellcolor{gray!15}-0.016 &\cellcolor{gray!15}-0.012 &\cellcolor{gray!15}-0.010\\
\hline
RMSE
&  (I)&Sample Mean    &&57.638 & 7.248 & 0.721 & 0.161\\
& (II)&Trimmed Mean   && 0.076 & 0.066 & 0.060 & 0.054\\
&(III)&BCTM ($K=k=4$) &&\cellcolor{gray!15} 0.071 &\cellcolor{gray!15} 0.063 &\cellcolor{gray!15} 0.058 &\cellcolor{gray!15} 0.053\\
& (IV)&BCTM ($K=k=5$) &&\cellcolor{gray!15} 0.072 &\cellcolor{gray!15} 0.064 &\cellcolor{gray!15} 0.058 &\cellcolor{gray!15} 0.053\\
\hline
95\% Coverage
&  (I)&Sample Mean    && 0.973 & 0.967 & 0.961 & 0.956\\
& (II)&Trimmed Mean   && 0.898 & 0.924 & 0.933 & 0.939\\
&(III)&BCTM ($K=k=4$) &&\cellcolor{gray!15} 0.927 &\cellcolor{gray!15} 0.938 &\cellcolor{gray!15} 0.941 &\cellcolor{gray!15} 0.944\\
& (IV)&BCTM ($K=k=5$) &&\cellcolor{gray!15} 0.922 &\cellcolor{gray!15} 0.936 &\cellcolor{gray!15} 0.942 &\cellcolor{gray!15} 0.945\\
\hline\hline
		\end{tabular}
	\caption{Simulation results based on 100,000 Monte Carlo iterations. The displayed statistics include the simulated average, the simulated bias, the simulated root mean squared error (RMSE), and the simulated coverage frequency for the nominal level of 5 percent (95\% Coverage). Estimators used in simulations include (I) the untrimmed sample mean $\mathbb{E}_n\left[B/A\right]$, (II) the trimmed mean $\widetilde\theta(h_n)$, (III) the bias-corrected trimmed mean $\widehat\theta(h_n)$ under $K=k=4$, and (IV) the bias-corrected trimmed mean $\widehat\theta(h_n)$ under $K=k=5$.}
	\label{tab:simulation}
\end{table}

Observe that the untrimmed sample mean (I) produces huge values of the root mean squared error, particularly when $\epsilon$ is small, i.e., when the distribution of $B/A$ has the heaviest tail.\footnote{The reported `bias' of the untrimmed sample mean (I) is also large for $\epsilon=0.100$, but this is because the simulated bias (which differs from the `bias' formally defined by $\mathbb{E}[\breve{\theta}]-\mathbb{E}[B/A]$ for an estimator $\breve{\theta}$) suffers from large outliers in the heavy-tailed distribution.}
On the other hand, the trimmed mean (II) and the bias corrected trimmed means, (III) and (IV), achieve much smaller values of the root mean squared error.\footnote{On average, 33 observations (6.6\%) are trimmed for the case of the heaviest tail, i.e., $\epsilon=0.10$.}
Among these three well-performing estimators in terms of the root mean squared error, the bias-corrected estimators, (III) and (IV), further achieve smaller biases, smaller mean squared error, and more accurate coverages than the uncorrected estimator (II).
There seems little difference between the estimator (III) based on $K=k=4$ and the estimator (IV) based on $K=k=5$.
In summary, from these observations, we conclude that the bias-corrected trimmed means perform better than the untrimmed sample mean or the trimmed mean.
The effect of approximation error by the parsimonious specification $K=k=4$ in (III) is negligible compared with the fully specified estimator (IV).

\section{Empirical Illustration}\label{sec:empirical}

We illustrate our robust inference method with an empirical analysis of the dependence of firms on external finance in the U.S. manufacturing industry for the years 2000--2010. 
The external dependence for a firm is defined as a ratio of external borrowing to capital expenditures.
It indicates the extent to which manufacturing firms need external finance. 
This ratio has been used in the economic growth literature -- for example, \cite{RaZi1998} use the external dependence to identify which industrial sectors can benefit more from financial development. 

We use the Compustat database (Fundamentals Annual) for the years 2000--2010. 
We focus on U.S. manufacturing firms excluding the food industry, i.e., Standard Industry Classification (SIC) Code ranging between 2100--3999. 
Following \cite{RaZi1998}, we construct a proxy for the firm's dependence via the ratio of external borrowing to capital expenditures. 
Specifically, the variable $A$ represents the capital expenditures (CAPX), and the variable $B$ represents the capital expenditures (CAPX) minus cash flows from operations (FOPT).\footnote{As in \cite{RaZi1998}, for the firms with the format code 7 (SCF=7), we cannot observe FOPT and therefore impute the sum of IBC, DPC, TXDC, ESUBC, SPPIV, and FOPO instead.} 
The observations with missing $A$ or $B$ are dropped.
The samples consist of sizes $n=$ 2008, 1844, 1771, 1699, 1645, 1583, 1481, 1386, 1289, 1214 and 1179 for the years 2000--2010, respectively.

Figure \ref{fig:fig_mean_i} displays estimates of the mean and median of $B/A$ on the vertical axis plotted against the list of years 2000--2010 on the horizontal axis.
The shades indicate 99\% confidence intervals.
The top left graph shows the result for the mean $\mathbb{E}[B/A]$ using the na\"ive sample mean $\mathbb{E}_n[B/A]$.
The top right graph shows the result for the median using the sample median.
The bottom left graph shows the result for the mean $\mathbb{E}[B/A]$ using the trimmed sample mean $\widetilde\theta(h_n)$ without bias correction.
The bottom right graph shows the result for the mean $\mathbb{E}[B/A]$ using our proposed method of the bias-corrected trimmed sample mean $\widehat\theta(h_n)$.
Here, $h_n$ and $K$ are chosen according to the procedure in Sections \ref{sec:choice_of_h}--\ref{sec:choice_of_K}.
Since the result for $\mathbb{E}_n[B/A]$ has a range different from the others in Figure \ref{fig:fig_mean_i}, we provide a rescaled version of the figure for $\mathbb{E}_n[B/A]$ in Figure \ref{fig:fig_mean_wide}.

\begin{figure}
	\centering
	\begin{tabular}{cc}
		\includegraphics[width=0.450\textwidth]{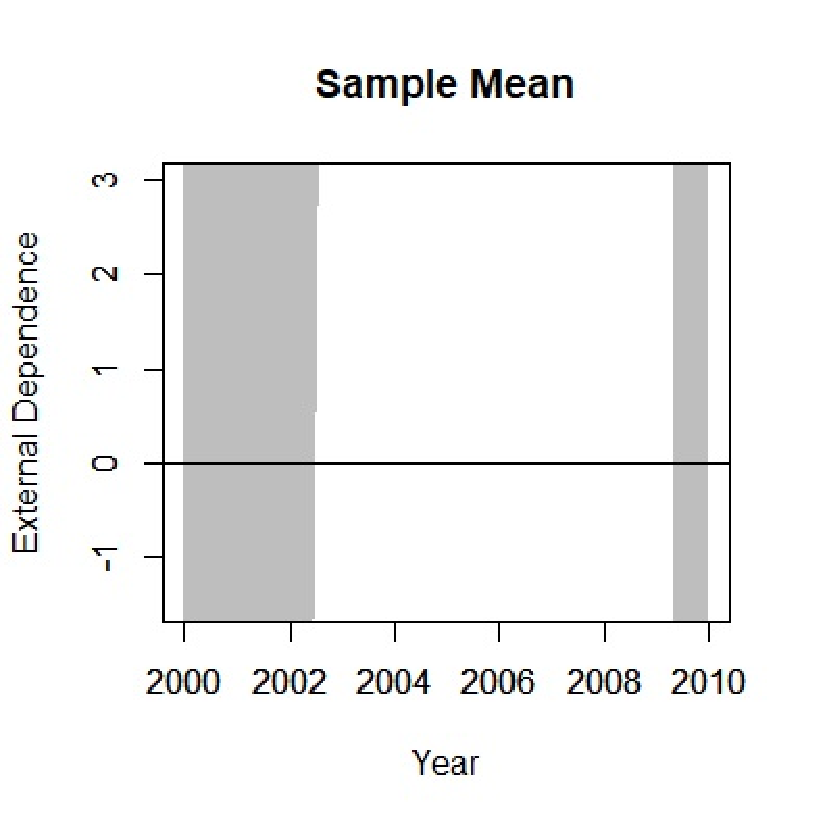} &
		\includegraphics[width=0.450\textwidth]{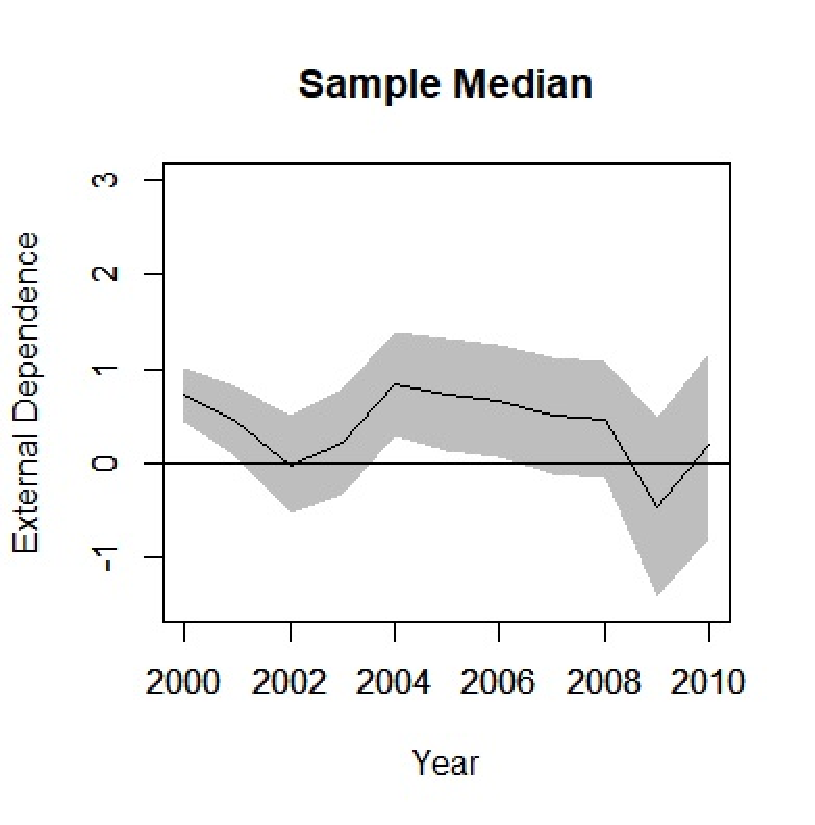} \\
		\includegraphics[width=0.450\textwidth]{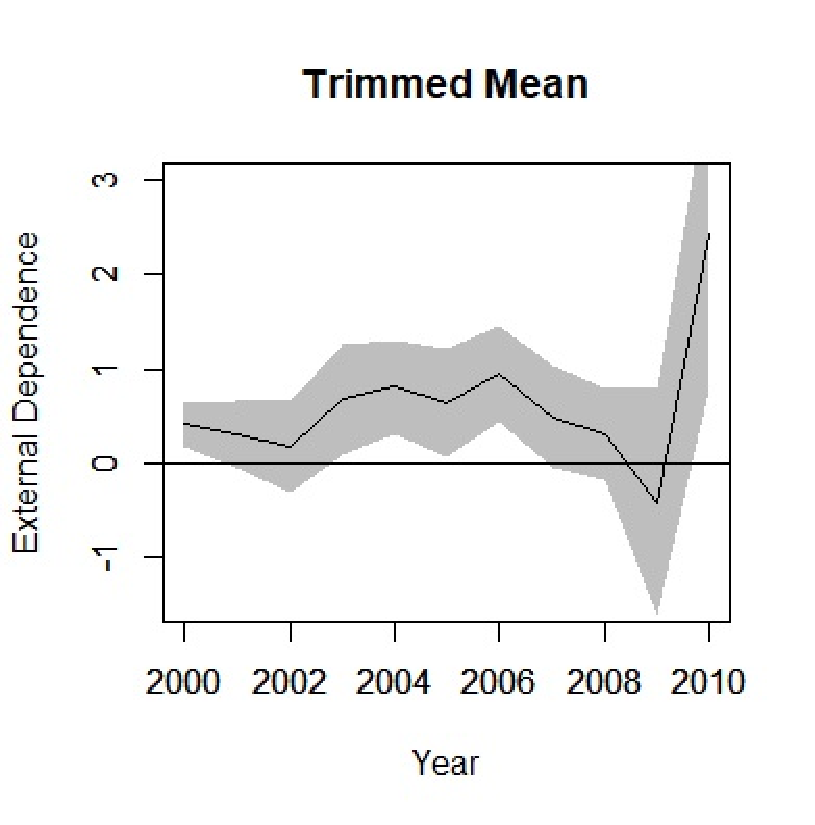} &
		\includegraphics[width=0.450\textwidth]{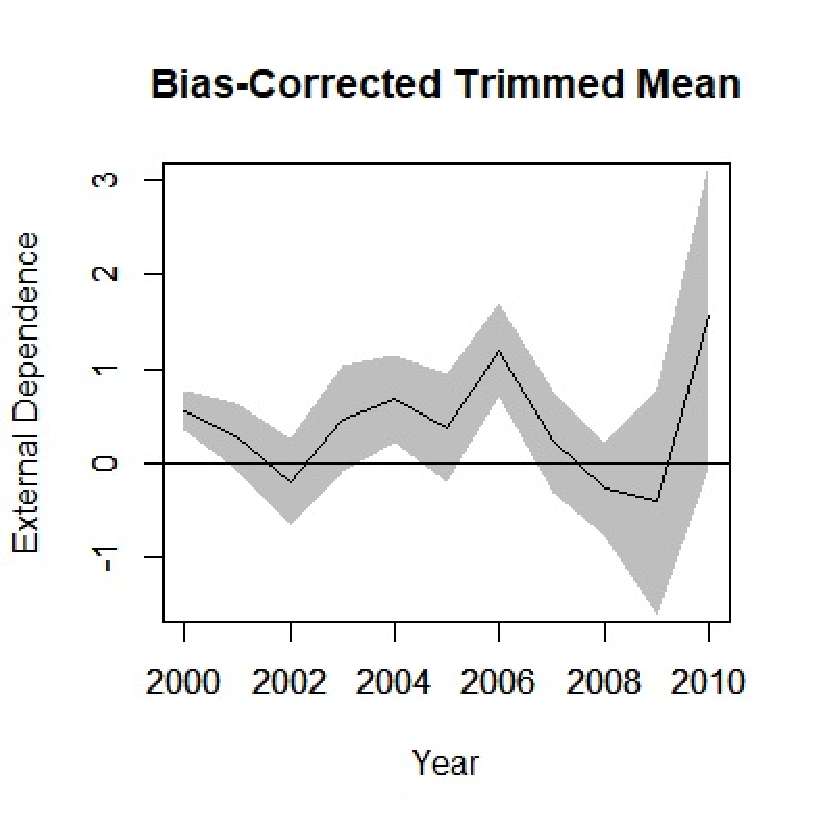}
	\end{tabular}
	\caption{99\% confidence intervals for the mean/median of the external financial dependence in the U.S. manufacturing industry for years 2000--2010. The top left graph shows the result for the mean $\mathbb{E}[B/A]$ using the na\"ive sample mean $\mathbb{E}_n[B/A]$, the top right graph shows the result for the median using the sample median, the bottom left graph shows the result for the mean $\mathbb{E}[B/A]$ using the trimmed sample mean $\widetilde\theta(h_n)$, and the bottom right graph shows the result for the mean $\mathbb{E}[B/A]$ using the bias-corrected trimmed mean $\widehat\theta(h_n)$. $h_n$ and $K$ are chosen according to the procedure in Sections \ref{sec:choice_of_h}--\ref{sec:choice_of_K}.}
	\label{fig:fig_mean_i}
\end{figure}

\begin{figure}
\centering
\includegraphics[width=0.450\textwidth]{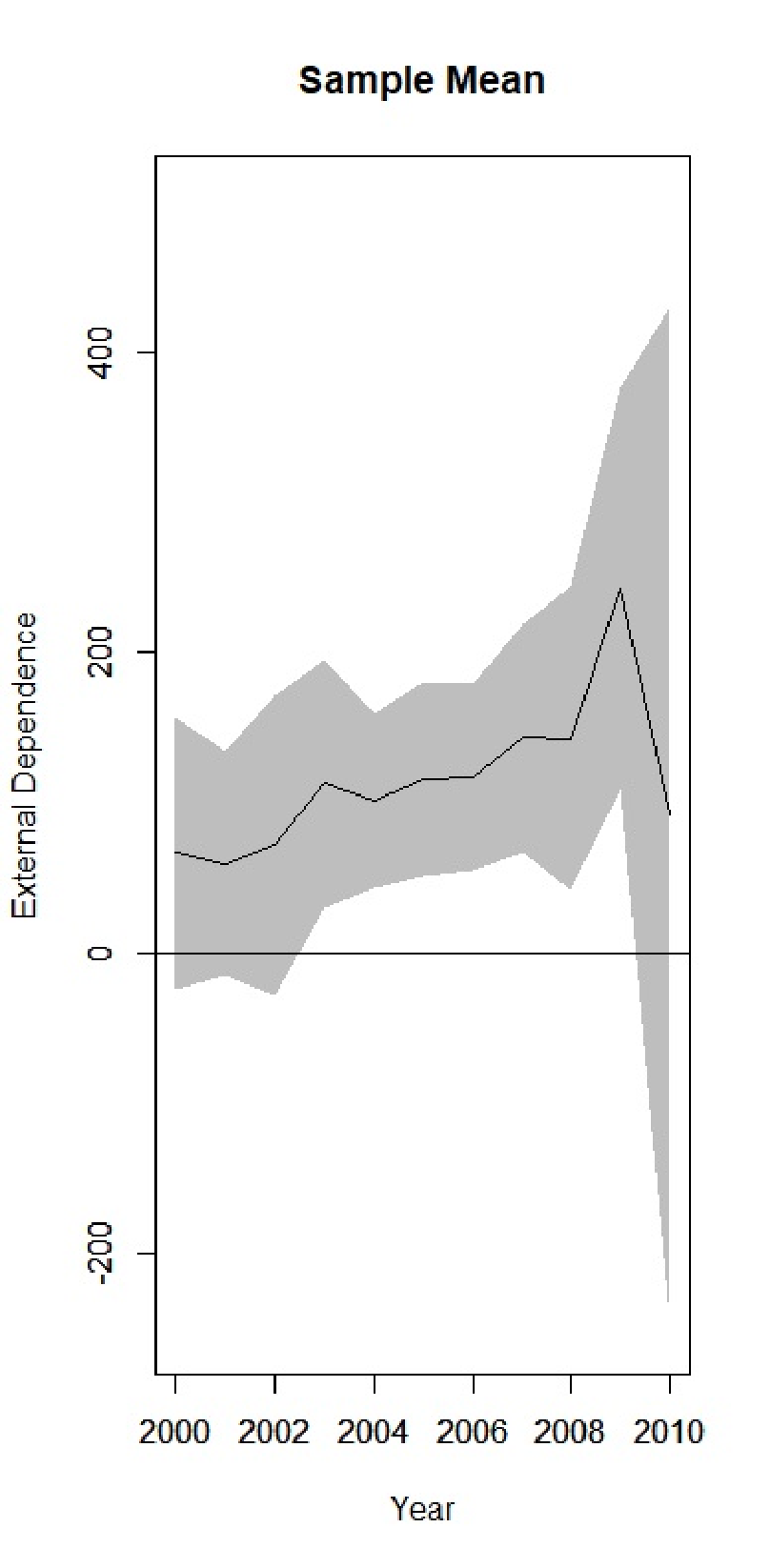} 	
\caption{The top left graph in Figure \ref{fig:fig_mean_i} with a widened range of external dependence. Displayed are the na\"ive sample mean $\mathbb{E}_n[B/A]$ and its 99\% confidence interval. 
}
\label{fig:fig_mean_wide}
\end{figure}

Not surprisingly, the na\"ive sample mean yields huge confidence intervals, and does not provide informative inference results.
The median on the other hand is robust against outliers, and it provides more informative inference results.
The trimmed sample mean without bias correction shows that the mean is in fact rather close to the median than that indicated by the na\"ive sample mean.
The bias-corrected trimmed sample mean exhibits a slightly different trajectory than the trimmed sample mean without bias correction, and shows that external financial dependence is smaller during the times of recessions, such as 2002 and the Great Recession of 2007--2009.

\section{Conclusion}\label{sec:conclusion}

This paper proposes a new method of estimation and inference for moments of ratios of the form $\mathbb{E}[B/A]$. 
By taking advantage of the smooth joint distribution between the numerator $B$ and the denominator $A$, we propose to correct the bias of a denominator-based-trimmed mean of $B/A$ with a wider range of trimming thresholds.
The wider range offers greater insensitivity of valid inference to the exact trimming level.
Precisely, the admissible upper bound of the trimming threshold $h_n$ can be as slow as to the order of $n^{-1/(2(k-1))}$.
Further, the validity of inference under such large trimming thresholds entails that reduced variances can be achieved without incurring non-negligible trimming bias. 
We formally show, as its consequence, that greater smoothness of the conditional expectation function $a \mapsto \mathbb{E}[B|A=a]$ allows for faster convergence rates of the estimator.
We demonstrate that the proposed method is applicable to the cases where the distribution of $B/A$ has a heavy tail and thus a na\"ive sample mean converges slowly.

\newpage
\appendix
\section*{Appendix}
\section{Proofs of the Main Results}\label{sec:proofs_of_the_main_results}

\subsection{Proof of Theorem \ref{lemma:bias}: Bias Characterization}\label{sec:lemma:bias}
\begin{proof}
For $a$ small enough, the Taylor expansion of $m$ around 0 yields
\begin{align}\label{eq:taylor}
m(a) 
&= 
\sum_{\kappa=1}^{k-1} \frac{a^\kappa}{\kappa!} \cdot m^{(\kappa)}(0) + \frac{a^k}{(k-1)!} \cdot \int_0^1 (1-t)^{k-1} m^{(k)}(ta)dt
\end{align}
under Assumption \ref{a:m} (i)--(ii).
We can now rewrite $\theta_0-\theta(h_n) = \mathbb{E}\left[\frac{B}{A} \cdot \mathbbm{1}\{0<A<h_n\}\right]$ as
\begin{align*}
\mathbb{E}\left[\frac{B}{A} \cdot \mathbbm{1}\{ 0<A<h_n \}\right]
&=
\mathbb{E}\left[\frac{m(A)}{A} \cdot \mathbbm{1}\{ 0<A<h_n \}\right]
\\
&=
\sum_{\kappa=1}^{k-1} \frac{\mathbb{E}\left[A^{\kappa-1} \cdot \mathbbm{1}\{ 0<A<h_n \}\right]}{\kappa!} \cdot m^{(\kappa)}(0)\\ &\quad+ \frac{\mathbb{E}\left[A^{k-1} \cdot \int_0^1 (1-t)^{k-1} m^{(k)} (tA) dt \cdot \mathbbm{1}\{ 0<A<h_n \}\right]}{(k-1)!}
\end{align*}
where the first equality follows from the law of iterated expectations, and
the second equality follows from (\ref{eq:taylor}).
Therefore, the claimed equality follows.
\end{proof}

\subsection{Proof of Theorem \ref{theorem:asymptotic_distribution}: Asymptotic Distribution}\label{sec:theorem:asymptotic_distribution}
\begin{proof}
First, note that the equality derived in Lemma \ref{lemma:l4l2} in Appendix \ref{sec:lemma:l4l2} is equivalent to Lyapunov's condition
\begin{align*}
\frac{\mathbb{E}\left[\left(Z(h_n)-\mathbb{E}\left[Z(h_n)\right]\right)^{2+\delta}\right]}{n^{\delta/2} \mathbb{E}\left[\left(Z(h_n)-\mathbb{E}\left[Z(h_n)\right]\right)^2\right]^{(2+\delta)/2}} = o\left(1\right)
\end{align*}
in particular for $\delta = 2 > 0$.
Therefore, by Lyapunov's central limit theorem, we have
\begin{align}
\frac{ (\mathbb{E}_n-\mathbb{E}) \left[Z(h_n)\right] }{\sqrt{Var(Z(h_n))/n}}
\stackrel{d}{\rightarrow} \mathcal{N}(0,1)
\label{eq:theorem:asymptotic_distribution1}
\end{align}
under Assumptions \ref{a:m} (iii)--(iv), \ref{a:h}, and \ref{a:psi} (iii) invoked in Lemma \ref{lemma:l4l2}.
Next, observe that
\begin{align}
\frac{ (\mathbb{E}_n-\mathbb{E}) \left[Z(h_n)\right] }{\sqrt{Var(Z(h_n))/n}}
&=
\frac{ \widehat\theta(h_n) - \theta_0 }{\sqrt{Var(Z(h_n))/n}}
+
\frac{ o_p\left(1\right) }{\sqrt{Var(Z(h_n))}}
=
\frac{ \widehat\theta(h_n) - \theta_0 }{\sqrt{Var(Z(h_n))/n}}
+
o_p(1),
\label{eq:theorem:asymptotic_distribution2}
\end{align}
where the first equality is due to Lemma \ref{lemma:linear_representation} in Appendix \ref{sec:lemma:linear_representation} under Assumptions \ref{a:m} (i)--(iii), \ref{a:h} (i), and \ref{a:psi},
and the second equality holds because $Var(Z(h_n))$ is bounded away from zero.
Applying Slutsky's theorem to (\ref{eq:theorem:asymptotic_distribution1}) and (\ref{eq:theorem:asymptotic_distribution2}) yields the result claimed in the theorem.
\end{proof}

\subsection{Proof of Proposition \ref{proposition:variance_ratio}: Negligible Variance of the Bias Estimator}\label{sec:proposition:variance_ratio}

\begin{proof}
Since $Var(\frac{B}{A} \cdot \mathbbm{1}\{A \geq h_n\})$ is bounded away from zero and $\mathbb{E}[\frac{B}{A} \cdot \mathbbm{1}\{A \geq h_n\}]=O(1)$, it suffices to show that 
\begin{align}
\label{eq:lemma:variance_ratio1}
\frac{\mathbb{E}[(c(h_n)' \psi)^2]}{\mathbb{E}[\frac{B^2}{A^2} \cdot \mathbbm{1}\{A \geq h_n\}]}=o(1).
\end{align}
First, note that
\begin{align*}
\mathbb{E}\left[A^{\kappa-1}\mathbbm{1}\{0<A<h_n\}\right]
=
\alpha \cdot
\int_0^{h_n}a^{\kappa-1}a^{\alpha-1}da
=
\alpha \cdot
\int_0^{h_n}a^{\alpha+\kappa-2}da
=
\frac{\alpha \cdot h_n^{\alpha+\kappa-1}}{\alpha+\kappa-1},
\end{align*}
and therefore
\begin{align}
\mathbb{E}[(c(h_n)' \psi)^2]^{1/2}
&\leq
\sum_{\kappa=1}^{k-1}\mathbb{E}\left[\left(\frac{\mathbb{E}\left[A^{\kappa-1}\mathbbm{1}\{0<A<h_n\}\right]}{\kappa!}\psi_\kappa\right)^2\right]^{1/2}
\notag\\
&=
\sum_{\kappa=1}^{k-1}\mathbb{E}\left[\left(\frac{\alpha \cdot h_n^{\alpha+\kappa-1}}{\kappa!\cdot (\alpha+\kappa-1)}\psi_\kappa\right)^2\right]^{1/2}
\notag\\
&=
\sum_{\kappa=1}^{k-1}\frac{\alpha \cdot h_n^{\alpha}}{\kappa!\cdot (\alpha+\kappa-1)}\mathbb{E}\left[\left(h_n^{\kappa-1}\psi_\kappa\right)^2\right]^{1/2},
\label{eq:lemma:variance_ratio2}
\end{align}
where the inequality is due to Minkowski's inequality.
The rest of the proof branches into two cases depending on $\alpha \geq 1$ or $\alpha < 1$.
\bigskip\\
{\bf Case 1:} $\alpha\geq 1$.
In this case,
$$
\mathbb{E}\left[\left(c(h_n)' \psi\right)^2\right]
=
\sum_{\kappa=1}^{k-1}O\left(h_n^{2\alpha}\right) \cdot \mathbb{E}\left[\left(h_n^{\kappa-1}\psi_\kappa\right)^2\right]=\sum_{\kappa=1}^{k-1}O\left(h_n^{1+\alpha}\right) \cdot \mathbb{E}\left[\left(h_n^{\kappa-1}\psi_\kappa\right)^2\right]=o(1),
$$ 
where
the first equality is due to (\ref{eq:lemma:variance_ratio2}),
the second equality is due to $\alpha \geq 1$, and
the last equality follows from the condition in the statement of the lemma.
This establishes (\ref{eq:lemma:variance_ratio1}).
\bigskip\\
{\bf Case 2:} $\alpha<1$.
The denominator of (\ref{eq:lemma:variance_ratio1}) can be computed as follows.
\begin{align*}
\mathbb{E}\left[\frac{B^2}{A^2} \cdot \mathbbm{1}\{A \geq h_n\}\right]
=&
\mathbb{E}\left[\frac{v(A)}{A^2} \cdot \mathbbm{1}\{A \geq h_n\}\right]\\
=&
\alpha \cdot 
\int_{h_n}^1v(a)a^{\alpha-3}da\\
=&
\alpha \cdot 
v(0) \cdot \int_{h_n}^1a^{\alpha-3}da
+\alpha \cdot 
v^{(1)}(0) \cdot \int_{h_n}^1a^{\alpha-2}da\\&
+
\frac{\alpha \cdot v^{(2)}(0)}{2} \cdot \int_{h_n}^1a^{\alpha-1}da
+
\frac{\alpha}{2} \cdot \int_{h_n}^1 \int_0^1 (1-t)^2 \cdot v^{(3)}(ta) \cdot a^{\alpha}dtda\\
=&
\frac{\alpha\cdot v(0) \cdot (1-h_n^{\alpha-2})}{\alpha-2}
+\frac{\alpha\cdot v^{(1)}(0) \cdot (1-h_n^{\alpha-1})}{\alpha-1}\\&
+\frac{v^{(2)}(0)\cdot (1-h_n^\alpha)}{2}
+\frac{\alpha}{2} \cdot \int_{h_n}^1 \int_0^1 (1-t)^2 \cdot v^{(3)}(ta) \cdot a^{\alpha}dtda\\
=&
-\frac{\alpha\cdot v(0) \cdot h_n^{\alpha-2}}{\alpha-2}
-\frac{\alpha\cdot v^{(1)}(0) \cdot h_n^{\alpha-1}}{\alpha-1}
+\frac{\alpha\cdot v(0)}{\alpha-2}\\&
+\frac{\alpha\cdot v^{(1)}(0)}{\alpha-1} 
+\frac{v^{(2)}(0)}{2}
+\frac{\alpha}{2} \cdot \int_0^1 \int_0^1 (1-t)^2 \cdot v^{(3)}(ta) \cdot a^{\alpha}dtda\\&
+o(1),
\end{align*}
where
the first equality is due to the law of iterated expectations,
the second equality is due to $F_A(a)=a^{\alpha}$,
the third equality is due to Taylor expansion under the condition in the statement of the lemma,
and the remaining lines execute calculations.
Since $v(0) \neq 0$ or $v^{(1)}(0) \neq 0$ by the condition in the statement of the lemma, it follows that
$$
\mathbb{E}\left[\frac{B^2}{A^2} \cdot \mathbbm{1}\{A \geq h_n\}\right]^{-1}=O\left(h_n^{1-\alpha}\right).
$$
By (\ref{eq:lemma:variance_ratio2}), therefore,
$$
\frac{\mathbb{E}[(c(h_n)' \psi)^2]}{\mathbb{E}[\frac{B^2}{A^2} \cdot \mathbbm{1}\{A \geq h_n\}]}=
\sum_{\kappa=1}^{k-1}O(h_n^{1+\alpha})\mathbb{E}\left[\left(h_n^{\kappa-1}\psi_\kappa\right)^2\right]
=
o(1).
$$ 
This establishes (\ref{eq:lemma:variance_ratio1}).
\end{proof}

\subsection{Proof of Proposition \ref{theorem:faster}: Faster Convergence Rates}\label{sec:proof:faster}

\begin{proof}
Note that Assumption \ref{a:h} holds for the given sequence $h_n$ and given $k$. 
By Lemma \ref{lemma:linear_representation}, it suffices to show 
$\mathbb{E}\left[Z(h_n)^2\right]^{1/2}=O(n^{\eta})$.
By the triangular inequality, 
\begin{eqnarray*}
\mathbb{E}\left[Z(h_n)^2\right]^{1/2} 
&\leq&
\mathbb{E}\left[ \frac{B^2}{A^2} \cdot \mathbbm{1}\{A \geq h_n\}\right]^{1/2} +\mathbb{E}\left[( c(h_n)' \psi)^2\right]^{1/2}.
\end{eqnarray*}
Since $\mathbb{E}\left[B^2\cdot \mathbbm{1}\{A \geq h_n\}\right]=O(1)$, we have 
$$
\mathbb{E}\left[ \frac{B^2}{A^2} \cdot \mathbbm{1}\{A \geq h_n\}\right]^{1/2} 
=
O(h_n^{-1})
=
O(n^{\eta}).
$$
Moreover, 
$$
\mathbb{E}\left[( c(h_n)' \psi)^2\right]^{1/2}
\leq
\|c(h_n)\|\mathbb{E}\left[\|\psi\|^2\right]^{1/2}
=
O(n^{\eta}),
$$
where the last equality uses $\|c(h_n)\|=O(1)$.
\end{proof}

\subsection{Proof of Proposition \ref{theorem:data_dependent}: Data-Dependent Trimming Threshold}\label{sec:proof:data_dependent}

\begin{proof}
By Theorem \ref{theorem:asymptotic_distribution} and $1/\sigma(h_n)=O(1)$, it suffices to show $\widehat\theta(\widehat{h}_n) - \widehat\theta(h_n)=o_p(n^{-1/2})$.
By Eq. \eqref{eq:bandwidth_estimation1} and \eqref{eq:bandwidth_estimation2}, it is sufficient to show  
$$
\sup_{\check{h}^1,\check{h}^2\in[h_{n,L},h_{n,U}]}|\widehat\theta(\check{h}^1) - \widehat\theta(\check{h}^2)|=o_p(n^{-1/2}).
$$
By Eq. \eqref{eq:estimator}
\begin{eqnarray*}
\widehat\theta(\check{h}^1)-\widehat\theta(\check{h}^2)
&=&
\mathbb{E}_n\left[\left(\frac{B}{A}- \sum_{\kappa=1}^{k-1} \frac{A^{\kappa-1}}{\kappa!} \cdot \widehat m^{(\kappa)}(0)\right) \cdot \mathbbm{1}\{\check{h}^1\leq A<\check{h}^2\}\right]
\\&&\quad
-\mathbb{E}_n\left[\left(\frac{B}{A}- \sum_{\kappa=1}^{k-1} \frac{A^{\kappa-1}}{\kappa!} \cdot \widehat m^{(\kappa)}(0)\right) \cdot \mathbbm{1}\{\check{h}^2\leq A<\check{h}^1\}\right]
\\
&=&
\mathbb{E}_n\left[\left(\frac{B}{A}- \sum_{\kappa=1}^{k-1} \frac{A^{\kappa-1}}{\kappa!} \cdot  m^{(\kappa)}(0)\right) \cdot \mathbbm{1}\{\check{h}^1\leq A<\check{h}^2\}\right]
\\&&\quad
-\mathbb{E}_n\left[\left(\frac{B}{A}- \sum_{\kappa=1}^{k-1} \frac{A^{\kappa-1}}{\kappa!} \cdot  m^{(\kappa)}(0)\right) \cdot \mathbbm{1}\{\check{h}^2\leq A<\check{h}^1\}\right]
\\&&\quad
-
\sum_{\kappa=1}^{k-1} \mathbb{E}_n\left[\frac{A^{\kappa-1}}{\kappa!}\cdot \mathbbm{1}\{\check{h}^1\leq A<\check{h}^2\}\right] \cdot (\widehat m^{(\kappa)}(0)- m^{(\kappa)}(0)) 
\\&&\quad
+\sum_{\kappa=1}^{k-1}\mathbb{E}_n\left[  \frac{A^{\kappa-1}}{\kappa!} \cdot \mathbbm{1}\{\check{h}^2\leq A<\check{h}^1\}\right] \cdot (\widehat m^{(\kappa)}(0)-m^{(\kappa)}(0)).
\end{eqnarray*}
By Assumption \ref{a:psi}, it follows that
\begin{eqnarray*}
\widehat\theta(\check{h}^1)-\widehat\theta(\check{h}^2)
&=&
\mathbb{E}_n\left[\left(\frac{B}{A}- \sum_{\kappa=1}^{k-1} \frac{A^{\kappa-1}}{\kappa!} \cdot  m^{(\kappa)}(0)\right) \cdot \mathbbm{1}\{\check{h}^1\leq A<\check{h}^2\}\right]
\\&&\quad
-\mathbb{E}_n\left[\left(\frac{B}{A}- \sum_{\kappa=1}^{k-1} \frac{A^{\kappa-1}}{\kappa!} \cdot  m^{(\kappa)}(0)\right) \cdot \mathbbm{1}\{\check{h}^2\leq A<\check{h}^1\}\right]
\\&&\quad
+o_p(n^{-1/2}).
\end{eqnarray*}

First, we are going to show 
$$
\sup_{\check{h}^1,\check{h}^2\in[h_{n,L},h_{n,U}]}\left|\mathbb{E}\left[\left(\frac{B}{A}- \sum_{\kappa=1}^{k-1} \frac{A^{\kappa-1}}{\kappa!} \cdot  m^{(\kappa)}(0)\right)\cdot  \mathbbm{1}\{\check{h}^1\leq A<\check{h}^2\}\right]\right| 
=
o(n^{-1/2}).
$$
By the law of iterated expectations and Eq. \eqref{eq:taylor}, 
\begin{eqnarray*}
&&
\left|\mathbb{E}\left[\left(\frac{B}{A}- \sum_{\kappa=1}^{k-1} \frac{A^{\kappa-1}}{\kappa!} \cdot  m^{(\kappa)}(0)\right)\cdot  \mathbbm{1}\{\check{h}^1\leq A<\check{h}^2\}\right]\right| 
\\
&=&
\left|\mathbb{E}\left[\left(\frac{m(A)}{A}- \sum_{\kappa=1}^{k-1} \frac{A^{\kappa-1}}{\kappa!} \cdot  m^{(\kappa)}(0)\right)\cdot  \mathbbm{1}\{\check{h}^1\leq A<\check{h}^2\}\right]\right| \\ 
&=& 
\left|\mathbb{E}\left[\frac{A^{k-1}}{(k-1)!} \cdot \int_0^1 (1-t)^{k-1} m^{(k)}(tA)dt \cdot  \mathbbm{1}\{\check{h}^1\leq A<\check{h}^2\}\right]\right|. 
\end{eqnarray*}
Therefore, we have 
\begin{eqnarray*}
&&
\left|\mathbb{E}\left[\left(\frac{B}{A}- \sum_{\kappa=1}^{k-1} \frac{A^{\kappa-1}}{\kappa!} \cdot  m^{(\kappa)}(0)\right)\cdot  \mathbbm{1}\{\check{h}^1\leq A<\check{h}^2\}\right]\right| 
\\
&&\quad\leq
\frac{h_n^{k-1} }{(k-1)!} \cdot \sup_{a\leq h}\left|\int_0^1 (1-t)^{k-1} m^{(k)}(ta)dt \right|\cdot \mathbb{E}\left[\mathbbm{1}\{A<\check{h}^2\}\right],
\end{eqnarray*}
and thus 
\begin{eqnarray*}
&&
\sup_{\check{h}^1,\check{h}^2\in[h_{n,L},h_{n,U}]}\left|\mathbb{E}\left[\left(\frac{B}{A}- \sum_{\kappa=1}^{k-1} \frac{A^{\kappa-1}}{\kappa!} \cdot  m^{(\kappa)}(0)\right)\cdot  \mathbbm{1}\{\check{h}^1\leq A<\check{h}^2\}\right]\right|\\
 &&\quad=
O(h_n^{k-1} \mathbb{E}\left[\mathbbm{1}\{A<h_{n,U}\}\right])\\ 
&&\quad=
o(n^{-1/2}),
\end{eqnarray*}
where the first equality is due to Assumption \ref{a:m} (i)--(ii), and the second equality is due to Assumptions \ref{a:m} (iii) and \ref{a:h} (i).

Next, we are going to show 
$$
\sup_{\check{h}^1,\check{h}^2\in[h_{n,L},h_{n,U}]}\left|(\mathbb{E}_n-\mathbb{E})\left[\left(\frac{B}{A}- \sum_{\kappa=1}^{k-1} \frac{A^{\kappa-1}}{\kappa!} \cdot  m^{(\kappa)}(0)\right) \cdot \mathbbm{1}\{\check{h}^1\leq A<\check{h}^2\}\right]\right|=o_p(n^{-1/2}).
$$
Define the function class 
$$
\mathcal{F}=\left\{(A,B)\mapsto\left(\frac{B}{A}- \sum_{\kappa=1}^{k-1} \frac{A^{\kappa-1}}{\kappa!} \cdot  m^{(\kappa)}(0)\right) \cdot \mathbbm{1}\{\check{h}^1\leq A<\check{h}^2\}: \left(\check{h}^1,\check{h}^2\right)\in[h_{n,L},h_{n,U}]^2\right\}
$$ 
with the envelope $F$ given by 
$$
F\left(A,B\right)=\left|\frac{B}{A}- \sum_{\kappa=1}^{k-1} \frac{A^{\kappa-1}}{\kappa!} \cdot  m^{(\kappa)}(0)\right|\cdot \mathbbm{1}\{h_{n,L}\leq A<h_{n,U}\}.
$$ 
Note that the function class $\mathcal{F}$ is of VC type with a finite VC dimension. 
Moreover, note that 
\begin{align*}
\mathbb{E}[F\left(A,B\right)^2]^{1/2}&=o(1)
\qquad\text{and}
\\
\mathbb{E}\left[\max_{i=1,\ldots,n}F\left(A_i,B_i\right)^2\right]^{1/2}&=o(n^{1/2}),
\end{align*}
because 
\begin{align*}
\mathbb{E}[F\left(A,B\right)^2]^{1/2}
&\leq
\frac{\mathbb{E}\left[B^2\cdot \mathbbm{1}\{h_{n,L}\leq A<h_{n,U}\}\right]^{1/2}}{h_{n,L}}
\\&+ \sum_{\kappa=1}^{k-1}\frac{h_{n,U}^{\kappa-1}}{\kappa!} \cdot  m^{(\kappa)}(0)
\mathbb{E}\left[  \mathbbm{1}\{A<h_{n,U}\}\right]^{1/2}
\\
&=o(1)
\qquad\text{and}
\\
\mathbb{E}\left[\max_{i=1,\ldots,n}F\left(A_i,B_i\right)^2\right]^{1/2}
&\leq
n^{1/2}\mathbb{E}\left[F\left(A_i,B_i\right)^2\right]^{1/2}
\\
&=
\frac{n^{1/2}\mathbb{E}\left[B^2\cdot \mathbbm{1}\{h_{n,L}\leq A<h_{n,U}\}\right]^{1/2}}{h_{n,L}}
\\&+n^{1/2}\sum_{\kappa=1}^{k-1}\frac{h_{n,U}^{\kappa-1}}{\kappa!} \cdot  m^{(\kappa)}(0)\mathbb{E}\left[\mathbbm{1}\{A<h_{n,U}\}\right]^{1/2}
\\
&=
o(n^{1/2})
\end{align*}
under Eq. \eqref{eq:h_{n,U}_small} and \eqref{eq:h_{n,U}_h_{n,L}_veryysmall}. 
Applying Corollary 5.1 of \cite{chernozhukov2014gaussian} with all these characteristics,  
we obtain 
\begin{align*}
&
\mathbb{E}\left[\sup_{\check{h}^1,\check{h}^2\in[h_{n,L},h_{n,U}]}\left|(\mathbb{E}_n-\mathbb{E})\left[\left(\frac{B}{A}- \sum_{\kappa=1}^{k-1} \frac{A^{\kappa-1}}{\kappa!} \cdot  m^{(\kappa)}(0)\right) \cdot \mathbbm{1}\{\check{h}^1\leq A<\check{h}^2\}\right]\right|\right]
\\
&
=
n^{-1/2}O\left(\sqrt{2\mathbb{E}[F\left(A,B\right)^2]}+\frac{2\mathbb{E}\left[\max_{i=1,\ldots,n}F\left(A_i,B_i\right)^2\right]^{1/2}}{\sqrt{n}} \right)
\\
&
=o(n^{-1/2}).
\end{align*}
It completes the proof of this theorem. 
\end{proof}

\subsection{Proof of Corollary \ref{corollary:asymptotic_distribution}: Asymptotic Distribution with Estimated Variance}\label{sec:corollary:asymptotic_distribution}

\begin{proof}
The statement follows from Theorem \ref{theorem:asymptotic_distribution}, Lemma \ref{lemma:variance_estimation}, and Lemma \ref{lemma:estimated_influence_function_in_variance_estimation}.
\end{proof}

\section{Auxiliary Lemmas for the Main Results}\label{sec:auxiliary_lemmas_for_the_main_results}

\subsection{Linear Representation}\label{sec:lemma:linear_representation}

\begin{lemma}[Linear Representation]\label{lemma:linear_representation}
If Assumptions \ref{a:m} (i)--(iii), \ref{a:h} (i), and \ref{a:psi} are satisfied, then
\begin{align*}
\widehat\theta(h_n) - \theta_0 = (\mathbb{E}_n-\mathbb{E})\left[Z(h_n)\right] + o_p(n^{-1/2}).
\end{align*}
\end{lemma}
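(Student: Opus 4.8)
The plan is to combine the exact bias identity of Lemma~\ref{lemma:bias} with the asymptotic linearization in Assumption~\ref{a:psi} and show that every term other than $(E_n-E)[Z_h]$ is $o_p(n^{-1/2})$. Write $\widehat M = (\widehat m^{(1)}(0),\dots,\widehat m^{(k-1)}(0))'$ and $M = (m^{(1)}(0),\dots,m^{(k-1)}(0))'$, and note that the bias estimator in \eqref{eq:bias_estimator} is $\widehat B_h = -\widehat c_h'\widehat M$ with $\widehat c_h$ the sample vector from Section~\ref{sec:overview} and $c_h$ its population analogue in \eqref{eq:c}. Lemma~\ref{lemma:bias} gives $\theta-\theta_h = c_h'M + R_k$, where $R_k$ is the integral remainder term and $\theta_h = E[\tfrac{B}{A}\mathbbm{1}\{A\ge h\}]$. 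Substituting $\widehat B_h$ and this expansion of $\theta$, together with $\widehat\theta_h-\theta_h=(E_n-E)[\tfrac{B}{A}\mathbbm{1}\{A\ge h\}]$, and regrouping gives the identity
\[
\widehat\theta_h - \widehat B_h - \theta = (E_n-E)\Big[\tfrac{B}{A}\mathbbm{1}\{A\ge h\}\Big] + \widehat c_h'(\widehat M - M) + (\widehat c_h - c_h)'M - R_k,
\]
so that, recalling $Z_h$ in \eqref{eq:z} and $E[\psi]=0$, it suffices to show that the last three terms equal $c_h'(E_n-E)[\psi] + o_p(n^{-1/2})$.

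First I would handle the estimation term $\widehat c_h'(\widehat M - M)$. Dividing the $\kappa$-th coordinate of Assumption~\ref{a:psi}(i) by $h^{\kappa-1}$ gives $\widehat m^{(\kappa)}(0) - m^{(\kappa)}(0) = E_n[\psi_\kappa] + h^{-(\kappa-1)}o_p(n^{-1/2})$, while the deterministic bound $\widehat c_{h,\kappa} = E_n[A^{\kappa-1}\mathbbm{1}\{A<h\}]/\kappa! \le h^{\kappa-1}/\kappa!$ shows that the inflated remainder is exactly absorbed, so $\widehat c_h'(\widehat M - M) = \widehat c_h'E_n[\psi] + o_p(n^{-1/2})$. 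To pass from $\widehat c_h$ to $c_h$ I would use $\Var(A^{\kappa-1}\mathbbm{1}\{A<h\}) \le h^{2(\kappa-1)}\Pr(A<h)$ together with $\Pr(A<h)=o(1)$ (from Assumptions~\ref{a:m}(iii) and \ref{a:h}(i)) to obtain $(\widehat c_h - c_h)_\kappa = o_p(n^{-1/2}h^{\kappa-1})$; pairing this with $E_n[\psi_\kappa] = O_p(n^{-1/4}h^{-(\kappa-1)})$, which follows from the fourth-moment control of Assumption~\ref{a:psi}(iii), yields $(\widehat c_h - c_h)'E_n[\psi] = O_p(n^{-3/4}) = o_p(n^{-1/2})$ and hence $\widehat c_h'E_n[\psi] = c_h'E_n[\psi] + o_p(n^{-1/2})$.

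The remaining two terms are then routine. The same variance bound gives $(\widehat c_h - c_h)'M = \sum_{\kappa} o_p(n^{-1/2}h^{\kappa-1})\cdot O(1) = o_p(n^{-1/2})$, where retaining the $\Pr(A<h)=o(1)$ factor is precisely what upgrades the a~priori $O_p(n^{-1/2})$ of the $\kappa=1$ coordinate to $o_p(n^{-1/2})$. For the remainder, boundedness of $m^{(k)}$ near $0$ (Assumption~\ref{a:m}(ii)) and $E[A^{k-1}\mathbbm{1}\{A<h\}] \le h^{k-1}\Pr(A<h)$ give $\abs{R_k} = O(h^{k-1}\Pr(A<h)) = o(n^{-1/2})$, since Assumption~\ref{a:h}(ii) supplies $h^{k-1}=O(n^{-1/2})$ and $\Pr(A<h)=o(1)$. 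Collecting the estimates and invoking $E[\psi]=0$ turns $c_h'E_n[\psi]$ into $c_h'(E_n-E)[\psi]$ and completes the reduction to $(E_n-E)[Z_h]+o_p(n^{-1/2})$.

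I expect the main obstacle to be the estimation term $\widehat c_h'(\widehat M - M)$. The individual derivative estimators are slow: their linearization error is only $o_p(n^{-1/2}h^{-(\kappa-1)})$, which diverges as $h\to 0$, so the conclusion can hold only because the trimming coefficients $\widehat c_{h,\kappa}$ shrink at the matching rate $h^{\kappa-1}$. The delicacy is exactly that the Hadamard weights $(1,\dots,h^{k-2})$ built into Assumption~\ref{a:psi}(i) and the fourth-moment normalization in \ref{a:psi}(iii) are calibrated to produce these cancellations; verifying the claim therefore reduces to matching each power of $h$ against the rate restrictions in Assumptions~\ref{a:h} and \ref{a:psi}, rather than to any single difficult estimate.
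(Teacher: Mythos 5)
Your proof is correct and follows essentially the same route as the paper's: it combines the exact bias identity of Lemma \ref{lemma:bias} with the linearization in Assumption \ref{a:psi}(i), bounds the Taylor remainder via $h^{k-1}=O(n^{-1/2})$ and $F_A(h)=o(1)$, and controls the sampling error in the trimming coefficients by the variance bound $h^{2(\kappa-1)}\Pr(A<h)$. The only difference is cosmetic — you split $\widehat c_h'\widehat M - c_h'M$ as $\widehat c_h'(\widehat M - M) + (\widehat c_h - c_h)'M$ whereas the paper uses the other grouping $c_h'(\widehat M - M) + (\widehat c_h - c_h)'\widehat M$ — and both yield the same estimates.
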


\begin{proof}
First, we write
\begin{align*}
(\mathbb{E}_n-\mathbb{E})\left[Z(h_n)\right]
&=
\left(\widetilde\theta(h_n) + c(h_n)' \mathbb{E}_n\left[\psi\right]\right)
-
\left(\theta(h_n) + c(h_n)' \mathbb{E}\left[\psi\right]\right)
\end{align*}
by the definitions of $\theta(h_n)$ and $\widetilde\theta(h_n)$.
Rearranging terms and applying Assumption \ref{a:psi} (ii), we in turn obtain
\begin{align*}
\widetilde\theta(h_n) + \left(c(h_n)' \mathbb{E}_n\left[\psi\right] + \theta_0 - \theta(h_n) \right) - \theta_0
=
(\mathbb{E}_n-\mathbb{E})\left[Z(h_n)\right].
\end{align*}
Thus, in order to show the equality claimed in the lemma, it suffices to show
\begin{align}
\label{eq:lemma:lr:1}
c(h_n)' \mathbb{E}_n\left[\psi\right] + \theta_0 - \theta(h_n) = -\widehat \lambda(h_n) + o_p(n^{-1/2}).
\end{align}

By the definition of $\widehat \lambda(h_n)$ given in (\ref{eq:bias_estimator}) and by applying Theorem \ref{lemma:bias} under Assumption \ref{a:m} (i)--(ii), we obtain
\begin{align}
\theta_0 - \theta(h_n) + \widehat \lambda(h_n)
&=
-\sum_{\kappa=1}^{k-1} \frac{\mathbb{E}\left[A^{\kappa-1} \cdot \mathbbm{1}\{0<A<h_n\}\right]}{\kappa!} \cdot \left( \widehat m^{(\kappa)}(0) - m^{(\kappa)}(0) \right)
\label{eq:lemma:lr:2}
\\
&\quad-
\sum_{\kappa=1}^{k-1} \frac{(\mathbb{E}_n-\mathbb{E})\left[A^{\kappa-1} \cdot \mathbbm{1}\{0<A<h_n\}\right]}{\kappa!} \cdot \widehat m^{(\kappa)}(0)
\label{eq:lemma:lr:3}
\\
&\quad+
\frac{\mathbb{E}\left[ A^{k-1} \cdot \mathbbm{1}\{0<A<h_n\} \cdot \int_0^1 (1-t)^{k-1} m^{(k)}(tA)dt \right]}{(k-1)!}.
\label{eq:lemma:lr:4}
\end{align}
The first term (\ref{eq:lemma:lr:2}) can be rewritten as
\begin{align}
&- \sum_{\kappa=1}^{k-1} \frac{\mathbb{E}\left[A^{\kappa-1} \cdot \mathbbm{1}\{0<A<h_n\}\right]}{\kappa!} \cdot \left( \widehat m^{(\kappa)}(0) - m^{(\kappa)}(0) \right)
\nonumber\\
=&
- \sum_{\kappa=1}^{k-1} \frac{\mathbb{E}\left[A^{\kappa-1} \cdot \mathbbm{1}\{0<A<h_n\}\right]}{\kappa!} \cdot \mathbb{E}_n\left[\psi_\kappa\right] + o_p\left(n^{-1/2}\right)
=
- c(h_n)' \mathbb{E}_n\left[\psi\right] + o_p\left( n^{-1/2} \right)
\label{eq:lemma:lr:2prime}
\end{align}
where the first equality is due to Assumption \ref{a:psi} (i), and the second equality is due to the definition of $c(h_n)$ in (\ref{eq:c}).
To evaluate the second term (\ref{eq:lemma:lr:3}), note that 
\begin{align}
(\mathbb{E}_n-\mathbb{E})\left[A^{\kappa-1} \cdot \mathbbm{1}\{0<A<h_n\}\right] = O_p\left(n^{-1/2}h_n^{\kappa-1}\mathbb{E}[\mathbbm{1}\{0<A<h_n\}]\right)
\label{eq:lemma:lr:3prime1}
\end{align}
for each $\kappa \in \{1,...,k-1\}$, and
\begin{align}
\widehat m^{(\kappa)}(0) 
= m^{(\kappa)}(0) + \mathbb{E}_n\left[\psi_\kappa\right] + o_p\left(n^{-1/2}h_n^{-\kappa+1}\right)
= O_p(\max\{1, n^{-1/4}h_n^{-\kappa+1}\})
\label{eq:lemma:lr:3prime2}
\end{align}
for each $\kappa \in \{1,...,k-1\}$ by Assumptions \ref{a:m} (i)--(ii) and \ref{a:psi}.
(Note that Assumption \ref{a:psi} (ii)--(iii) implies $\mathbb{E}_n\left[\psi_\kappa\right] = O_p\left(n^{-1/4}h_n^{-\kappa+1}\right)$.)
Equations (\ref{eq:lemma:lr:3prime1}) and (\ref{eq:lemma:lr:3prime2}) together yield
\begin{align}
-\sum_{\kappa=1}^{k-1} \frac{(\mathbb{E}_n-\mathbb{E})\left[A^{\kappa-1} \cdot \mathbbm{1}\{0<A<h_n\}\right]}{\kappa!} \cdot \widehat m^{(\kappa)}(0) = o_p\left(n^{-1/2}\right)
\label{eq:lemma:lr:3prime}
\end{align}
under Assumptions \ref{a:m} (iii) and \ref{a:h} (i).
(Note that Assumptions \ref{a:m} (iii) and \ref{a:h} (i) imply $\mathbb{E}[\mathbbm{1}\{0<A<h_n\}]=o(1)$ by the dominated convergence theorem.)
The third term (\ref{eq:lemma:lr:4}) is
\begin{eqnarray}
\frac{\mathbb{E}\left[ A^{k-1} \cdot \mathbbm{1}\{0<A<h_n\} \cdot \int_0^1 (1-t)^{k-1} m^{(k)}(tA)dt \right]}{(k-1)!}
&=&
O\left(h_n^{k-1} \cdot \mathbb{E}[\mathbbm{1}\{0<A<h_n\}]\right)
\nonumber\\
&=&
o\left(n^{-1/2}\right),
\label{eq:lemma:lr:4prime}
\end{eqnarray}
where the first equality is due to Assumption \ref{a:m} (i)--(ii), and the second equality is due to Assumptions \ref{a:m} (iii) and \ref{a:h} (i).
Substituting (\ref{eq:lemma:lr:2prime}), (\ref{eq:lemma:lr:3prime}), and (\ref{eq:lemma:lr:4prime}) in (\ref{eq:lemma:lr:2}), (\ref{eq:lemma:lr:3}), and (\ref{eq:lemma:lr:4}), respectively, we obtain (\ref{eq:lemma:lr:1}) as desired.
\end{proof}

\begin{remark}
Note that our assumptions are designed to imply remaining biases of order $o(n^{-1/2})$, e.g., equation (\ref{eq:lemma:lr:4prime}) in the proof of Lemma \ref{lemma:linear_representation} above.
This order of the remaining biases guarantees an asymptotically negligible approximation error for the case of finite variance where the convergence rate is the fastest ever possible.
With this said, we remark that these assumptions can be stronger than necessary in case of infinite variance, as the convergence rate of the trimmed mean may be much slower than the usual $\sqrt{n}$-rate \citep[cf.][]{khan/tamer:2010}.
\end{remark}

\subsection{$L^4$-$L^2$ Ratio}\label{sec:lemma:l4l2}

\begin{lemma}[$L^4$-$L^2$ Ratio]\label{lemma:l4l2}
If Assumptions \ref{a:m} (iii)--(iv), \ref{a:h}, and \ref{a:psi} (iii) are satisfied, then
\begin{align*}
\frac{\mathbb{E}\left[\left(Z(h_n)-\mathbb{E}\left[Z(h_n)\right]\right)^4\right]^{1/4}}{\mathbb{E}\left[\left(Z(h_n)-\mathbb{E}\left[Z(h_n)\right]\right)^2\right]^{1/2}} = o\left(n^{1/4}\right),
\end{align*}
provided that $Var(Z(h_n))$ is bounded away from zero.
\end{lemma}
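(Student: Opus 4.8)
The plan is to reduce the claim to a bound on the numerator alone. Since the denominator equals $\sqrt{Var(Z_h)}$ and is bounded away from zero by hypothesis, say $Var(Z_h)\ge c_0>0$, it suffices to show $E\left[\left(Z_h-E[Z_h]\right)^4\right]^{1/4}=o(n^{1/4})$; dividing by $\sqrt{c_0}$ then yields the stated $o(n^{1/4})$ rate for the ratio. By Jensen's inequality $\abs{E[Z_h]}\le E[Z_h^4]^{1/4}$, and so Minkowski's inequality gives $E\left[\left(Z_h-E[Z_h]\right)^4\right]^{1/4}\le 2\,E[Z_h^4]^{1/4}$. Hence it is enough to prove $E[Z_h^4]^{1/4}=o(n^{1/4})$.

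Next I would split $Z_h=\frac{B}{A}\cdot\mathbbm{1}\{A\ge h\}+c_h'\psi$ and bound the two pieces separately via Minkowski. For the first piece the trimming indicator is exactly what renders the (possibly heavy-tailed) ratio integrable: on $\{A\ge h\}$ we have $A^{-4}\le h^{-4}$, so $E\left[\left(\frac{B}{A}\cdot\mathbbm{1}\{A\ge h\}\right)^4\right]\le h^{-4}E[B^4]$, and Assumption \ref{a:m} (iv) then gives $E\left[\left(\frac{B}{A}\cdot\mathbbm{1}\{A\ge h\}\right)^4\right]^{1/4}\le h^{-1}E[B^4]^{1/4}=O(h^{-1})$. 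Assumption \ref{a:h} (iii), namely $n^{-1}h^{-4}=o(1)$, is precisely $h^{-1}=o(n^{1/4})$, so this piece is $o(n^{1/4})$.

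For the second piece I would expand $c_h'\psi=\sum_{\kappa=1}^{k-1}c_{h,\kappa}\psi_\kappa$ with $c_{h,\kappa}=E[A^{\kappa-1}\cdot\mathbbm{1}\{A<h\}]/\kappa!$ and apply Minkowski term by term. Since $A^{\kappa-1}\le h^{\kappa-1}$ on $\{A<h\}$, we have $c_{h,\kappa}\le h^{\kappa-1}F_A(h)/\kappa!$, so that $E\left[(c_h'\psi)^4\right]^{1/4}\le F_A(h)\sum_{\kappa=1}^{k-1}\frac{1}{\kappa!}\,h^{\kappa-1}E[\psi_\kappa^4]^{1/4}$. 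Assumption \ref{a:psi} (iii) supplies the rate $h^{\kappa-1}E[\psi_\kappa^4]^{1/4}=O(n^{1/4})$, so the sum is $O(n^{1/4})$; the decisive gain is the extra factor $F_A(h)=o(1)$, which holds because $A$ has no mass at $0$ (Assumption \ref{a:m} (iii)) and $h=o(1)$ (Assumption \ref{a:h} (i)), using right-continuity of $F_A$. This upgrades the bound from $O(n^{1/4})$ to $o(n^{1/4})$.

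Combining the two pieces gives $E[Z_h^4]^{1/4}=o(n^{1/4})$, hence the numerator is $o(n^{1/4})$ and the ratio is $o(n^{1/4})$ after dividing by the bounded-below denominator. The main obstacle is the $c_h'\psi$ term: the rate furnished by Assumption \ref{a:psi} (iii) alone only yields $O(n^{1/4})$, which is not enough, so the argument must extract the additional vanishing factor $F_A(h)$ hidden in the trimming weights $c_{h,\kappa}$ — without it one would obtain only that the ratio is $O(n^{1/4})$ rather than $o(n^{1/4})$. By contrast, the $\frac{B}{A}\cdot\mathbbm{1}\{A\ge h\}$ term is comparatively routine once one observes that the indicator converts the (possibly non-existent) fourth moment of $B/A$ into the finite quantity $h^{-4}E[B^4]$.
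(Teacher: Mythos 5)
Your proposal is correct and follows essentially the same route as the paper's proof: Minkowski to split $Z_h$ into the trimmed ratio and the $c_h'\psi$ correction, the bound $h^{-1}E[B^4]^{1/4}$ with Assumption \ref{a:h} (iii) for the first piece, and extraction of the vanishing factor $F_A(h)$ from the trimming weights together with Assumption \ref{a:psi} (iii) for the second. The only cosmetic difference is that you centralize via Jensen--Minkowski to get the constant $2$ where the paper uses the bound $E[(Z_h-E[Z_h])^4]\leq 16(E[Z_h^4]+E[Z_h]^4)$; both are fine.
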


\begin{proof}
Observe that
\begin{align}
n^{-1/4} \mathbb{E}\left[Z(h_n)^4\right]^{1/4} 
&\leq
n^{-1/4} \mathbb{E}\left[\frac{B^4}{A^4} \cdot \mathbbm{1}\{A \geq h_n\}\right]^{1/4} + n^{-1/4} \mathbb{E}\left[(c(h_n)' \psi)^4\right]^{1/4} 
\nonumber\\
&\leq
n^{-1/4} h_n^{-1} \cdot \mathbb{E}\left[B^4\right]^{1/4} + n^{-1/4} \mathbb{E}\left[(c(h_n)' \psi)^4\right]^{1/4} 
\nonumber\\
&= O\left(n^{-1/4}h_n^{-1}\right) + O\left(\mathbb{E}[\mathbbm{1}\{0<A<h_n\}]\right)
= o\left(1\right),
\label{eq:lemma:l4l2}
\end{align}
where the first inequality is due to Minkowski inequality with the definition of $Z(h_n)$ given in (\ref{eq:z}),
the first equality is due to Assumptions \ref{a:m} (iv) and \ref{a:psi} (iii),
and the last equality is due to Assumptions \ref{a:m} (iii) and \ref{a:h}.
(Note that Assumptions \ref{a:m} (iii) and \ref{a:h} (i) imply $\mathbb{E}[\mathbbm{1}\{0<A<h_n\}]=o(1)$ by the dominated convergence theorem.)
Since
$
\mathbb{E}\left[\left(Z(h_n)-\mathbb{E}\left[Z(h_n)\right]\right)^4\right]
\leq
16 \left( \mathbb{E}\left[Z(h_n)^4\right] + \mathbb{E}\left[Z(h_n)\right]^4 \right),
$
(\ref{eq:lemma:l4l2}) implies
$
n^{-1} \mathbb{E}\left[\left(Z(h_n)-\mathbb{E}\left[Z(h_n)\right]\right)^4\right]
=
o(1).
$
Since $Var(Z(h_n))$ is bounded away from zero, the equality claimed in the lemma follows.
\end{proof}

\subsection{Variance Estimation}\label{sec:lemma:variance_estimation}

\begin{lemma}\label{lemma:variance_estimation}
If Assumptions \ref{a:m} (iii)--(iv), \ref{a:h}, and \ref{a:psi} (iii) are satisfied, then
$$
\left|\frac{\mathbb{E}_n[(Z(h_n)-\mathbb{E}_n[Z(h_n)])^2]}{Var(Z(h_n))}-1\right|
=
o_p(1)
$$
provided that $Var(Z(h_n))$ is bounded away from zero.
\end{lemma}
\begin{proof}
First, note that we have
\begin{align*}
Var\left((\mathbb{E}_n-\mathbb{E})\left[\frac{(Z(h_n)-\mathbb{E}[Z(h_n)])^2}{Var(Z(h_n))}\right]\right)
=&
\frac{Var((Z(h_n)-\mathbb{E}[Z(h_n)])^2)}{n\cdot Var(Z(h_n))^2}\\
=&
\frac{\mathbb{E}[(Z(h_n)-\mathbb{E}[Z(h_n)])^4]-\mathbb{E}[(Z(h_n)-\mathbb{E}[Z(h_n)])^2]^2}{n\cdot Var(Z(h_n))^2}\\
=&
\frac{\mathbb{E}[(Z(h_n)-\mathbb{E}[Z(h_n)])^4]}{n \cdot \mathbb{E}[(Z(h_n)-\mathbb{E}[Z(h_n)])^2]^2}-\frac{1}{n}
=
o(1),
\end{align*}
where the last equality is due to Lemma \ref{lemma:l4l2} under Assumptions \ref{a:m} (iii)--(iv), \ref{a:h}, and \ref{a:psi} (iii), since $Var(Z(h_n))$ is bounded away from zero.
Therefore,
\begin{align}
(\mathbb{E}_n-\mathbb{E})\left[\frac{(Z(h_n)-\mathbb{E}[Z(h_n)])^2}{Var(Z(h_n))}\right]
=
o_p(1).
\label{eq:lemma:variance_estimation1}
\end{align}
Second, note that we have
\begin{align*}
Var\left(\mathbb{E}_n\left[\frac{Z(h_n)-\mathbb{E}[Z(h_n)]}{\sqrt{Var(Z(h_n))}}\right]\right)
=
\frac{Var\left(Z(h_n)-\mathbb{E}[Z(h_n)]\right)}{n \cdot Var(Z(h_n))}
=
\frac{1}{n}.
\end{align*}
Therefore,
\begin{align}
\mathbb{E}_n\left[\frac{Z(h_n)-\mathbb{E}[Z(h_n)]}{\sqrt{Var(Z(h_n))}}\right]
=
O_p\left(n^{-1/2}\right).
\label{eq:lemma:variance_estimation2}
\end{align}
Third, we have
\begin{align}
\mathbb{E}_n[(Z(h_n)-\mathbb{E}_n[Z(h_n)])^2]
=&
\mathbb{E}_n[(Z(h_n)-\mathbb{E}[Z(h_n)]+\mathbb{E}[Z(h_n)]-\mathbb{E}_n[Z(h_n)])^2]
\notag\\
=&
\mathbb{E}_n[(Z(h_n)-\mathbb{E}[Z(h_n)]-\mathbb{E}_n[Z(h_n)-\mathbb{E}[Z(h_n)]])^2]
\notag\\
=&
\mathbb{E}_n[(Z(h_n)-\mathbb{E}[Z(h_n)])^2]-\mathbb{E}_n[Z(h_n)-\mathbb{E}[Z(h_n)]]^2.
\label{eq:lemma:variance_estimation3}
\end{align}
Combining the three auxiliary results above, we obtain
\begin{align*}
\frac{\mathbb{E}_n[(Z(h_n)-\mathbb{E}_n[Z(h_n)])^2]}{Var(Z(h_n))}-1
=&
\frac{(\mathbb{E}_n-\mathbb{E})[(Z(h_n)-\mathbb{E}[Z(h_n)])^2]-\mathbb{E}_n[Z(h_n)-\mathbb{E}[Z(h_n)]]^2}{Var(Z(h_n))}\\
=&
(\mathbb{E}_n-\mathbb{E})\left[\frac{(Z(h_n)-\mathbb{E}[Z(h_n)])^2}{Var(Z(h_n))}\right]-\mathbb{E}_n\left[\frac{Z(h_n)-\mathbb{E}[Z(h_n)]}{\sqrt{Var(Z(h_n))}}\right]^2\\
=& o_p(1),
\end{align*}
where the first equality is due to (\ref{eq:lemma:variance_estimation3}),
and the last equality is due to (\ref{eq:lemma:variance_estimation1}) and (\ref{eq:lemma:variance_estimation2}).
The equality claimed in the lemma follows.
\end{proof}

\subsection{Estimated Influence Function in Variance Estimation}\label{sec:lemma:estimated_influence_function_in_variance_estimation}

\begin{lemma}\label{lemma:estimated_influence_function_in_variance_estimation}
If Assumption \ref{a:psi}$'$ is satisfied, 
then 
$$
\left|\frac{\mathbb{E}_n\left[\widehat{Z}(h_n)^2\right]-\mathbb{E}_n\left[\widehat{Z}(h_n)\right]^2}{\mathbb{E}_n\left[Z(h_n)^2\right]-\mathbb{E}_n\left[Z(h_n)\right]^2}-1\right|
=
o_p(1)
$$
provided that $Var(Z(h_n))$ is bounded away from zero.
\end{lemma}
\begin{proof}
Since $0\leq A^{\kappa-1} \cdot \mathbbm{1}\{0<A<h_n\}\leq 1$ for each $\kappa \in \{0,...,k-2\}$ for sufficiently small $h_n$, 
we have 
\begin{align}
\|\widehat{c}(h_n)\|= 
\left\|\left(
\begin{array}{c}
\mathbb{E}_n\left[\mathbbm{1}\{0<A<h_n\}\right]/1!\\
\mathbb{E}_n\left[A \cdot \mathbbm{1}\{0<A<h_n\}\right]/2!\\
\vdots\\
\mathbb{E}_n\left[A^{k-2} \cdot \mathbbm{1}\{0<A<h_n\}\right]/(k-1)!\\
\end{array}
\right)
\right\|
=
O_p(1)
\label{eq:lemma:estimated_influence_function_in_variance_estimation1}
\end{align}
and 
\begin{align}
\widehat{c}(h_n)-c(h_n) = 
\left(
\begin{array}{c}
(\mathbb{E}_n-\mathbb{E})\left[\mathbbm{1}\{0<A<h_n\}\right]/1!\\
(\mathbb{E}_n-\mathbb{E})\left[A \cdot \mathbbm{1}\{0<A<h_n\}\right]/2!\\
\vdots\\
(\mathbb{E}_n-\mathbb{E})\left[A^{k-2} \cdot \mathbbm{1}\{0<A<h_n\}\right]/(k-1)!\\
\end{array}
\right)
=
o_p(n^{-1/2}).
\label{eq:lemma:estimated_influence_function_in_variance_estimation2}
\end{align}
Therefore, from (\ref{eq:z}) and (\ref{eq:Z(h_n)at}), we have
\begin{align}
\mathbb{E}_n\left[\left(\widehat{Z}(h_n)-Z(h_n)\right)^2\right]^{1/2}
&=
\mathbb{E}_n\left[\left(\widehat{c}(h_n)' \widehat{\psi}-c(h_n)' \psi\right)^2\right]^{1/2}
\notag\\
&=
\mathbb{E}_n\left[\left(\widehat{c}(h_n)' \left(\widehat{\psi}-\psi\right)+\left(\widehat{c}(h_n)-c(h_n)\right)'\psi\right)^2\right]^{1/2}
\notag\\
&\leq
\mathbb{E}_n\left[\left(\widehat{c}(h_n)' \left(\widehat{\psi}-\psi\right)\right)^2\right]^{1/2}+\mathbb{E}_n\left[\left(\left(\widehat{c}(h_n)-c(h_n)\right)'\psi\right)^2\right]^{1/2}
\notag\\
&\leq
\mathbb{E}_n\left[\norm{\widehat{c}(h_n)}^2\cdot\norm{\widehat{\psi}-\psi}^2\right]^{1/2}+\mathbb{E}_n\left[\norm{\widehat{c}(h_n)-c(h_n)}^2\norm{\psi}^2\right]^{1/2}
\notag\\
&=
\norm{\widehat{c}(h_n)}\cdot \mathbb{E}_n\left[\norm{\widehat{\psi}-\psi}^2\right]^{1/2}+\norm{\widehat{c}(h_n)-c(h_n)}\cdot \mathbb{E}_n\left[\norm{\psi}^2\right]^{1/2}
\notag\\
&=o_p(1),
\label{eq:lemma:estimated_influence_function_in_variance_estimation3}
\end{align}
where 
the inequalities are due to Minkowski's and Cauchy-Schwarz inequalities, and
the last equality is due to (\ref{eq:lemma:estimated_influence_function_in_variance_estimation1}), (\ref{eq:lemma:estimated_influence_function_in_variance_estimation2}), and Assumption \ref{a:psi}$'$.

Now, observe that
\begin{align*}
&
\mathbb{E}_n\left[\widehat{Z}(h_n)^2\right]-\mathbb{E}_n\left[\widehat{Z}(h_n)\right]^2
\\
&=
\mathbb{E}_n\left[\left(\widehat{Z}(h_n)-Z(h_n)+Z(h_n)\right)^2\right]-\mathbb{E}_n\left[\widehat{Z}(h_n)-Z(h_n)+Z(h_n)\right]^2\\
&=
\mathbb{E}_n\left[Z(h_n)^2\right]-\mathbb{E}_n\left[Z(h_n)\right]^2
+\mathbb{E}_n\left[\left(\widehat{Z}(h_n)-Z(h_n)\right)^2\right]\\&\quad+2\mathbb{E}_n\left[\left(\widehat{Z}(h_n)-Z(h_n)\right)Z(h_n)\right]\\
&\quad-\mathbb{E}_n\left[\widehat{Z}(h_n)-Z(h_n)\right]^2-2\mathbb{E}_n\left[\widehat{Z}(h_n)-Z(h_n)\right] \cdot \mathbb{E}_n\left[Z(h_n)\right]\\
&=
\mathbb{E}_n\left[Z(h_n)^2\right]-\mathbb{E}_n\left[Z(h_n)\right]^2
+\mathbb{E}_n\left[\left(\widehat{Z}(h_n)-Z(h_n)\right)^2\right]\\&\quad+2\mathbb{E}_n\left[\left(\widehat{Z}(h_n)-Z(h_n)\right)\cdot\left(Z(h_n)-\mathbb{E}\left[Z(h_n)\right]\right)\right]
\\&-\mathbb{E}_n\left[\widehat{Z}(h_n)-Z(h_n)\right]^2-2\mathbb{E}_n\left[\widehat{Z}(h_n)-Z(h_n)\right] \cdot \mathbb{E}_n\left[Z(h_n)-\mathbb{E}\left[Z(h_n)\right]\right].
\end{align*}
Therefore, we obtain
\begin{align*}
&
\left| \frac{\mathbb{E}_n\left[\widehat{Z}(h_n)^2\right]-\mathbb{E}_n\left[\widehat{Z}(h_n)\right]^2}{\mathbb{E}_n\left[Z(h_n)^2\right]-\mathbb{E}_n\left[Z(h_n)\right]^2}-1 \right|
\\
&\quad\leq
\frac{\mathbb{E}_n\left[\left(\widehat{Z}(h_n)-Z(h_n)\right)^2\right]+2\left|\mathbb{E}_n\left[\left(\widehat{Z}(h_n)-Z(h_n)\right) \cdot \left(Z(h_n)-\mathbb{E}\left[Z(h_n)\right]\right)\right]\right|}{\mathbb{E}_n\left[Z(h_n)^2\right]-\mathbb{E}_n\left[Z(h_n)\right]^2}\\
&\qquad+\frac{\mathbb{E}_n\left[\widehat{Z}(h_n)-Z(h_n)\right]^2+2\left|\mathbb{E}_n\left[\widehat{Z}(h_n)-Z(h_n)\right]\right|\cdot \left|\mathbb{E}_n\left[\left(Z(h_n)-\mathbb{E}\left[Z(h_n)\right]\right)\right]\right|}{\mathbb{E}_n\left[Z(h_n)^2\right]-\mathbb{E}_n\left[Z(h_n)\right]^2}\\
&\quad\leq
\frac{\mathbb{E}_n\left[\left(\widehat{Z}(h_n)-Z(h_n)\right)^2\right]+2\sqrt{\mathbb{E}_n\left[\left(\widehat{Z}(h_n)-Z(h_n)\right)^2\right] \cdot \mathbb{E}_n\left[\left(Z(h_n)-\mathbb{E}\left[Z(h_n)\right]\right)^2\right]}}{\mathbb{E}_n\left[Z(h_n)^2\right]-\mathbb{E}_n\left[Z(h_n)\right]^2}\\
&\qquad+\frac{\mathbb{E}_n\left[\widehat{Z}(h_n)-Z(h_n)\right]^2+2\sqrt{\mathbb{E}_n\left[\left(\widehat{Z}(h_n)-Z(h_n)\right)^2\right]\cdot \mathbb{E}_n\left[\left(Z(h_n)-\mathbb{E}\left[Z(h_n)\right]\right)^2\right]}}{\mathbb{E}_n\left[Z(h_n)^2\right]-\mathbb{E}_n\left[Z(h_n)\right]^2}\\
&\quad=
\frac{\mathbb{E}_n\left[\left(\widehat{Z}(h_n)-Z(h_n)\right)^2\right]}{\mathbb{E}_n\left[Z(h_n)^2\right]-\mathbb{E}_n\left[Z(h_n)\right]^2}
+\frac{\mathbb{E}_n\left[\widehat{Z}(h_n)-Z(h_n)\right]^2}{\mathbb{E}_n\left[Z(h_n)^2\right]-\mathbb{E}_n\left[Z(h_n)\right]^2}
+4\sqrt{\frac{\mathbb{E}_n\left[\left(\widehat{Z}(h_n)-Z(h_n)\right)^2\right]}{\mathbb{E}_n\left[Z(h_n)^2\right]-\mathbb{E}_n\left[Z(h_n)\right]^2}}
\\
&\quad=
o_p(1),
\end{align*}
where
the inequalities are due to triangle and Cauchy-Schwarz inequalities, and
the last equality is due to (\ref{eq:lemma:estimated_influence_function_in_variance_estimation3}), and the assumption that the denominator is bounded away from zero.
\end{proof}

\section{Proofs of the Extended Results}\label{sec:proofs_of_the_extended_results}
\subsection{Proof of Theorem \ref{lemma:ex:bias}: Bias Characterization}\label{sec:lemma:ex:bias}
\begin{proof}
For $h_n$ small enough, the Taylor expansion of $m$ around 0 yields
\begin{align}\label{eq:ex:taylor}
m(a) 
&= 
\sum_{\kappa=1}^{k-1} \frac{a^\kappa}{\kappa!} \cdot m^{(\kappa)}(0) + \frac{a^k}{(k-1)!} \cdot \int_0^1 (1-t)^{k-1} m^{(k)}(ta)dt
\end{align}
under Assumption \ref{a:ex:m} (i)--(ii).
We can write the bias $\theta(h_n)-\theta_0$ as
\begin{align*}
&\theta(h_n) - \theta_0
\\
&=
\mathbb{E}\left[\frac{B \cdot \left(S\left(\frac{A}{h_n}\right) - 1\right)}{A} \right]
\\
&=
\mathbb{E}\left[\frac{m(A) \cdot \left(S\left(\frac{A}{h_n}\right) - 1\right)}{A} \right]
\\
&=
\mathbb{E}\left[  \sum_{\kappa=1}^{k-1} \frac{A^{\kappa-1}}{\kappa!} \cdot m^{(\kappa)}(0) \left(S\left(\frac{A}{h_n}\right) - 1\right) 
+  \frac{A^{k-1}}{(k-1)!} \cdot \int_0^1 (1-t)^{k-1} m^{(k)}(tA)dt \left(S\left(\frac{A}{h_n}\right)-1\right)  \right]
\\
&=
\sum_{\kappa=1}^{k-1} \frac{\mathbb{E}\left[A^{\kappa-1} \left(S\left(\frac{A}{h_n}\right)-1\right)\right]}{\kappa !} \cdot m^{(\kappa)}(0)
\\
&\quad
+
\frac{\mathbb{E}\left[ A^{k-1} \cdot \int_0^1 (1-t)^{k-1} m^{(k)} (tA) dt \left(S\left(\frac{A}{h_n}\right)-1\right)\right]}{(k-1)!},
\end{align*}
where 
the first equality follows from the definitions of $\theta(h_n)$ and $\theta_0$ in (\ref{eq:ex:theta_h}),
the second equality follows from the law of iterated expectations, and
the third equality follows from (\ref{eq:ex:taylor}).
The second term in the last expression can be in turn rewritten as
$$
\frac{\mathbb{E}\left[A^{k-1} \cdot \int_0^1 (1-t)^{k-1} m^{(k)} (tA) dt \left(S\left(\frac{A}{h_n}\right)-1\right) \right]}{(k-1)!}
=
O\left(h_n^{k-1} \cdot \mathbb{E}[\mathbbm{1}\{0<A<h_n\}]\right)
=
o\left(n^{-1/2}\right),
$$
where 
the second equality follows from Assumptions \ref{a:ex:m} (i)--(ii) and \ref{a:ex:s} (i), and 
the last equality follows from Assumption \ref{a:ex:h} (i).
Therefore, the claimed equality follows.
\end{proof}

\subsection{Proof of Theorem \ref{theorem:ex:asymptotic_distribution}: Asymptotic Distribution}\label{sec:theorem:ex:asymptotic_distribution}
\begin{proof}
By Lyapunov's central limit theorem and (\ref{eq:ex:l4l2}), we have
$$
\frac{ (\mathbb{E}_n-\mathbb{E}) \left[Z(h_n)\right] }{\sqrt{Var(Z(h_n))/n}}
\stackrel{d}{\rightarrow} \mathcal{N}(0,1).
$$
Also, observe that
\begin{align}
\frac{ (\mathbb{E}_n-\mathbb{E}) \left[Z(h_n)\right] }{\sqrt{Var(Z(h_n))/n}}
&=
\frac{ \widehat\theta(h_n) - \theta_0 }{\sqrt{Var(Z(h_n))/n}}
+
\frac{ o_p\left(1\right) }{\sqrt{Var(Z(h_n))}}
=
\frac{ \widehat\theta(h_n) - \theta_0 }{\sqrt{Var(Z(h_n))/n}}
+
o_p(1)
\end{align}
by Theorem \ref{lemma:ex:bias} and Lemma \ref{lemma:ex:influence_function}
under Assumptions \ref{a:ex:m}, \ref{a:ex:s}, \ref{a:ex:h}, \ref{a:ex:psi}, and \ref{a:ex:gamma}.
Applying Slutsky's theorem yields the result claimed in the statement of the theorem.
\end{proof}

\section{Auxiliary Lemmas for the Extended Results}\label{sec:auxiliary_lemmas_for_the_extended_results}

\subsection{Bounds for $\omega_{1,n}$}

\begin{lemma}[Bounds for $\omega_{1,n}$]\label{lemma:ex:omega1}
If Assumptions \ref{a:ex:m} (iii) and (v), \ref{a:ex:s} (ii)--(iii), and \ref{a:ex:h} (ii) are satisfied, then
\begin{align}
\mathbb{E}\left[\norm{\left.\frac{\partial}{\partial\gamma} \omega_{1,n}(X,\gamma)\right|_{\gamma=\gamma_0}}^2\right]
&=
o(n)
\qquad\text{and}
\label{eq:ex:omega1_1}
\\
\mathbb{E}_n\left[\sup_{\gamma \in \Gamma} \norm{\frac{\partial^2}{\partial\gamma\partial\gamma'} \omega_{1,n}(X,\gamma)} \right]
&=
o_p\left(n^{1/2}\right).
\label{eq:ex:omega1_2}
\end{align}
\end{lemma}
\begin{proof}
First, note that
\begin{align*}
\frac{\partial}{\partial\gamma} \omega_{1,n}(X,\gamma)
&=
h_n^{-1}\frac{S\left(\frac{g_A(X,\gamma)}{h_n}\right)}{g_A(X,\gamma)/h}
\frac{\partial}{\partial\gamma} g_B(X,\gamma)
%\\&\quad
+h_n^{-2}  \cdot \left.\frac{d}{du}\frac{S(u)}{u} \right\vert_{u=g_A(X,\gamma)/h}\cdot g_B(X,\gamma)
\cdot \frac{\partial}{\partial\gamma} g_A(X,\gamma)
\end{align*}
under Assumptions \ref{a:ex:m} (iii) and \ref{a:ex:s} (ii). 
Thus, we obtain
\begin{align*}
\mathbb{E}\left[\norm{\left.\frac{\partial}{\partial\gamma} \omega_{1,n}(X,\gamma)\right|_{\gamma=\gamma_0}}^2\right]^{1/2}
&\leq
h_n^{-1} \cdot \sup_{u} \abs{\frac{S(u)}{u}} \cdot \mathbb{E}\left[\norm{\left.\frac{\partial}{\partial \gamma} g_B(X,\gamma)\right\vert_{\gamma=\gamma_0}}^2\right]^{1/2}
\\&\quad+
h_n^{-2} \cdot \sup_{u} \abs{\frac{d}{du} \frac{S(u)}{u}} \cdot \mathbb{E}\left[g_B(X,\gamma_0)^2 \cdot  \norm{\left.\frac{\partial}{\partial \gamma} g_A(X,\gamma)\right\vert_{\gamma=\gamma_0}}^2\right]^{1/2}
\\&=
o(\sqrt{n}),
\end{align*}
where 
the last equality follows from Assumptions \ref{a:ex:m} (v), \ref{a:ex:s} (iii), and \ref{a:ex:h} (ii).
This establishes (\ref{eq:ex:omega1_1}).

Second, note that
\begin{align*}
\frac{\partial^2}{\partial\gamma\partial\gamma'}\omega_{1,n}(X,\gamma)
&=
2h_n^{-2}
\frac{\partial}{\partial\gamma'} g_A(X,\gamma)\frac{\partial}{\partial\gamma} g_B(X,\gamma)\left.\frac{d}{du}\frac{S\left(u\right)}{u}\right|_{u=g_A(X,\gamma)/h}
\\&+
h_n^{-1}\frac{\partial^2}{\partial\gamma\partial\gamma'} g_B(X,\gamma)\frac{S\left(\frac{g_A(X,\gamma)}{h_n}\right)}{g_A(X,\gamma)/h}
\\&+
h_n^{-3} g_B(X,\gamma)\cdot \frac{\partial^2}{\partial\gamma\partial\gamma'} g_A(X,\gamma) \cdot \left.\frac{d}{du}\frac{S(u)}{u} \right\vert_{u=g_A(X,\gamma)/h}
\\&+
h_n^{-3} g_B(X,\gamma)\cdot \frac{\partial}{\partial\gamma} g_A(X,\gamma)\frac{\partial}{\partial\gamma'} g_A(X,\gamma) \cdot \left.\frac{d^2}{du^2}\frac{S(u)}{u} \right\vert_{u=g_A(X,\gamma)/h}
\end{align*}
under Assumption \ref{a:ex:s} (ii). 
Thus, we obtain
\begin{align*}
\mathbb{E}_n\left[\sup_{\gamma \in \Gamma} \norm{\frac{\partial^2}{\partial\gamma\partial\gamma'}\omega_{1,n}(X,\gamma)} \right]
&\leq
2h_n^{-2}
\mathbb{E}_n\left[\sup_{\gamma \in \Gamma}\norm{\frac{\partial}{\partial\gamma'} g_A(X,\gamma)}\norm{\frac{\partial}{\partial\gamma} g_B(X,\gamma)}\right]\sup_u \abs{ \frac{d}{du}\frac{S(u)}{u} } 
\\&+
h_n^{-1}\mathbb{E}_n\left[\sup_{\gamma \in \Gamma}\norm{
\frac{\partial^2}{\partial\gamma\partial\gamma'} g_B(X,\gamma)}\right]\sup_u\left|\frac{S\left(u\right)}{u}\right|
\\&+
h_n^{-3}\mathbb{E}_n\left[ \sup_{\gamma \in \Gamma}|g_B(X,\gamma)|\cdot \norm{\frac{\partial^2}{\partial\gamma\partial\gamma'} g_A(X,\gamma)} \right]\cdot \sup_u \abs{ \frac{d}{du}\frac{S(u)}{u} } 
\\&+
h_n^{-3} \mathbb{E}_n\left[\sup_{\gamma \in \Gamma}|g_B(X,\gamma)|\cdot \norm{\frac{\partial}{\partial\gamma} g_A(X,\gamma)}^2\right] \cdot \sup_u \abs{ \frac{d^2}{du^2}\frac{S(u)}{u} } 
\\
&=
o_p\left(n^{1/2}\right),
\end{align*}
where 
the last equality follows from Assumptions \ref{a:ex:m} (v), \ref{a:ex:s} (iii), and \ref{a:ex:h} (ii), as well as the weak law of large numbers.
This establishes (\ref{eq:ex:omega1_2}).
\end{proof}

\subsection{Bounds for $\omega_{2,\kappa,n}$}
\begin{lemma}[Bounds for $\omega_{2,\kappa,n}$]\label{lemma:ex:omega2}
If Assumptions \ref{a:ex:m} (iii)--(iv), \ref{a:ex:s} (i)--(iv), and \ref{a:ex:h} (ii) are satisfied, then
\begin{align}
\mathbb{E}\left[\norm{\left.\frac{\partial}{\partial\gamma} \omega_{2,\kappa,n}(X,\gamma)\right|_{\gamma=\gamma_0}}^2\right]
&=
o(n)
\qquad\text{and}
\label{eq:ex:omega2_1}
\\
\mathbb{E}_n\left[\sup_{\gamma \in \Gamma} \norm{\frac{\partial^2}{\partial\gamma\partial\gamma'} \omega_{2,\kappa,n}(X,\gamma)}\right]
&=
o_p\left(n^{1/2}\right).
\label{eq:ex:omega2_2}
\end{align}
\end{lemma}
\begin{proof}
First, note that
\begin{align*}
\frac{\partial}{\partial\gamma} \omega_{2,\kappa,n}(X,\gamma)
=
\frac{\partial}{\partial\gamma} g_A(X,\gamma) \cdot \left((\kappa-1) \cdot g_A(X,\gamma)^{\kappa-2} \cdot \left(S\left(\frac{g_A(X,\gamma)}{h_n}\right)-1\right) 
\right. \
\\
+
\left. 
h_n^{-1} \cdot g_A(X,\gamma)^{\kappa-1} \cdot S'\left(\frac{g_A(X,\gamma)}{h_n}\right) \right)
\end{align*}
under Assumptions \ref{a:ex:m} (iii) and \ref{a:ex:s} (ii).
Thus,
\begin{align*}
\mathbb{E}\left[\norm{\left.\frac{\partial}{\partial\gamma} \omega_{2,\kappa,n}(X,\gamma)\right|_{\gamma=\gamma_0}}^2\right]
&\lesssim
\sup_x \abs{g_A(x,\gamma_0)}^{2\kappa-4} \cdot \left(\sup_u\abs{S(u)}+1\right)^2 \cdot \mathbb{E}\left[\norm{\left.\frac{\partial}{\partial \gamma}g_A(X,\gamma)\right|_{\gamma=\gamma_0}}^2\right]
\\
&+
h_n^{-2} \cdot \sup_x \abs{g_A(x,\gamma_0)}^{2\kappa-2} \cdot \sup_u\abs{S'(u)}^2 \cdot \mathbb{E}\left[\norm{\left.\frac{\partial}{\partial \gamma}g_A(X,\gamma)\right|_{\gamma=\gamma_0}}^2\right]
\\
&=
o(n),
\end{align*}
where the last equality follows from Assumptions \ref{a:ex:m} (iv), \ref{a:ex:s} (iv), and \ref{a:ex:h} (ii).
This establishes (\ref{eq:ex:omega2_1}).

Second, note that
\begin{align*}
\frac{\partial^2}{\partial\gamma\partial\gamma'} \omega_{2,\kappa,n}(X,\gamma)
&=
\frac{\partial}{\partial\gamma} g_A(X,\gamma) 
\cdot 
\left[ (\kappa-1) \cdot (\kappa-2) \cdot g_A(X,\gamma)^{\kappa-3} \cdot \left(S\left(\frac{g_A(X,\gamma)}{h_n}\right)-1\right) \right. \
\\ &
\qquad +
2 \cdot h_n^{-1} \cdot (\kappa-1) \cdot g_A(X,\gamma)^{\kappa-2} \cdot S'\left(\frac{g_A(X,\gamma)}{h_n}\right) \
\\ &
\qquad +
\left. h_n^{-2} \cdot g_A(X,\gamma)^{\kappa-1} \cdot S''\left(\frac{g_A(X,\gamma)}{h_n}\right) \right]
\cdot
\frac{\partial}{\partial\gamma'} g_A(X,\gamma) 
\\ &
\quad +
\left[(\kappa-1) \cdot g_A(X,\gamma)^{\kappa-2} \cdot \left(S\left(\frac{g_A(X,\gamma)}{h_n}\right)-1\right) \right. \qquad
\\ &
\qquad +
\left. h_n^{-1} \cdot g_A(X,\gamma)^{\kappa-1} \cdot S'\left(\frac{g_A(X,\gamma)}{h_n}\right) \right]
\cdot \frac{\partial^2}{\partial\gamma\partial\gamma'} g_A(X,\gamma)
\end{align*}
under Assumption \ref{a:ex:s} (ii).
Thus, we obtain
\begin{align*}
&\mathbb{E}_n\left[\sup_{\gamma \in \Gamma} \norm{\frac{\partial^2}{\partial\gamma\partial\gamma'} \omega_{2,\kappa,n}(X,\gamma)}\right]
\\
&\quad\lesssim
\underset{\gamma\in\Gamma}{\sup} \ \underset{x}{\sup} \abs{g_A(x,\gamma)}^{\kappa-3} \cdot \left(\sup_u \abs{S(u)}+1\right) \cdot \mathbb{E}_n\left[\underset{\gamma\in\Gamma}{\sup}\norm{\frac{\partial}{\partial\gamma}g_A(X,\gamma)}^2\right]
\\
&\qquad\qquad+
h_n^{-1} \cdot
\underset{\gamma\in\Gamma}{\sup} \ \underset{x}{\sup} \abs{g_A(x,\gamma)}^{\kappa-2} \cdot \sup_u \abs{S'(u)} \cdot \mathbb{E}_n\left[\underset{\gamma\in\Gamma}{\sup}\norm{\frac{\partial}{\partial\gamma}g_A(X,\gamma)}^2\right]
\\
&\qquad\qquad+
h_n^{-2} \cdot
\underset{\gamma\in\Gamma}{\sup} \ \underset{x}{\sup} \abs{g_A(x,\gamma)}^{\kappa-1} \cdot \sup_u \abs{S''(u)} \cdot \mathbb{E}_n\left[\underset{\gamma\in\Gamma}{\sup}\norm{\frac{\partial}{\partial\gamma}g_A(X,\gamma)}^2\right]
\\
&\qquad\qquad+
\underset{\gamma\in\Gamma}{\sup} \ \underset{x}{\sup} \abs{g_A(x,\gamma)}^{\kappa-2} \cdot \left( \sup_u \abs{S(u)} + 1 \right) \cdot \mathbb{E}_n\left[\underset{\gamma\in\Gamma}{\sup}\norm{\frac{\partial^2}{\partial\gamma\partial\gamma'}g_A(X,\gamma)}\right]
\\
&\qquad\qquad+
h_n^{-1} \cdot
\underset{\gamma\in\Gamma}{\sup} \ \underset{x}{\sup} \abs{g_A(x,\gamma)}^{\kappa-1} \cdot \sup_u \abs{S'(u)} \cdot \mathbb{E}_n\left[\underset{\gamma\in\Gamma}{\sup}\norm{\frac{\partial^2}{\partial\gamma\partial\gamma'}g_A(X,\gamma)}\right]
\\&\quad=
o_p\left(n^{1/2}\right),
\end{align*}
where the last equality follows from Assumptions \ref{a:ex:m} (iv), \ref{a:ex:s} (i) and (iv), and \ref{a:ex:h} (ii), as well as the weak law of large numbers.
This establishes (\ref{eq:ex:omega2_2}).
\end{proof}

\subsection{Trimmed Mean Estimator}

\begin{lemma}\label{lemma:ex:trimmed_mean}
If Assumptions \ref{a:ex:m} (iii) and (v), \ref{a:ex:s} (ii)--(iii), \ref{a:ex:h} (ii), and \ref{a:ex:gamma} are satisfied, then
\begin{align*}
&\mathbb{E}_n\left[\frac{g_B(X,\widehat\gamma)}{g_A(X,\widehat\gamma)} \cdot S\left(\frac{g_A(X,\widehat\gamma)}{h_n}\right)\right]
-
\mathbb{E}_n\left[\frac{g_B(X,\gamma_0)}{g_A(X,\gamma_0)} \cdot S\left(\frac{g_A(X,\gamma_0)}{h_n}\right)\right]
\\
=&
\mathbb{E}_n \left[ \mathbb{E}\left[ \left.\frac{\partial}{\partial\gamma'}\omega_{1,n}(X,\gamma)\right|_{\gamma=\gamma_0} \right] \cdot \phi \right]
+ o_p\left(n^{-1/2}\right).
\end{align*}
\end{lemma}
\begin{proof}
We present four sets of auxiliary calculations.
First,
\begin{align}
\omega_{1,n}(x,\gamma) - \omega_{1,n}(x,\gamma_0)
&=
\left.\frac{\partial}{\partial\gamma'} \omega_{1,n}(X,\gamma)\right|_{\gamma=\gamma_0} \left(\gamma - \gamma_0\right)
+
R_{1,h}(x,\gamma,\gamma_0)
\qquad\text{where}
\label{eq:ex:taylor1}
\\
\abs{R_{1,h}(x,\gamma,\gamma_0)}
& \leq
\sup_{\gamma\in\Gamma} \norm{\frac{\partial^2}{\partial\gamma\partial\gamma'} \omega_{1,n}(x,\gamma)} \cdot \norm{\gamma-\gamma_0}^2
\label{eq:ex:taylor1_remainder}
\end{align}
by Taylor's theorem under Assumptions \ref{a:ex:m} (iii) and \ref{a:ex:s} (ii).
Second,
\begin{align}
\abs{\mathbb{E}_n\left[R_{1,h}(X,\widehat\gamma,\gamma_0)\right]}
\leq
\mathbb{E}_n\left[\sup_{\gamma\in\Gamma} \norm{\frac{\partial^2}{\partial\gamma\partial\gamma'} \omega_{1,n}(X,\gamma)} \right] \cdot \norm{\widehat\gamma - \gamma_0}^2
=
o_p\left(n^{-1/2}\right)
\label{eq:lemma:ex:trimmed_mean1}
\end{align}
where the inequality is due to (\ref{eq:ex:taylor1_remainder}) 
and 
the equality is due to Assumption \ref{a:ex:gamma} and (\ref{eq:ex:omega1_2}) of Lemma \ref{lemma:ex:omega1} under Assumptions \ref{a:ex:m} (iii) and (v), \ref{a:ex:s} (ii)--(iii), and \ref{a:ex:h} (ii). 
Third,
\begin{align}
\mathbb{E}\left[\left.\frac{\partial}{\partial\gamma'} \omega_{1,n}(X,\gamma)\right|_{\gamma=\gamma_0} \right] \cdot \left(\widehat\gamma - \gamma_0\right) 
=
\mathbb{E}\left[\left.\frac{\partial}{\partial\gamma}\omega_{1,n}(X,\gamma)\right|_{\gamma=\gamma_0}\right]'\cdot \mathbb{E}_n[\phi]
+
o_p\left(n^{-1/2}\right),
\label{eq:lemma:ex:trimmed_mean2}
\end{align}
where the equality is due to Assumption \ref{a:ex:gamma} and (\ref{eq:ex:omega1_1}) of Lemma \ref{lemma:ex:omega1} under Assumptions \ref{a:ex:m} (iii) and (v), \ref{a:ex:s} (ii)--(iii), and \ref{a:ex:h} (ii).
Fourth,
\begin{align}
\left(\mathbb{E}_n-\mathbb{E}\right)\left[\left.\frac{\partial}{\partial\gamma'} \omega_{1,n}(X,\gamma)\right|_{\gamma=\gamma_0} \right] \cdot \left(\widehat\gamma - \gamma_0\right)
=
o_p\left(n^{-1/2}\right),
\label{eq:lemma:ex:trimmed_mean3}
\end{align}
where the equality is due to Assumption \ref{a:ex:gamma} and (\ref{eq:ex:omega1_1}) of Lemma \ref{lemma:ex:omega1} under Assumptions \ref{a:ex:m} (iii) and (v), \ref{a:ex:s} (ii)--(iii), and \ref{a:ex:h} (ii).

We now obtain
\begin{align*}
&\mathbb{E}_n\left[\frac{g_B(X,\widehat\gamma)}{g_A(X,\widehat\gamma)} \cdot S\left(\frac{g_A(X,\widehat\gamma)}{h_n}\right)\right]
-
\mathbb{E}_n\left[\frac{g_B(X,\gamma_0)}{g_A(X,\gamma_0)} \cdot S\left(\frac{g_A(X,\gamma_0)}{h_n}\right)\right]
\\
=&
\mathbb{E}_n\left[\omega_{1,n}(X,\widehat\gamma) - \omega_{1,n}(X,\gamma_0)\right]
\\
=&
\mathbb{E}_n\left[\left.\frac{\partial}{\partial\gamma}\omega_{1,n}(X,\gamma)\right|_{\gamma=\gamma_0}\right]'\cdot \left(\widehat\gamma - \gamma_0\right) + \mathbb{E}_n\left[R_{1,h}(X,\widehat\gamma,\gamma_0)\right]
\\
=&
\left(\mathbb{E}_n-\mathbb{E}\right)\left[\left.\frac{\partial}{\partial\gamma}\omega_{1,n}(X,\gamma)\right|_{\gamma=\gamma_0}\right]' \cdot \left(\widehat\gamma - \gamma_0\right)
\\
&+
\mathbb{E}\left[\left.\frac{\partial}{\partial\gamma}\omega_{1,n}(X,\gamma)\right|_{\gamma=\gamma_0}\right]' \cdot \left(\widehat\gamma - \gamma_0\right) 
+ \mathbb{E}_n\left[R_{1,h}(X,\widehat\gamma,\gamma_0)\right]
\\
=&
\mathbb{E}\left[\left.\frac{\partial}{\partial\gamma}\omega_{1,n}(X,\gamma)\right|_{\gamma=\gamma_0}\right]' \cdot \mathbb{E}_n[\phi]
+
o_p\left(n^{-1/2}\right)
\\
=&
\mathbb{E}_n \left[ \mathbb{E}\left[\left.\frac{\partial}{\partial\gamma}\omega_{1,n}(X,\gamma)\right|_{\gamma=\gamma_0}\right]' \cdot \phi \right]
+ o_p\left(n^{-1/2}\right)
\end{align*}
where the second equality is due to (\ref{eq:ex:taylor1}), and
the second to last equality is due to (\ref{eq:lemma:ex:trimmed_mean1}), (\ref{eq:lemma:ex:trimmed_mean2}), and (\ref{eq:lemma:ex:trimmed_mean3}).
\end{proof}

\subsection{Bias Estimator}
\begin{lemma}\label{lemma:ex:bias_estimator}
If Assumptions \ref{a:ex:m} (iii)--(iv), \ref{a:ex:s} (i)--(iv), \ref{a:ex:h} (ii), and \ref{a:ex:gamma} are satisfied, then
\begin{align*}
&\mathbb{E}_n\left[g_A(X,\widehat\gamma)^{\kappa-1} \cdot \left( S\left(\frac{g_A(X,\widehat\gamma)}{h_n}\right) - 1\right)\right]
-
\mathbb{E}_n\left[g_A(X,\gamma_0)^{\kappa-1} \cdot \left( S\left(\frac{g_A(X,\gamma_0)}{h_n}\right) - 1 \right)\right]
\\
=&
\mathbb{E}_n \left[  \mathbb{E}\left[ \left.\frac{\partial}{\partial\gamma'} \omega_{2,\kappa,n}(X,\gamma)\right|_{\gamma=\gamma_0} \right] \cdot \phi \right]
+ o_p\left(n^{-1/2}\right)
\end{align*}
for each $\kappa \in \{1,...,k-1\}$.
\end{lemma}
\begin{proof}
We present four sets of auxiliary calculations.
First,
\begin{align}
\omega_{2,\kappa,n}(x,\gamma) - \omega_{2,\kappa,n}(x,\gamma_0)
&=
\left.\frac{\partial}{\partial\gamma'} \omega_{2,\kappa,n}(X,\gamma)\right|_{\gamma=\gamma_0} \left(\gamma - \gamma_0\right)
+
R_{2,\kappa,h}(x,\gamma,\gamma_0)
\qquad\text{where}
\label{eq:ex:taylor2}
\\
\abs{R_{2,\kappa,h}(x,\gamma,\gamma_0)}
& \leq
\sup_{\gamma\in\Gamma} \norm{\frac{\partial^2}{\partial\gamma\partial\gamma'} \omega_{2,\kappa,n}(x,\gamma)} \cdot \norm{\gamma-\gamma_0}^2
\label{eq:ex:taylor2_remainder}
\end{align}
by Taylor's theorem under Assumptions \ref{a:ex:m} (iii) and \ref{a:ex:s} (ii).
Second,
\begin{align}
\abs{\mathbb{E}_n\left[R_{2,\kappa,h}(X,\widehat\gamma,\gamma_0)\right]}
\leq
\mathbb{E}_n\left[ \sup_{\gamma\in\Gamma} \norm{\frac{\partial^2}{\partial\gamma\partial\gamma'} \omega_{2,\kappa,n}(X,\gamma)} \right] \cdot \norm{\widehat\gamma - \gamma_0}^2
=
o_p\left(n^{-1/2}\right),
\label{eq:lemma:ex:bias_estimator1}
\end{align}
where the inequality is due to (\ref{eq:ex:taylor2_remainder}) 
and 
the equality is due to Assumption \ref{a:ex:gamma} and (\ref{eq:ex:omega2_2}) of Lemma \ref{lemma:ex:omega2} under Assumptions \ref{a:ex:m} (iii)--(iv), \ref{a:ex:s} (i)--(iv), and \ref{a:ex:h} (ii).
Third,
\begin{align}
\mathbb{E}\left[ \left.\frac{\partial}{\partial\gamma'} \omega_{2,\kappa,n}(X,\gamma)\right|_{\gamma=\gamma_0} \right] \cdot \left(\widehat\gamma - \gamma_0\right) 
=
\mathbb{E}\left[ \left.\frac{\partial}{\partial\gamma'} \omega_{2,\kappa,n}(X,\gamma)\right|_{\gamma=\gamma_0} \right] \cdot \mathbb{E}_n[\phi]
+
o_p\left(n^{-1/2}\right),
\label{eq:lemma:ex:bias_estimator2}
\end{align}
where the equality is due to Assumption \ref{a:ex:gamma} and (\ref{eq:ex:omega2_1}) of Lemma \ref{lemma:ex:omega2} under Assumptions \ref{a:ex:m} (iii)--(iv), \ref{a:ex:s} (i)--(iv), and \ref{a:ex:h} (ii).
Fourth,
\begin{align}
\left(\mathbb{E}_n-\mathbb{E}\right)\left[ \left.\frac{\partial}{\partial\gamma'} \omega_{2,\kappa,n}(X,\gamma)\right|_{\gamma=\gamma_0} \right] \cdot \left(\widehat\gamma - \gamma_0\right)
=
o_p\left(n^{-1/2}\right),
\label{eq:lemma:ex:bias_estimator3}
\end{align}
where the equality is due to Assumption \ref{a:ex:gamma} and (\ref{eq:ex:omega2_1}) of Lemma \ref{lemma:ex:omega2} under Assumptions \ref{a:ex:m} (iii)--(iv), \ref{a:ex:s} (i)--(iii), and \ref{a:ex:h} (ii).

We now obtain
\begin{align*}
&\mathbb{E}_n\left[g_A(X,\widehat\gamma)^{\kappa-1} \cdot \left( S\left(\frac{g_A(X,\widehat\gamma)}{h_n}\right) - 1\right)\right]
-
\mathbb{E}_n\left[g_A(X,\gamma_0)^{\kappa-1} \cdot \left( S\left(\frac{g_A(X,\gamma_0)}{h_n}\right) - 1 \right)\right]
\\
=&
\mathbb{E}_n\left[ \omega_{2,\kappa,n}(X,\widehat\gamma) -  \omega_{2,\kappa,n}(X,\gamma_0)\right]
\\
=&
\mathbb{E}_n\left[ \left.\frac{\partial}{\partial\gamma'} \omega_{2,\kappa,n}(X,\gamma)\right|_{\gamma=\gamma_0} \right] \cdot \left(\widehat\gamma - \gamma_0\right) + \mathbb{E}_n\left[ R_{2,\kappa,h}(X,\widehat\gamma,\gamma_0)\right]
\\
=&
\left(\mathbb{E}_n-\mathbb{E}\right)\left[ \left.\frac{\partial}{\partial\gamma'} \omega_{2,\kappa,n}(X,\gamma)\right|_{\gamma=\gamma_0} \right] \cdot \left(\widehat\gamma - \gamma_0\right)
\\
&+
\mathbb{E}\left[ \left.\frac{\partial}{\partial\gamma'} \omega_{2,\kappa,n}(X,\gamma)\right|_{\gamma=\gamma_0} \right] \cdot \left(\widehat\gamma - \gamma_0\right) 
+ \mathbb{E}_n\left[ R_{2,\kappa,h}(X,\widehat\gamma,\gamma_0)\right]
\\
=&
\mathbb{E}\left[ \left.\frac{\partial}{\partial\gamma'} \omega_{2,\kappa,n}(X,\gamma)\right|_{\gamma=\gamma_0} \right] \cdot \mathbb{E}_n[\phi]
+
o_p\left(n^{-1/2}\right)
\\
=&
\mathbb{E}_n \left[\mathbb{E}\left[ \left.\frac{\partial}{\partial\gamma'} \omega_{2,\kappa,n}(X,\gamma)\right|_{\gamma=\gamma_0} \right] \cdot \phi \right]
+ o_p\left(n^{-1/2}\right)
\end{align*}
where the second equality is due to (\ref{eq:ex:taylor2}), and
the second to last equality is due to (\ref{eq:lemma:ex:bias_estimator1}), (\ref{eq:lemma:ex:bias_estimator2}), and (\ref{eq:lemma:ex:bias_estimator3}).
\end{proof}

\subsection{Sieve Estimator}
\begin{lemma}\label{lemma:ex:sieve_estimator}
If Assumptions \ref{a:ex:psi} and \ref{a:ex:gamma} are satisfied, then
\begin{align*}
\widehat m^{(\kappa)}(0;\widehat\gamma) - m^{(\kappa)}(0)
=
\mathbb{E}_n\left[ \psi_\kappa + \left.\frac{\partial}{\partial\gamma'} m^{(\kappa)}(0;\gamma)\right|_{\gamma=\gamma_0} \cdot \phi \right] 
+ o_p\left(n^{-1/2} h_n^{1-\kappa}\right)
\end{align*}
for each $\kappa \in \{1,...,k-1\}$.
\end{lemma}
\begin{proof}
We present a couple of auxiliary calculations.
First,
\begin{align}
\widehat m^{(\kappa)}(0;\gamma) - \widehat m^{(\kappa)}(0;\gamma_0)
&=
\left.\frac{\partial}{\partial\gamma'} \widehat m^{(\kappa)}(0;\gamma)\right|_{\gamma=\gamma_0}(\gamma - \gamma_0) + R_{3,\kappa}(\gamma,\gamma_0)
\qquad\text{where}
\label{eq:ex:taylor3}
\\
\abs{R_{3,\kappa}(\gamma,\gamma_0)}
&\leq
\sup_{\gamma\in\Gamma} \norm{\frac{\partial^2}{\partial\gamma\partial\gamma'} \widehat m^{(\kappa)}(0;\gamma)}
\cdot \norm{\gamma-\gamma_0}^2
\label{eq:ex:taylor3_remainder}
\end{align}
by Taylor's theorem under Assumption \ref{a:ex:psi} (iii).
Second,
\begin{align}
\abs{R_{3,\kappa}(\gamma,\gamma_0)}
&\leq
\sup_{\gamma\in\Gamma} \norm{\frac{\partial^2}{\partial\gamma\partial\gamma'} \widehat m^{(\kappa)}(0;\gamma)}
\cdot \norm{\gamma-\gamma_0}^2
=
o_p\left(n^{-1/2}\right),
\label{eq:lemma:ex:sieve_estimator1}
\end{align}
where the equality is due to Assumptions \ref{a:ex:psi} (iii) and \ref{a:ex:gamma}.

We now obtain
\begin{align*}
\widehat m^{(\kappa)}(0;\widehat\gamma) - m^{(\kappa)}(0)
&=
\widehat m^{(\kappa)}(0;\gamma_0) - m^{(\kappa)}(0) +
\widehat m^{(\kappa)}(0;\widehat\gamma) - \widehat m^{(\kappa)}(0;\gamma_0)
\\
&= \left(\mathbb{E}_n-\mathbb{E}\right)\left[\psi\right] 
+
\left.\frac{\partial}{\partial\gamma'} \widehat m^{(\kappa)}(0;\gamma)\right|_{\gamma=\gamma_0}(\widehat\gamma - \gamma_0)
+
o_p\left(n^{-1/2} h_n^{1-\kappa}\right)
\\
&= \left(\mathbb{E}_n-\mathbb{E}\right)\left[\psi\right] 
+
\left.\frac{\partial}{\partial\gamma'} m^{(\kappa)}(0;\gamma)\right|_{\gamma=\gamma_0} \cdot \left(\mathbb{E}_n-\mathbb{E}\right) \left[\phi\right]
+
o_p\left(n^{-1/2} h_n^{1-\kappa}\right)
\end{align*}
where 
the second equality uses Assumption \ref{a:ex:psi} (i)--(ii) and (\ref{eq:ex:taylor3})--(\ref{eq:lemma:ex:sieve_estimator1}),
and the third equality is due to Assumptions \ref{a:ex:psi} (iv) and \ref{a:ex:gamma}.
\end{proof}

\subsection{Linear Representation}\label{sec:ex:linear_representation}

\begin{lemma}[Linear Representation]\label{lemma:ex:influence_function}
If Assumptions \ref{a:ex:m} (iii)--(v), \ref{a:ex:s},  \ref{a:ex:h} (ii), \ref{a:ex:psi}, and \ref{a:ex:gamma} are satisfied, then
$$
\widehat\theta(h_n)-\theta(h_n)-\lambda(h_n)
=
(\mathbb{E}_n-\mathbb{E})[Z(h_n)]+o_p(n^{-1/2}).
$$
\end{lemma}
\begin{proof}
Lemma \ref{lemma:ex:trimmed_mean} shows that
\begin{align}
\mathbb{E}_n\left[\omega_{1,n}\left(X,\widehat\gamma\right)\right]
-
\mathbb{E}_n\left[ \omega_{1,n}\left(X,\gamma_0\right)\right]
-
\mathbb{E}_n \left[ \mathbb{E}\left[\left. \frac{\partial}{\partial\gamma'}\omega_{1,n}(X,\gamma)\right|_{\gamma=\gamma_0} \right] \cdot \phi \right]
=
o_p\left(n^{-1/2}\right)
\label{eq:lemma:ex:influence_function1}
\end{align}
under Assumptions \ref{a:ex:m} (iii) and (v), \ref{a:ex:s} (ii)--(iii), \ref{a:ex:h} (ii), and \ref{a:ex:gamma}.
Lemma \ref{lemma:ex:bias_estimator} shows that
\begin{align}
&\mathbb{E}_n\left[\omega_{2,\kappa,n}\left(X,\widehat\gamma\right)\right]
-
\mathbb{E}_n\left[\omega_{2,\kappa,n}\left(X,\gamma_0\right)\right]
-
\mathbb{E}_n \left[  \mathbb{E}\left[ \left.\frac{\partial}{\partial\gamma'} \omega_{2,\kappa,n}(X,\gamma)\right|_{\gamma=\gamma_0} \right] \cdot \phi \right]
= 
o_p\left(n^{-1/2}\right)
\label{eq:lemma:ex:influence_function2}
\end{align}
for each $\kappa \in \{1,...,k-1\}$
under Assumptions \ref{a:ex:m} (iii)--(iv), \ref{a:ex:s} (i)--(iv), \ref{a:ex:h} (ii), and \ref{a:ex:gamma}.
Lemma \ref{lemma:ex:sieve_estimator} shows that
\begin{align}
\widehat m^{(\kappa)}(0;\widehat\gamma) - m^{(\kappa)}(0)
-
\mathbb{E}_n\left[ \psi_\kappa \right] 
- 
\left.\frac{\partial}{\partial\gamma'} m^{(\kappa)}(0;\gamma)\right|_{\gamma=\gamma_0} \cdot \mathbb{E}_n \left[ \phi \right] 
= 
o_p\left(n^{-1/2} h_n^{1-\kappa}\right)
\label{eq:lemma:ex:influence_function3}
\end{align}
for each $\kappa \in \{1,...,k-1\}$
under Assumptions \ref{a:ex:psi} and \ref{a:ex:gamma}.
Furthermore,
\begin{align}
\mathbb{E}\left[\omega_{2,\kappa,n}\left(X,\gamma_0\right)\right] = O\left(h_n^{\kappa-1}\right)
\label{eq:lemma:ex:influence_function4}
\end{align}
by Assumption \ref{a:ex:s} (i).
It follows from (\ref{eq:lemma:ex:influence_function1}), (\ref{eq:lemma:ex:influence_function2}), (\ref{eq:lemma:ex:influence_function3}), and (\ref{eq:lemma:ex:influence_function4}) that
\begin{align*}
&
\widehat\theta(h_n)-\lambda(h_n)-\mathbb{E}_n[Z(h_n)]\\
=&
\mathbb{E}_n\left[\omega_{1,n}(X,\widehat\gamma)\right]-\mathbb{E}_n\left[\omega_{1,n}(X,\gamma_0)\right]\\&-\mathbb{E}_n \left[ \mathbb{E}\left[\left. \frac{\partial}{\partial\gamma'} \omega_{1,n}(X,\gamma)\right|_{\gamma=\gamma_0} \right] \cdot \phi \right]\\
&+
\sum_{\kappa=1}^{k-1} \frac{\mathbb{E}_n\left[\omega_{2,\kappa,n}(X,\widehat\gamma)-\omega_{2,\kappa,n}(X,\gamma_0)-\mathbb{E}\left[\left.\frac{\partial}{\partial\gamma'}\omega_{2,\kappa,n}(X,\gamma)\right|_{\gamma=\gamma_0}\right]\phi\right]}{\kappa !} \cdot  m^{(\kappa)}(0)
\\
&+
\sum_{\kappa=1}^{k-1} \frac{\mathbb{E}\left[\omega_{2,\kappa,n}(X,\gamma_0)\right]}{\kappa !} \cdot (\widehat m^{(\kappa)}(0;\widehat\gamma)-m^{(\kappa)}(0)-\mathbb{E}_n[\psi_\kappa]-\left.\frac{\partial}{\partial\gamma'} m^{(\kappa)}(0;\gamma)\right|_{\gamma=\gamma_0} \cdot \mathbb{E}_n[\phi])
\\
&+
\sum_{\kappa=1}^{k-1} \frac{\mathbb{E}_n\left[\omega_{2,\kappa,n}(X,\widehat\gamma)-\omega_{2,\kappa,n}(X,\gamma_0)\right]+(\mathbb{E}_n-\mathbb{E})\left[\omega_{2,\kappa,n}(X,\gamma_0)\right]}{\kappa !} \cdot (\widehat m^{(\kappa)}(0;\widehat\gamma)-m^{(\kappa)}(0))
\\
=& 
o_p\left(n^{-1/2}\right)
\end{align*}
under
Assumptions \ref{a:ex:m} (iii)--(v), \ref{a:ex:s} (i)--(iii),  \ref{a:ex:h} (ii), \ref{a:ex:psi}, and \ref{a:ex:gamma}.
Therefore, we obtain
$$
\widehat\theta(h_n)-\theta(h_n)-\lambda(h_n)
=
\widehat\theta(h_n)-\mathbb{E}\left[Z(h_n)\right]-\lambda(h_n)
=
(\mathbb{E}_n-\mathbb{E})[Z(h_n)]+o_p(n^{-1/2})
$$
as claimed.
\end{proof}

\bibliography{mybib}

\end{document}